\numberwithin{equation}{section}
\newtheorem{Theorem}{Theorem}[section]
\newtheorem{Lemma}[Theorem]{Lemma}
\newtheorem{Proposition}[Theorem]{Proposition}
 { \theoremstyle{definition}
\newtheorem{Definition}[Theorem]{Definition}}
\begin{document}


\newcommand{\arXivNumber}{1702.01227}

\renewcommand{\PaperNumber}{035}

\FirstPageHeading

\ShortArticleName{Liouville Correspondences between Integrable Hierarchies}

\ArticleName{Liouville Correspondences\\ between Integrable Hierarchies}

\Author{Jing KANG~$^{\dag}$, Xiaochuan LIU~$^{\dag}$, Peter J.~OLVER~$^{\ddag}$ and Changzheng QU~$^{\S}$}

\AuthorNameForHeading{J.~Kang, X.C.~Liu, P.J.~Olver and C.Z.~Qu}

\Address{$^{\dag}$~Center for Nonlinear Studies and School of Mathematics, Northwest University,\\
\hphantom{$^{\dag}$}~Xi'an 710069, P.R.~China}
\EmailD{\href{mailto:jingkang@nwu.edu.cn}{jingkang@nwu.edu.cn}, \href{mailto:liuxc@nwu.edu.cn}{liuxc@nwu.edu.cn}}

\Address{$^{\ddag}$~School of Mathematics, University of Minnesota, Minneapolis, MN 55455, USA}
\EmailD{\href{mailto:olver@umn.edu}{olver@umn.edu}}
\URLaddressD{\url{http://www.math.umn.edu/~olver/}}

\Address{$^{\S}$~Center for Nonlinear Studies and Department of Mathematics, Ningbo University,\\
\hphantom{$^{\S}$}~Ningbo 315211, P.R.~China}
\EmailD{\href{mailto:quchangzheng@nbu.edu.cn}{quchangzheng@nbu.edu.cn}}

\ArticleDates{Received February 07, 2017, in f\/inal form May 22, 2017; Published online May 28, 2017}

\Abstract{In this paper, we study explicit correspondences between the integrable Novi\-kov and Sawada--Kotera hierarchies, and between the Degasperis--Procesi and Kaup--Kuper\-shmidt hierarchies. We show how a pair of Liouville transformations between the isospectral problems of the Novikov and Sawada--Kotera equations, and the isospectral problems of the Degasperis--Procesi and Kaup--Kupershmidt equations relate the corresponding hierarchies, in both positive and negative directions, as well as their associated conservation laws. Combining these results with the Miura transformation relating the Sawada--Kotera and Kaup--Kupershmidt equations, we further construct an implicit relationship which associates the Novikov and Degasperis--Procesi equations.}

\Keywords{Liouville transformation; Miura transformation; bi-Hamiltonian structure; conservation law; Novikov equation; Degasperis--Procesi equation; Sawada--Kotera equation; Kaup--Kupershmidt equation}

\Classification{37K05; 37K10}

\section{Introduction}

This paper is devoted to studying Liouville correspondences between certain integrable hierarchies and their interrelationships. One pair consists of the Novikov and Sawada--Kotera (SK) hierarchies, which are initiated respectively from the Novikov equation \cite{hw2, nov}
\begin{gather}\label{novikov}
m_t=3uu_xm+u^2 m_x, \qquad m=u-u_{xx},
\end{gather}
and the SK equation \cite{cdg,sk}
\begin{gather}\label{SK}
Q_{\tau}+Q_{yyyyy}+5QQ_{yyy}+5Q_yQ_{yy}+5 Q^2 Q_y=0.
\end{gather}
The second pair of integrable hierarchies is initiated by the Degasperis--Procesi (DP) equa\-tion~\cite{dhh, dp}
\begin{gather}\label{DP}
m_t=3u_xm+u m_x, \qquad m=u-u_{xx},
\end{gather}
and the Kaup--Kupershmidt (KK) equation \cite{kau, kup}
\begin{gather}\label{KK}
P_{\tau}+P_{yyyyy}+10 PP_{yyy}+25 P_yP_{yy}+80P^2 P_y=0.
\end{gather}
Furthermore, combining these explicit correspondences with the known Miura transformation linking the SK and KK equations \cite{fg}, we derive a nontrivial underlying correspondence between the Novikov equation \eqref{novikov} and the DP equation~\eqref{DP}.

The DP equation \eqref{DP} was derived by Degasperis and Procesi \cite{dp} as a result of the asymptotic integrability method for classifying (a class of) third-order nonlinear dispersive evolution equations. It was subsequently shown that the DP equation is integrable with a Lax pair invol\-ving a $3\times 3$ isospectral problem as well as a bi-Hamiltonian structure \cite{dhh}. Furthermore, the Lax representation can be written in the form of a matrix spectral problem of Zakharov--Shabat (ZS) form. The associated inverse scattering transform and the dressing method can be applied to construct smooth soliton solutions for the DP equation \cite{ci, cil, mik}. Physically, the DP equation provides a model describing the propagation of shallow water waves \cite{cl, dgh}. It admits peaked solitons (peakons) \cite{dhh} as well as multi-peakon solutions which recover the soliton interaction dynamics \cite{dhh, ls} and exhibits a particular shock peakon structure \cite{lun}. The integrability, well-posedness, wave breaking phenomenon and stability of peakons for the DP equation have been studied extensively; see \cite{ck2, hw1, ll, ly} and references therein. The Novikov equation \eqref{novikov} with cubic nonlinear terms was discovered as a consequence of the symmetry classif\/ication of nonlocal partial dif\/ferential equations involving both cubic and quadratic nonlinearities \cite{nov}. A Lax pair formulation based on a $3\times 3$ isospectral problem and the associated bi-Hamiltonian structure were established in \cite{hw2}. It was also shown that the Novikov equation possesses peaked solitons and multi-peakon solutions \cite{hls, hw2}. The well-posedness, wave breaking and blow-up phenomena, as well as stability of peakons for the Novikov equation have been studied in a number of papers, including \cite{hh, llq1, tig}.

The SK equation \eqref{SK} and the KK equation \eqref{KK} are two typical f\/ifth-order integrable equations \cite{cdg, kau, kup, sk}. Their integrability can be verif\/ied from several dif\/ferent standpoints: for instance, they both possess $3\times 3$ isospectral problems and bi-Hamiltonian structures, enjoy the Painlev\'{e} property, admit multi-soliton solutions, etc. Like the DP equation, the KK equation also admits a Lax operator of ZS form, which can be applied to construct solitons of the KK hie\-rarchy, \cite{bcg, ger, mik, val}. In Section~\ref{section2}, we will see that both the Novikov and SK equations support Lax operators of ZS form. Geometrically, the SK equation arises naturally from an integrable planar curve f\/low in af\/f\/ine geometry \cite{cq,olv3}, while the KK equation comes from an integrable planar curve f\/low in projective geometry~\cite{lq,mus}. Interestingly, both equations are related to the so-called Fordy--Gibbons--Jimbo--Miwa equation via certain Miura transformations~\cite{fg}.

Like the Camassa--Holm (CH) equation \cite{ch, chh, ff} and the modif\/ied Camassa--Holm (mCH) equation \cite{fuc, or}, the Novikov and DP equations exhibit nonlinear dispersion. Recent years have seen a proliferation of papers for the CH and mCH equations studying their integrable properties, geometric formulations, well-posedness for solutions of the Cauchy problem, and the stability of peaked solitary waves solutions; see \cite{achm, bss, cht, ce-1, cgi, cl,cm,cs,gloq, hi, joh, kou, lo, llq2, mck-1}. Interestingly, these equations support a notable variety of non-smooth soliton-like solutions and can model the phenomenon of wave breaking.

The method of tri-Hamiltonian duality was developed \cite{for, fuc, or} to systematically derive additional nonlinear dispersive integrable systems. This method begins with the basic observation that most classical integrable soliton equations that possess a bi-Hamiltonian structure, actually support a compatible triple of Hamiltonian structures using a particular scaling argument, leading to a systematic algorithm \cite{or}, to construct their dual nonlinear dispersive integrable systems. In particular, the CH and mCH equations appear as the duals to, respectively, the KdV and mKdV equations.

In view of this duality, it is of interest to establish relationships between the full integrable hierarchy and the corresponding dual integrable hierarchy. In~\cite{len1} and~\cite{mck}, the correspondence between the CH hierarchy and the KdV hierarchy is established through a Liouville transformation, the key point being that the two Hamiltonian operators of the CH hierarchy can be obtained directly from those of the KdV hierarchy. This argument does not work for the mCH and mKdV hierarchies, whose correspondence through a Liouville transformation is based upon a relationship between the corresponding recursion operators and some subtle identities relating the respective Hamiltonian operators~\cite{kloq}. It demonstrates that the positive f\/low and negative f\/low of the mCH hierarchy are generated by the negative f\/low and positive f\/low of the mKdV hierarchy, respectively. The correspondences between the Hamiltonian conservation laws for the CH (mCH) hierarchy and KdV (mKdV) hierarchy have also been derived~\cite{kloq, len1}.

The goal of this paper is to study the similar Liouville correspondences between the f\/lows and Hamiltonian conservation laws in both the Novikov and SK hierarchies, as well as the DP and KK hierarchies. Furthermore, an underlying correspondence between the Novikov equation~\eqref{novikov} and the DP equation~\eqref{DP} is also constructed. Our motivations are three-fold. First, it was shown that the Novikov equation is related to the f\/irst negative f\/low of the SK hierarchy~\cite{hw2}, while the DP equation is related to the f\/irst negative f\/low of the KK hierarchy~\cite{dhh}. Second, the CH and mCH hierarchies are related, respectively, to the KdV and mKdV hierarchies through Liouville transformations relating their isospectral problems. Third, the SK equation is related to the KK equation by a Miura transformation~\cite{fg}, and there exists a transformation found in~\cite{kloq} which maps the mCH equation to the CH equation.

However, in the Novikov-SK and DP-KK settings, due to the non-standard bi-Hamiltonian structures \cite{fo}, we neither have the dual relationship, as in both CH-KdV and mCH-mKdV settings, nor the subtle relationship between their Hamiltonian operators, as in the CH-KdV setting \cite{len1, mck}, nor between their recursion operators, as in the mCH-mKdV setting \cite{kloq}. On the other hand, given that the Novikov and DP equations are both third-order nonlinear equations, while the SK and KK equations are of f\/ifth order, it seems dif\/f\/icult to establish any relationship between the Novikov or DP equations with the f\/lows in the negative direction of the SK hierarchy or the KK hierarchy. Nevertheless, based on the Liouville transformation between the isospectral problems of the Novikov and SK hierarchies, as well as the DP and KK hierarchies, we are able to establish certain nontrivial identities which reveal the underlying relationship between the recursion operator of the Novikov (DP) hierarchy and the adjoint operator of the recursion operator for the SK (KK) hierarchy. Using these operator identities, we are able to prescribe a~Liouville correspondence between the f\/lows involved in the Novikov-SK hierarchies and DP-KK hierarchies.

It is worth noting that, in the Novikov-SK setting, in order to establish the explicit relationship between the f\/lows in the positive Novikov hierarchy and the f\/lows in the negative SK hierarchy, we make use of a novel factorization of the recursion operator of the SK equation to identify the equations transformed from the positive f\/lows in the Novikov hierarchy as the corresponding negative f\/lows in the SK hierarchy exactly. The factorization is based on the following nontrivial operator identity for the recursion operator of the SK equation \cite{cq}:
\begin{gather*}
\bar{\mathcal{R}}=-\big( \partial_y^3+2 Q \partial_y+2 \partial_y Q\big) \big(2\partial_y^3+2 \partial_y^2 Q \partial_y^{-1}+2 \partial_y^{-1} Q\partial_y^2+Q^2 \partial_y^{-1}+\partial_y^{-1} Q^2\big)\\
\hphantom{\bar{\mathcal{R}}}{} =-2\big( \partial_y^4+5Q \partial_y^2+4Q_y \partial_y+Q_{yy}+4Q^2+2Q_y \partial_y^{-1}Q\big) \big( \partial_y^2+Q+Q_y\partial_y^{-1}\big).
\end{gather*}

Since conservation laws play a key role in the study of well-posedness of solutions, stability of solitons, and wave-breaking phenomena, another topic of this paper is to establish relationships between the Hamiltonian conservation laws for the Novikov and SK hierarchies, and the DP and KK hierarchies. These rely on some new identities for Hamiltonian conservation laws related by the Liouville transformations and certain known results.

This section is concluded by outlining the rest of the paper. In Section~\ref{section2}, we f\/irst present the Liouville transformation relating the isospectral problems of the Novikov hierarchy and the SK hierarchy in Section~\ref{section2.1}. Next in Section~\ref{section2.2}, several operator identities are combined with the Liouville transformation to establish the one-to-one correspondence between the f\/lows in the Novikov and SK hierarchies. It is proved in Section~\ref{section2.3} that the Liouville transformation establishes the correspondence between the series of Hamiltonian conservation laws of the Novikov equation and the SK equation. The Liouville correspondence between the DP hierarchy and the KK hierarchy, and the relationship of their conservation laws will be studied in Section~\ref{section3}. In Section~\ref{section4}, we obtain a nontrivial relationship between the Novikov equation \eqref{novikov} and the DP equation \eqref{DP} by exploiting the Miura transformations relating the SK equation~\eqref{SK} and the KK equation~\eqref{KK} and the results in previous Sections.

\section{The correspondence between the Novikov and SK hierarchies}\label{section2}

\subsection[A Liouville transformation between the isospectral problems of the Novikov and SK hierarchies]{A Liouville transformation between the isospectral problems\\ of the Novikov and SK hierarchies}\label{section2.1}

In this section, we f\/irst obtain the Liouville transformation relating the Novikov and SK hierarchies. In accordance with standard terminology, a Liouville transformation is def\/ined by a~change of variables which maps one spectral problem to another~\cite{mil,folv}. If the transformation does not af\/fect the independent variables, it is referred to as a Miura transformation.

The Novikov equation
\begin{gather}\label{nov}
m_t=u^2 m_x+3uu_xm, \qquad m=u-u_{xx},
\end{gather}
can be expressed as the compatibility condition for the linear system \cite{hw2} consisting of
\begin{gather}\label{iso-nov}
\mathbf{\Psi}_x =\begin{pmatrix} 0 &\lambda m & 1\\
 0 &0 & \lambda m\\
 1 &0 & 0
 \end{pmatrix} \mathbf{\Psi},\qquad
\mathbf{\Psi}=\begin{pmatrix}\psi_1 \\ \psi_2 \\ \psi_3\end{pmatrix},
\end{gather}
and
\begin{gather*}
\mathbf{\Psi}_t
=\begin{pmatrix} \frac{1}{3}\lambda^{-2}-uu_x &\lambda^{-1}u_x-\lambda u^2 m & u_x^2\\
 \lambda^{-1}u & -\frac{2}{3}\lambda^{-2} & -\lambda^{-1}u_x-\lambda u^2 m\\
 -u^2 & \lambda^{-1} u & \frac{1}{3}\lambda^{-2}+uu_x
 \end{pmatrix}\mathbf{\Psi}.
\end{gather*}
Note that equation \eqref{iso-nov} is reduced to a scalar equation by setting $\Psi=\psi_2$, namely
\begin{gather}\label{iso-nov1}
\Psi_{xxx} =2m^{-1}m_x\Psi_{xx}+\big( m^{-1} m_{xx}-2m^{-2}m_x^2+1\big) \Psi_x+\lambda^2 m^2 \Psi.
\end{gather}
It was proved in \cite{hw2} that by the reciprocal transformation
\begin{gather}\label{tran1-nov}
\mathrm{d} y=m^{\frac{2}{3}} \mathrm{d}x+m^{\frac{2}{3}} u^2 \mathrm{d}t, \qquad \mathrm{d} \tau=\mathrm{d} t,
\end{gather}
the isospectral problem (\ref{iso-nov1}) is converted into
\begin{gather}\label{iso-sk}
\Phi_{yyy}+Q \Phi_{y}= \mu \Phi,
\end{gather}
with
\begin{gather}\label{Qm1}
\Phi=\Psi,\qquad \mu=\lambda^2, \qquad Q=\frac{4}{9}m^{-\frac{10}{3}}m_x^2-\frac{1}{3}m^{-\frac{7}{3}}m_{xx}-m^{-\frac{4}{3}},
\end{gather}
which is a third-order spectral problem for the SK equation. Note that the isospectral problems for the Novikov equation and the SK equation can also be written as the Zakharov--Shabat (ZS) form
\begin{gather*}
\tilde{\Psi}_y+\big(Q_l -\tilde{\lambda} J\big) \tilde{\Psi}=0,
\end{gather*}
where $\tilde{\lambda}=\lambda^{\frac 23}$, while
\begin{gather*}
Q_l =\begin{pmatrix} -g^{-1}g_y &0 & 0\\
 0 &g^{-1}g_y+h^{-1}h_y & 0\\
 0 &0 & -h^{-1}h_y
 \end{pmatrix},
\end{gather*}
is a diagonal $\mathfrak{sl}(3)$ matrix. The functions here, $g(y,t)$, $h(y,t)$ satisfy the system
\begin{gather*}
g_{yy}+h^{-1}h_yg_y+hg=0, \qquad
h^{-1}h_{yy}-2h^{-1}h_y^2+h=-\frac{1}{3} m^{-1}m_{yy}+\frac{2}{9} m^{-2}m_y^2-m^{-\frac 43},
\end{gather*}
in the case of the Novikov equation, and
\begin{gather*}
g_{yy}+h^{-1}h_yg_y+hg=0, \qquad h^{-1}h_{yy}-2h^{-1}h_y^2+h=Q,
\end{gather*}
for the SK equation.

Moreover, using \eqref{tran1-nov},
\begin{gather*}
\partial_t=\partial_\tau+m^{\frac{2}{3}} u^2\partial_y,
\end{gather*}
the $t$ evolution of $\Psi=\psi_2$ in \eqref{iso-nov} is transformed into
\begin{gather}\label{phitau1}
\Phi_\tau
-\frac{1}{\mu} ( V \Phi_{yy}-V_y\Phi_y ) +\frac{2}{3\mu}\Phi=0, \qquad \mathrm{with}\qquad V=um^{\frac{1}{3}}.
\end{gather}
Notice that (\ref{phitau1}) is equivalent to
\begin{gather}\label{phitau}
\Phi_\tau
+\frac{1}{3\mu} ( W \Phi_{yy}-W_y\Phi_y )=0
\end{gather}
after gauging $\Phi$ by a factor of $e^{2\tau/(3\mu)}$ and setting $W=-3V$. Indeed, the linear system~(\ref{iso-sk}) and~(\ref{phitau}) provides the Lax pair for the f\/irst negative f\/low in the SK hierarchy~\cite{gp} (see also~\cite{hw1, hw2}), and the compatibility condition $\Phi_{yyy\tau}=\Phi_{\tau yyy}$ yields
\begin{gather}\label{sk-1'}
Q_\tau=W_y, \qquad W_{yy}+Q W=T,\qquad T_y=0.
\end{gather}
Therefore, we conclude that there exists a Liouville correspondence between the Novikov equation \eqref{nov} and the f\/irst negative f\/low \eqref{sk-1'} of the SK hierarchy, where their corresponding Lax pairs are related by the transformations (\ref{tran1-nov}) and~(\ref{Qm1}).

In light of this, we are led to generalize the Liouville correspondence between the Novikov equation and the f\/irst negative f\/low of the SK hierarchy to their entire hierarchies, establishing the correspondence between the f\/lows of the Novikov and SK hierarchies. Motivated by (\ref{tran1-nov}) and (\ref{Qm1}), we pursue this study by utilizing the Liouville transformation
\begin{gather}\label{recixy-nov}
y=\int^x m^{\frac{2}{3}}(t,\xi)\mathrm{d}\xi, \qquad \tau=t, \qquad
Q=\frac{4}{9}m^{-\frac{10}{3}}m_x^2-\frac{1}{3}m^{-\frac{7}{3}}m_{xx}-m^{-\frac{4}{3}}.
\end{gather}
Note that the f\/irst expression in \eqref{recixy-nov} has the form of a reciprocal transformation \cite{rs}.

\subsection{The correspondence between the Novikov and SK hierarchies}\label{section2.2}

Let us now study the correspondence between the Novikov hierarchy and the SK hierarchy. First of all, the Novikov equation \eqref{nov} can be written in bi-Hamiltonian form \cite{hw2}
\begin{gather}\label{nov-bi}
m_t=K_1=\mathcal{K} \frac{\delta \mathcal{H}_0}{\delta m}=\mathcal{J} \frac{\delta \mathcal{H}_1}{\delta m},\qquad m=u-u_{xx},
\end{gather}
where
\begin{gather}\label{nov-kj}
\mathcal{K}=\frac{1}{2}m^{\frac{1}{3}} \partial_x m^{\frac{2}{3}} \big(4\partial_x-\partial_x^3\big)^{-1} m^{\frac{2}{3}} \partial_x m^{\frac{1}{3}}, \qquad
\mathcal{J}=\big(1-\partial_x^2\big) m^{-1} \partial_x m^{-1} \big(1-\partial_x^2\big)
\end{gather}
are the compatible Hamiltonian operators. The corresponding Hamiltonian functionals are given by
\begin{gather*}
\mathcal{H}_0= 9\int \big( u^2+u_x^2\big) \mathrm{d}x
\end{gather*}
and
\begin{gather*}
\mathcal{H}_1= \frac{1}{6}\int um\partial_x^{-1}m\big(1-\partial_x^2\big)^{-1}\big(u^2m_x+3uu_xm\big) \mathrm{d}x.
\end{gather*}
Moreover, since
\begin{gather*}
\partial_x^{-1}(mu_t)=\int_{-\infty}^x(u-u_{xx})u_t \mathrm{d}x=-(u_xu_t-uu_{xt})(t,x)+\int_{-\infty}^xu(u_t-u_{xxt}) \mathrm{d}x\\
\hphantom{\partial_x^{-1}(mu_t)}{} =-(u_xu_t-uu_{xt})(t,x)+\int_{-\infty}^xu\big(u^2m_x+3uu_x m\big) \mathrm{d}x\\
\hphantom{\partial_x^{-1}(mu_t)}{} =\big(u u_{xt}-u_x u_t+u^3 m\big)(t,x),
\end{gather*}
using the Novikov equation \eqref{nov},
\begin{gather*}
\mathcal{H}_1 = \frac 16 \int_{\mathbb R} um\partial_x^{-1}(mu_t)\mathrm{d}x=\frac 16 \int_{\mathbb R} u m\big(uu_{xt}-u_xu_t+u^3m\big)\mathrm{d}x\\
\hphantom{\mathcal{H}_1}{} = \frac 16 \int_{\mathbb R} u^2 mu_{xt}\mathrm{d}x-\frac 16 \int_{\mathbb R}uu_xmu_tdx+\frac 16 \int_{\mathbb R}u^4 m^2 \mathrm{d}x\\
\hphantom{\mathcal{H}_1}{} = \frac 16 \int_{\mathbb R} \big({-}3uu_xm-u^2 m_x\big)u_t \mathrm{d}x+\frac 16 \int_{\mathbb R}u^4 m^2 \mathrm{d}x ,
\end{gather*}
which implies that $\mathcal{H}_1$ can be written in the following local form in terms of $u$ and $m$:
\begin{gather*}
\mathcal{H}_1= \frac{1}{6}\int \big( u^4m^2-m_tu_t \big) \mathrm{d}x.
\end{gather*}

According to Magri's theorem \cite{mag, olv1, olv2}, an integrable bi-Hamiltonian equation with two compatible Hamiltonian operators $\mathcal{K}$ and $\mathcal{J}$ belongs to an inf\/inite hierarchy
\begin{gather}\label{ph}
m_t=K_n=\mathcal{K} \frac{\delta \mathcal{H}_{n-1}}{\delta m}=\mathcal{J} \frac{\delta \mathcal{H}_n}{\delta m},\qquad n\in \mathbb{Z}
\end{gather}
of higher-order bi-Hamiltonian systems, in both the positive and negative directions, where $\mathcal{H}_{n}$, $n\in \mathbb{Z}$ are all conserved functionals common to all members of the hierarchy.

The Novikov equation \eqref{nov-bi} serves as the f\/irst member in the positive direction of \eqref{ph}. As for the negative direction, observe that
\begin{gather*}
K_0=\mathcal{J} \frac{\delta \mathcal{H}_0}{\delta m}=0,
\end{gather*}
and the Hamiltonian operator $\mathcal{K}$ admits the Casimir functional
\begin{gather*}
\mathcal{H}_C= \frac{9}{2} \int m^{\frac{2}{3}} \mathrm{d}x\qquad \hbox{with }\qquad \frac{\delta \mathcal{H}_C}{\delta m}=3 m^{-\frac{1}{3}}.
\end{gather*}
Therefore, we conclude that the negative f\/lows of the Novikov hierarchy are generated from the Casimir equation
\begin{gather}\label{nov--1}
m_t=K_{-1}=\mathcal{J} \frac{\delta \mathcal{H}_{-1}}{\delta m}=\mathcal{J} \frac{\delta \mathcal{H}_C}{\delta m}=3\mathcal{J} m^{-\frac{1}{3}}.
\end{gather}

The SK equation
\begin{gather}\label{sk}
Q_\tau+Q_{yyyyy}+5Q Q_{yyy}+5Q_yQ_{yy}+5Q^2Q_y=0
\end{gather}
exhibits a generalized bi-Hamiltonian system, whose corresponding integrable hierarchy is ge\-nerated by a recursion operator $\bar{\mathcal{R}}=\bar{\mathcal{K}}\bar{\mathcal{J}}$, with
\begin{gather}\label{sk-op-k}
\bar{\mathcal{K}}=-\big( \partial_y^3+2 Q \partial_y+2 \partial_y Q\big)
\end{gather}
and
\begin{gather}\label{sk-op-j}
\bar{\mathcal{J}}=2\partial_y^3+2 \partial_y^2 Q \partial_y^{-1}+2 \partial_y^{-1} Q\partial_y^2+Q^2 \partial_y^{-1}+\partial_y^{-1} Q^2.
\end{gather}
As noted in \cite{fo}, $\bar{\mathcal{K}}$ maps the variational gradients of the conservation laws of the equation under consideration onto its symmetry groups, while~$\bar{\mathcal{J}}$ works in the opposite way.

\begin{Definition}
The equation $Q_\tau=\bar{K}[Q]$ is called a generalized bi-Hamiltonian system if there exist an implectic (Hamiltonian) operator $\bar{\mathcal{K}}$ and a functional $\bar{\mathcal{H}}_0$ such that
\begin{gather*}
\bar{K}[Q]=\bar{\mathcal{K}}\frac{\delta \bar{\mathcal{H}}_0}{\delta Q},
\end{gather*}
as well as a symplectic operator $\bar{\mathcal{J}}$ and a corresponding functional $\bar{\mathcal{H}}_1$ satisfying
\begin{gather*}
\bar{\mathcal{J}}\bar{K}[Q]=\frac{\delta \bar{\mathcal{H}}_1}{\delta Q}.
\end{gather*}
\end{Definition}

The term ``generalized bi-Hamiltonian system'' is taken from \cite{fo}, and refers to the fact that we do not assume any nondegeneracy or invertibility conditions for the operators $\bar{\mathcal{K}}$ and $\bar{\mathcal{J}}$. These are particular instances of the general notion of \emph{compatible pairs of Dirac structures}, whose properties are developed in Dorfman~\cite{Dorfman}.

Therefore, def\/ining
\begin{gather}\label{sk-H0}
\bar{\mathcal{H}}_0= \frac{1}{6}\int \big( Q^3-3Q_y^2\big) \mathrm{d}y\qquad \hbox{with }\qquad \frac{\delta \bar{\mathcal{H}}_0}{\delta Q}=\frac{1}{2}Q^2+Q_{yy},
\end{gather}
one f\/inds that the SK equation \eqref{sk} can be written as
\begin{gather*}
Q_\tau=\bar{K}_{1}=\bar{\mathcal{K}} \frac{\delta \bar{\mathcal{H}}_0}{\delta Q},
\end{gather*}
and the positive f\/lows of the SK hierarchy are generated by applying successively the recursion operator $\bar{\mathcal{R}}=\bar{\mathcal{K}} \bar{\mathcal{J}}$ to $\bar{K}_{1}$, namely
\begin{gather}\label{pohi-sk}
Q_{\tau}=\bar{K}_{n}=\big( \bar{\mathcal{K}} \bar{\mathcal{J}}\big)^{n-1}\bar{K}_{1},\qquad n=1, 2, \ldots.
\end{gather}
On the other hand, in the negative direction, note that the trivial function $f=0$ satisf\/ies the equation
\begin{gather*}
\bar{\mathcal{J}}\cdot f=\frac{\delta \bar{\mathcal{H}}_0}{\delta Q}.
\end{gather*}
Then the $n$-th negative f\/low is proposed to take the form
\begin{gather}\label{sk--n}
 \bar{\mathcal{R}}^{n} Q_{\tau}=0,\qquad n=1, 2, \ldots.
\end{gather}
Furthermore, it has been discovered in \cite{cq} that the recursion operator $\bar{\mathcal{R}}$ satisf\/ies the following decomposition
\begin{gather}
\bar{\mathcal{R}}=\bar{\mathcal{K}} \bar{\mathcal{J}} =-\big( \partial_y^3+2 Q \partial_y+2 \partial_y Q\big) \big(2\partial_y^3+2 \partial_y^2 Q \partial_y^{-1}+2 \partial_y^{-1} Q\partial_y^2+Q^2 \partial_y^{-1}+\partial_y^{-1} Q^2\big)\nonumber\\
 \hphantom{\bar{\mathcal{R}}=\bar{\mathcal{K}} \bar{\mathcal{J}}}{}
 =-2\big( \partial_y^4+5Q \partial_y^2+4Q_y \partial_y+Q_{yy}+4Q^2+2Q_y \partial_y^{-1}Q\big) \big(\partial_y^2+Q+Q_y\partial_y^{-1}\big).\label{sk-ro-1}
\end{gather}
This factorization result demonstrates that the f\/irst negative f\/low \eqref{sk-1'} in the SK hierarchy derived in \cite{hw2} satisf\/ies $\bar{\mathcal{R}}Q_{\tau}=0$. We have thus conf\/irmed the formulation \eqref{sk--n} for the negative f\/lows.

We are now in the position to establish the theorem which shows how the transforma\-tions~\eqref{recixy-nov} af\/fect the underlying Liouville correspondence between the Novikov and SK hie\-rarchies. In this theorem and hereafter, we denote, for a positive integer~$n$, the $n$-th equation in the positive and negative directions of the Novikov hierarchy by (Novikov)$_n$ and (Novikov)$_{-n}$, respectively, while the $n$-th positive and negative f\/lows of the SK hierarchy are denoted by (SK)$_{n}$ and (SK)$_{-n}$, respectively.

\begin{Theorem}\label{t1}
Under the Liouville transformation \eqref{recixy-nov}, for each nonzero integer $n\in \mathbb{Z}$, the {\rm(}Novikov{\rm)}$_{n}$ equation is mapped into the {\rm(}SK{\rm)}$_{-n}$ equation, and conversely.
\end{Theorem}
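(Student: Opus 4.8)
The plan is to translate the bi-Hamiltonian/recursion machinery of the two hierarchies through the Liouville transformation \eqref{recixy-nov} and match flows degree by degree. First I would record how the transformation acts on the basic objects: from $y=\int^x m^{2/3}\,\mathrm{d}\xi$ one gets $\partial_y=m^{-2/3}\partial_x$ and a precise rule for how $t$-derivatives at fixed $x$ relate to $\tau$-derivatives at fixed $y$ (the total $x$-flux term $m^{2/3}u^2$ enters, exactly as in \eqref{tran1-nov}–\eqref{phitau1}). Using the defining formula $Q=\tfrac49 m^{-10/3}m_x^2-\tfrac13 m^{-7/3}m_{xx}-m^{-4/3}$ I would compute the Fréchet derivative of the map $m\mapsto Q$, obtaining a differential operator $\mathcal{L}$ with $Q_\tau=\mathcal{L}[m_t]$ (modulo the flux correction), and then its formal adjoint $\mathcal{L}^{*}$. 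The base case $n=1$ is already done in the excerpt: \eqref{phitau1}–\eqref{sk-1'} together with the factorization \eqref{sk-ro-1} show (Novikov)$_1$ corresponds to (SK)$_{-1}$, i.e.\ $\bar{\mathcal R}Q_\tau=0$.

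The core of the argument is an operator identity relating the Novikov recursion operator $\mathcal{R}=\mathcal{K}\mathcal{J}^{-1}$ (read off from \eqref{nov-kj}) to the \emph{adjoint} $\bar{\mathcal R}^{*}$ of the SK recursion operator, intertwined by $\mathcal{L}$: schematically
\[
\mathcal{L}\,\mathcal{R}=\bar{\mathcal R}^{*}\,\mathcal{L}
\qquad\text{(as operators, after the reciprocal change of variables).}
\]
Here the appearance of the adjoint is forced by the remark after \eqref{sk-op-j}: $\bar{\mathcal K}$ sends cosymmetries to symmetries while $\bar{\mathcal J}$ does the reverse, so on the Novikov side, where $K_n=\mathcal{K}\,\delta\mathcal{H}_{n-1}/\delta m$ involves gradients, the natural partner on the SK side is the \emph{negative} direction $\bar{\mathcal R}^{-n}$, matched through adjoints. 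I would prove this identity by using the explicit factored forms: the second line of \eqref{sk-ro-1} writes $\bar{\mathcal R}=-2\mathcal{A}\mathcal{B}$ with $\mathcal{B}=\partial_y^2+Q+Q_y\partial_y^{-1}$, and I expect the key lemma to be that $\mathcal{L}$ conjugates the second Novikov operator $\mathcal{J}=(1-\partial_x^2)m^{-1}\partial_x m^{-1}(1-\partial_x^2)$ into (a multiple of) $\mathcal{B}^{*}$ or $\mathcal{B}$, after the substitution $y=\int m^{2/3}\,\mathrm{d}x$ — this is the ``novel factorization'' advertised in the introduction. A parallel, slightly easier computation handles $\mathcal{K}$ versus $\mathcal{A}$ (or $\bar{\mathcal K}$).

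Granting the intertwining identity, the theorem follows by a two-sided induction. Starting from the base case $\bar{\mathcal R}Q_\tau^{(1)}=0$ with $Q_\tau^{(1)}=\mathcal{L}[K_1]$ (up to flux and the $e^{2\tau/(3\mu)}$ gauge already noted after \eqref{phitau}), apply $\mathcal{L}$ to $K_{n+1}=\mathcal{R}K_n$ and push $\mathcal{L}$ through using $\mathcal{L}\mathcal{R}=\bar{\mathcal R}^{*}\mathcal{L}$; an $n$-fold iteration turns (Novikov)$_n$ into an equation of the form $\bar{\mathcal R}^{n}Q_\tau=0$, which is precisely \eqref{sk--n}, i.e.\ (SK)$_{-n}$. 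For negative $n$ one runs the same identity in reverse, using that the negative Novikov flows \eqref{nov--1} are generated by the Casimir $\mathcal{H}_C$ of $\mathcal{K}$ and that $\mathcal{J}f=\delta\bar{\mathcal H}_0/\delta Q$ is solved by $f=0$ on the SK side; invertibility of $\mathcal{J}$ on the relevant space lets one go from (SK)$_{n}$ back to (Novikov)$_{-n}$, giving the converse. The main obstacle, and the step that will require genuine care rather than bookkeeping, is verifying the operator identity $\mathcal{L}\mathcal{R}=\bar{\mathcal R}^{*}\mathcal{L}$: it mixes a nonlocal third/fourth-order recursion operator with a nonlinear reciprocal change of variables, so one must track how $\partial_x^{-1}$ transforms (it does \emph{not} simply become $\partial_y^{-1}$ but acquires $m^{2/3}$ weights), confirm all boundary/constant-of-integration terms vanish in the decay class where the conservation laws live, and reconcile the flux term $m^{2/3}u^2\partial_y$ and the exponential gauge so that the transformed equations are \emph{exactly}, not just projectively, the SK negative flows.
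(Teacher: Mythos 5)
Your skeleton (linearize the map $m\mapsto Q$ including the flux term, prove an operator intertwining identity, then induct) is the paper's skeleton, and your $\mathcal{L}$ is exactly the paper's relation \eqref{t3.1-1}, $Q_\tau=\tfrac13\bar{\mathcal{K}}\partial_y^{-1}m^{-1}m_t$. However, the central identity you propose, $\mathcal{L}\mathcal{R}=\bar{\mathcal{R}}^{*}\mathcal{L}$ with $\mathcal{R}=\mathcal{K}\mathcal{J}^{-1}$, is oriented the wrong way, and the induction built on it does not deliver the theorem. The identity that actually holds (Lemma~\ref{lem2.2}, equation \eqref{l2.2}) is $m^{-1}\big(\mathcal{J}\mathcal{K}^{-1}\big)^{n}m=\partial_y\big(\bar{\mathcal{J}}\bar{\mathcal{K}}\big)^{n}\partial_y^{-1}$: it is the \emph{inverse} Novikov recursion operator $\mathcal{J}\mathcal{K}^{-1}=\mathcal{R}^{-1}$ that gets conjugated to the adjoint $\bar{\mathcal{J}}\bar{\mathcal{K}}=\bar{\mathcal{R}}^{*}$, and combined with \eqref{t3.1-1} this reads $\Lambda\mathcal{R}^{-1}=\bar{\mathcal{R}}\Lambda$ for $\Lambda=\tfrac13\bar{\mathcal{K}}\partial_y^{-1}m^{-1}$. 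This directional flip is precisely what exchanges positive and negative flows. With your identity as written, iteration gives $Q_\tau=\mathcal{L}K_n=(\bar{\mathcal{R}}^{*})^{\,n-1}\mathcal{L}K_1$, i.e., a flow in a positive direction of the (adjoint) SK hierarchy, not the kernel condition $\bar{\mathcal{R}}^{n}Q_\tau=0$ defining \eqref{sk--n}; since $\bar{\mathcal{R}}^{*}$ is again a sixth-order (nonlocal) operator rather than an inverse, there is no way to trade $(\bar{\mathcal{R}}^{*})^{n-1}$ for annihilation by $\bar{\mathcal{R}}^{n}$.

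Even after reorienting the identity, your treatment of the positive flows (``an $n$-fold iteration turns (Novikov)$_n$ into $\bar{\mathcal{R}}^{n}Q_\tau=0$'') skips the one step that cannot be done by formally pushing recursion operators around, because $\bar{\mathcal{R}}$ is not invertible. The paper computes $\bar{\mathcal{R}}^{n}Q_\tau$ directly: it inserts the explicit form \eqref{nov-n} of $K_n$ (whose seed is $\mathcal{K}\,\delta\mathcal{H}_0/\delta m\propto u$), cancels the $n-1$ recursion factors via \eqref{l2.2}, uses the second factorization in \eqref{sk-ro-1} in the form $\bar{\mathcal{R}}=\mathcal{B}\,\partial_y\big(\partial_y^2+Q\big)\partial_y^{-1}$, and then invokes \eqref{op1} together with $(1-\partial_x^2)u=m$ to obtain the pointwise identity $\big(\partial_y^2+Q\big)m^{1/3}u=-1$, whence $\bar{\mathcal{R}}^{n}Q_\tau=-3\mathcal{B}\partial_y\cdot 1=0$. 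That final identity, not the Lax-pair base case, is what forces the transformed positive Novikov flows to land exactly in the negative SK hierarchy, and it is missing from your plan. Your instincts about the needed conjugation identities are otherwise sound: the paper indeed proves $m^{-1}\mathcal{J}m^{-1/3}=\tfrac12\partial_y\bar{\mathcal{J}}\partial_y$ and $m^{-4/3}\big(4\partial_x-\partial_x^3\big)m^{-2/3}=\bar{\mathcal{K}}$ (Lemma~\ref{l3.1}), so $\mathcal{J}$ matches $\bar{\mathcal{J}}$ and the core of $\mathcal{K}^{-1}$ matches $\bar{\mathcal{K}}$ --- but as proposed, both the intertwining identity and the concluding annihilation argument are respectively wrong and absent.
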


The proof of this theorem relies on the following two lemmas.

\begin{Lemma}\label{l3.1}
Let $m(t, x)$ and $Q(\tau, y)$ be related by the transformation \eqref{recixy-nov}. Then the following operator identities hold:
\begin{gather}
m^{-1} \big(1-\partial_x^2\big) m^{-\frac{1}{3}}=-\big(Q+\partial_y^2\big), \label{op1}\\
m^{-1} \mathcal{J} m^{-\frac{1}{3}}=\frac{1}{2}\partial_y \bar{\mathcal{J}} \partial_y, \label{op2}\\
m^{-\frac{4}{3}} \big(4\partial_x-\partial_x^3\big) m^{-\frac{2}{3}}=\bar{\mathcal{K}} .\label{op3}
\end{gather}
\end{Lemma}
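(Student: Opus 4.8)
\textbf{Proof proposal for Lemma~\ref{l3.1}.}

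The plan is to verify the three identities by direct computation, translating $x$-derivatives into $y$-derivatives through the reciprocal transformation and then simplifying. The key computational tool is the change of variables in~\eqref{recixy-nov}: from $y=\int^x m^{2/3}\,\mathrm{d}\xi$ we get $\partial_x = m^{2/3}\partial_y$, hence for any operator expression one can conjugate by powers of $m$ and replace $\partial_x$ by $m^{2/3}\partial_y$ throughout. First I would establish~\eqref{op1}, which is the most basic: expanding $m^{-1}(1-\partial_x^2)m^{-1/3}$ using $\partial_x=m^{2/3}\partial_y$ produces a second-order operator in $\partial_y$ whose coefficients involve $m$ and its $y$-derivatives; the zeroth-order coefficient, after collecting terms, should match $-Q$ exactly, where $Q$ is given in~\eqref{recixy-nov} (rewriting $m_x=m^{2/3}m_y$, $m_{xx}=m^{2/3}(m^{2/3}m_y)_y$, etc., to express $Q$ in the $y$-variable). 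This identity~\eqref{op1} is the workhorse for the other two.

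Next I would prove~\eqref{op3}. Since $\mathcal{K}=\tfrac12 m^{1/3}\partial_x m^{2/3}(4\partial_x-\partial_x^3)^{-1}m^{2/3}\partial_x m^{1/3}$ from~\eqref{nov-kj}, the natural route is to show that $m^{-4/3}(4\partial_x-\partial_x^3)m^{-2/3}=\bar{\mathcal{K}}=-(\partial_y^3+2Q\partial_y+2\partial_y Q)$ directly. Writing $4\partial_x-\partial_x^3=-\partial_x(\partial_x^2-4)=-\partial_x(\partial_x-2)(\partial_x+2)$ is tempting, but the cleaner approach uses~\eqref{op1}: note that $4\partial_x-\partial_x^3 = -\partial_x(1-\partial_x^2)+ \text{lower order}$ is not quite right, so instead I would substitute $\partial_x = m^{2/3}\partial_y$ into $4\partial_x-\partial_x^3$, conjugate by $m^{-4/3}$ on the left and $m^{-2/3}$ on the right, and expand everything as a third-order differential operator in $\partial_y$. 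Comparing the coefficient of $\partial_y^3$ gives the overall normalization; the coefficient of $\partial_y^2$ should vanish (this is where the exponents $-4/3$ and $-2/3$ and the factor $4$ conspire); and the coefficients of $\partial_y$ and $\partial_y^0$ should assemble into $-2Q\partial_y-2Q_y-2Q\partial_y$ after using the expression for $Q$ from~\eqref{recixy-nov} and the already-established~\eqref{op1}.

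Finally, for~\eqref{op2}, I would exploit the factored form $\mathcal{J}=(1-\partial_x^2)m^{-1}\partial_x m^{-1}(1-\partial_x^2)$ from~\eqref{nov-kj}. Then
\begin{gather*}
m^{-1}\mathcal{J}m^{-\frac13} = \big[m^{-1}(1-\partial_x^2)\big]\,m^{-1}\partial_x m^{-1}\,\big[(1-\partial_x^2)m^{-\frac13}\big],
\end{gather*}
and I would like to insert factors of $m^{1/3}\cdot m^{-1/3}$ to recognize the outer brackets via~\eqref{op1}, rewriting the right-hand bracket as $(1-\partial_x^2)m^{-1/3}= \big[(1-\partial_x^2)m^{-1/3}m^{\text{?}}\big]\cdots$; more precisely, one conjugates so that $m^{-1}(1-\partial_x^2)m^{-1/3}=-(Q+\partial_y^2)$ appears on both ends (using~\eqref{op1} and its formal transpose, since the operator $m^{-1}(1-\partial_x^2)m^{-1/3}$ is, up to sign, the transpose of $m^{-1/3}(1-\partial_x^2)m^{-1}$ — careful bookkeeping of which power sits where is needed). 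The middle factor $m^{-1}\partial_x m^{-1}$, after absorbing the leftover powers of $m$ from the conjugation and using $\partial_x=m^{2/3}\partial_y$, should reduce to a constant multiple of $\partial_y$; combined with $\bar{\mathcal{J}}$ having the structure in~\eqref{sk-op-j}, the identity $\partial_y\bar{\mathcal{J}}\partial_y = 2(\partial_y+Q\partial_y^{-1}+\partial_y^{-1}Q)$-type factorization from the second line of~\eqref{sk-ro-1} (note $\bar{\mathcal{R}}=\bar{\mathcal{K}}\bar{\mathcal{J}}$ and the alternative factorization there exhibits $\partial_y^2+Q+Q_y\partial_y^{-1}$, which equals $(\partial_y^2+\cdots)$) will close the argument. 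The main obstacle I anticipate is precisely this last step: matching the nonlocal operator $\tfrac12\partial_y\bar{\mathcal{J}}\partial_y$ — which contains the $\partial_y^{-1}$ terms from~\eqref{sk-op-j} — against the conjugated form of $(1-\partial_x^2)m^{-1}\partial_x m^{-1}(1-\partial_x^2)$ requires recognizing that the two outer factors of $(1-\partial_x^2)$ produce exactly the nonlocal structure of $\bar{\mathcal{J}}$ when sandwiched around $\partial_y$, and verifying this demands careful use of~\eqref{op1} together with the operator identity $\bar{\mathcal{J}} = 2\partial_y(\partial_y^2+Q+Q_y\partial_y^{-1})\partial_y^{-1} + (\text{transpose})$ implicit in~\eqref{sk-ro-1}; keeping track of signs and the $\partial_y^{-1}$ boundary terms through the conjugations is where the real work lies.
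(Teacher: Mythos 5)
Your plan is essentially the paper's own proof: \eqref{op1} and \eqref{op3} by direct substitution of $\partial_x=m^{2/3}\partial_y$ and coefficient matching, and \eqref{op2} by inserting powers of $m$ into the factored form of $\mathcal{J}$ so that \eqref{op1} appears at both ends while the middle factor $m^{1/3}\cdot m^{-1}\partial_x m^{-1}\cdot m=m^{-2/3}\partial_x$ collapses to $\partial_y$. The only simplification worth noting is that your anticipated ``real work'' at the end of \eqref{op2} is unnecessary: no transposes and no appeal to \eqref{sk-ro-1} are needed, since one gets $m^{-1}\mathcal{J}m^{-1/3}=\big(Q+\partial_y^2\big)\partial_y\big(Q+\partial_y^2\big)$ directly, and this equals $\tfrac12\partial_y\bar{\mathcal{J}}\partial_y$ by the elementary identity $Q\partial_y Q=\tfrac12\big(Q^2\partial_y+\partial_y Q^2\big)$.
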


\begin{proof}
{\textbf{(i)}}. In view of the transformation \eqref{recixy-nov}, one has $\partial_x=m^{\frac{2}{3}}\partial_y$. It follows that
\begin{gather*}
\partial_x^2 m^{-\frac{1}{3}}=m^{\frac{2}{3}} \partial_y m^{\frac{2}{3}} \partial_y=m \partial_y^2+\big( m^{-\frac{1}{3}}\big)_{xx},
\end{gather*}
where, by \eqref{recixy-nov}, a direct computation yields
\begin{gather*}
\big( m^{-\frac{1}{3}}\big)_{xx}=-\frac{1}{3} \big( m^{-\frac{4}{3}}m_x\big)_x
=\frac{1}{3} \left(\frac{4}{3} m^{-\frac{7}{3}} m_x^2-m^{-\frac{4}{3}} \left( \frac{4}{3} m^{-1} m_x^2-3 m-3 m^{\frac{7}{3}} Q\right) \right)\\
\hphantom{\big( m^{-\frac{1}{3}}\big)_{xx}}{} =m \big(m^{-\frac{4}{3}}+Q\big).
\end{gather*}
We thus arrive at
\begin{gather*}
m^{-1} \big(1-\partial_x^2\big) m^{-\frac{1}{3}}=m^{-1} \big( m^{-\frac{1}{3}}-\big(m \partial_y^2+m^{-\frac{1}{3}}+m Q\big)\big)=-\big(Q+\partial_y^2\big).
\end{gather*}

{\textbf{(ii)}}. Thanks to \eqref{op1}, we deduce that
\begin{gather*}
m^{-1} \big(1-\partial_x^2\big) m^{-1} \partial_x m^{-1} \big(1-\partial_x^2\big) m^{-\frac{1}{3}} \partial_y^{-1}=\big(Q+\partial_y^2\big) \partial_y \big(Q+\partial_y^2\big) \partial_y^{-1}=\frac{1}{2}\partial_y \bar{\mathcal{J}},
\end{gather*}
verifying \eqref{op2}.

{\textbf{(iii)}}. Using the transformations \eqref{recixy-nov} again, we f\/ind
 \begin{gather*}
\partial_x^2 m^{-\frac{2}{3}} =m^{\frac{2}{3}} \partial_y\left( \partial_y-\frac{2}{3} m^{-\frac{5}{3}} m_x \right)=m^{\frac{2}{3}} \left( \partial_y^2-\frac{2}{3} m^{-\frac{5}{3}} m_x \partial_y-\frac{2}{3} m^{-\frac{2}{3}} \big(m^{-\frac{5}{3}} m_x\big)_x\right) \\
\hphantom{\partial_x^2 m^{-\frac{2}{3}}}{} = \frac{2}{9} m^{-\frac{8}{3}} m_x^2 +2 m^{-\frac{2}{3}}+2 m^{\frac{2}{3}} Q-\frac{2}{3}m^{-1} m_x \partial_y+m^{\frac{2}{3}} \partial_y^2.
\end{gather*}
Hence,
\begin{gather*}
 m^{-\frac{4}{3}} \big(4\partial_x-\partial_x^3\big) m^{-\frac{2}{3}}= m^{-\frac{2}{3}} \partial_y (4-\partial_x^2) m^{-\frac{2}{3}}\\
\hphantom{m^{-\frac{4}{3}} \big(4\partial_x-\partial_x^3\big) m^{-\frac{2}{3}}}{}
 = m^{-\frac{2}{3}} \partial_y \left( 2 m^{-\frac{2}{3}} -\frac{2}{9} m^{-\frac{8}{3}} m_x^2 -2 m^{\frac{2}{3}} Q+\frac{2}{3} m^{-1} m_x \partial_y-m^{\frac{2}{3}} \partial_y^2\right)\\
 \hphantom{m^{-\frac{4}{3}} \big(4\partial_x-\partial_x^3\big) m^{-\frac{2}{3}}}{}
 = 2 m^{-\frac{4}{3}} \big(m^{-\frac{2}{3}}\big)_x-\frac{2}{9} m^{-\frac{4}{3}} \big(m^{-\frac{8}{3}} m_x^2\big)_x-2 m^{-\frac{4}{3}} \big(m^{\frac{2}{3}}\big)_x Q-2 Q_y\\
\hphantom{m^{-\frac{4}{3}} \big(4\partial_x-\partial_x^3\big) m^{-\frac{2}{3}}=}{}
 +\left( 2 m^{-\frac{4}{3}}-\frac{2}{9} m^{-\frac{10}{3}} m_x^2+\frac{2}{3} m^{-\frac{4}{3}} \big(m^{-1} m_x\big)_x-2 Q\right) \partial_y\\
\hphantom{m^{-\frac{4}{3}} \big(4\partial_x-\partial_x^3\big) m^{-\frac{2}{3}}=}{}
+\left( m^{-\frac{2}{3}} \big(m^{\frac{2}{3}}\big)_x-\frac{2}{3} m^{-1} m_x\right) \partial_y^2-\partial_y^3,
 \end{gather*}
then \eqref{op3} follows.
\end{proof}

The relationship between the recursion operator for the Novikov hierarchy and the adjoint operator of recursion operator for the SK hierarchy is given by the following result.

\begin{Lemma}\label{lem2.2}
Under the transformation \eqref{recixy-nov}, the relation
\begin{gather}\label{l2.2}
m^{-1}\big(\mathcal{J} \mathcal{K}^{-1}\big)^n m=\partial_y\big(\bar{\mathcal{J}}\bar{\mathcal{K}}\big)^n\partial_y^{-1}
\end{gather}
holds for each integer $n\geq 1$.
\end{Lemma}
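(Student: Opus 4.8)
The plan is to prove \eqref{l2.2} by induction on $n$, with the base case $n=1$ being the heart of the matter and the inductive step being essentially formal. For $n=1$ I must show $m^{-1}\mathcal{J}\mathcal{K}^{-1}m = \partial_y\bar{\mathcal{J}}\bar{\mathcal{K}}\partial_y^{-1}$, equivalently $m^{-1}\mathcal{J}\mathcal{K}^{-1}m \cdot \partial_y = \partial_y\bar{\mathcal{J}}\bar{\mathcal{K}}$. The natural route is to insert $m^{-\frac13}m^{\frac13}$ (or the appropriate power of $m$) between the factors and invoke the three operator identities of Lemma~\ref{l3.1}. Concretely, identity \eqref{op2} gives $m^{-1}\mathcal{J}m^{-\frac13} = \tfrac12\partial_y\bar{\mathcal{J}}\partial_y$, so I would like to pair this with an identity expressing $m^{\frac13}\mathcal{K}^{-1}m$ in terms of $\bar{\mathcal{K}}^{-1}$. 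From the definition \eqref{nov-kj}, $\mathcal{K}=\tfrac12 m^{\frac13}\partial_x m^{\frac23}(4\partial_x-\partial_x^3)^{-1}m^{\frac23}\partial_x m^{\frac13}$, so
\begin{gather*}
\mathcal{K}^{-1}=2\,m^{-\frac13}\partial_x^{-1}m^{-\frac23}\big(4\partial_x-\partial_x^3\big)m^{-\frac23}\partial_x^{-1}m^{-\frac13}.
\end{gather*}
Hence $m^{\frac13}\mathcal{K}^{-1}m^{\frac13}=2\,\partial_x^{-1}m^{-\frac23}\big(4\partial_x-\partial_x^3\big)m^{-\frac23}\partial_x^{-1}$, and using $\partial_x=m^{\frac23}\partial_y$ (so $\partial_x^{-1}=\partial_y^{-1}m^{-\frac23}$) together with \eqref{op3}, namely $m^{-\frac43}(4\partial_x-\partial_x^3)m^{-\frac23}=\bar{\mathcal{K}}$, this collapses to $m^{\frac13}\mathcal{K}^{-1}m^{\frac13}=2\,\partial_y^{-1}\bar{\mathcal{K}}\partial_y^{-1}$.

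Combining the two computed identities yields
\begin{gather*}
m^{-1}\mathcal{J}\mathcal{K}^{-1}m=\big(m^{-1}\mathcal{J}m^{-\frac13}\big)\big(m^{\frac13}\mathcal{K}^{-1}m^{\frac13}\big)m^{-\frac13}m=\tfrac12\partial_y\bar{\mathcal{J}}\partial_y\cdot 2\partial_y^{-1}\bar{\mathcal{K}}\partial_y^{-1}\cdot m^{\frac23}m^{-\frac13}\cdots
\end{gather*}
and I must be careful bookkeeping the residual powers of $m$: after the two bracketed pieces I am left with $\partial_y\bar{\mathcal{J}}\bar{\mathcal{K}}\partial_y^{-1}$ times whatever stray factor of $m$ remains, which should be exactly $1$ once the exponents $-1,-\tfrac13,+\tfrac13,+1$ are matched up against $\partial_x=m^{2/3}\partial_y$ conversions; checking that the powers of $m$ cancel cleanly is the one genuinely delicate point, since a miscount here is the easiest way the argument could fail. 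Once the $n=1$ identity is in hand, the inductive step is immediate: assuming $m^{-1}(\mathcal{J}\mathcal{K}^{-1})^n m=\partial_y(\bar{\mathcal{J}}\bar{\mathcal{K}})^n\partial_y^{-1}$, write $(\mathcal{J}\mathcal{K}^{-1})^{n+1}=(\mathcal{J}\mathcal{K}^{-1})(\mathcal{J}\mathcal{K}^{-1})^n$ and conjugate by $m$, inserting $m\cdot m^{-1}$ in the middle:
\begin{gather*}
m^{-1}\big(\mathcal{J}\mathcal{K}^{-1}\big)^{n+1}m=\big(m^{-1}\mathcal{J}\mathcal{K}^{-1}m\big)\big(m^{-1}\big(\mathcal{J}\mathcal{K}^{-1}\big)^n m\big)=\partial_y\bar{\mathcal{J}}\bar{\mathcal{K}}\partial_y^{-1}\cdot\partial_y\big(\bar{\mathcal{J}}\bar{\mathcal{K}}\big)^n\partial_y^{-1}=\partial_y\big(\bar{\mathcal{J}}\bar{\mathcal{K}}\big)^{n+1}\partial_y^{-1},
\end{gather*}
where the inner $\partial_y^{-1}\partial_y$ cancels. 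This closes the induction.

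The main obstacle I anticipate is purely the power-of-$m$ accounting in the base case — verifying that the fractional exponents of $m$ coming from the definitions \eqref{nov-kj} of $\mathcal{K}$ and $\mathcal{J}$, from the substitution $\partial_x=m^{2/3}\partial_y$, and from the conjugating factors $m^{\pm1}, m^{\pm1/3}$ all telescope so that no residual multiplication operator survives. A secondary subtlety is the manipulation of the inverse operators $\mathcal{K}^{-1}$ and $(4\partial_x-\partial_x^3)^{-1}$: one should either work formally in the ring of pseudo-differential (or integro-differential) operators, where these manipulations are legitimate identities, or phrase everything in terms of $\mathcal{K}$ and $\bar{\mathcal{K}}$ without inverting and then invert at the end — I would adopt the former, noting that $\mathcal{K}$ and $\bar{\mathcal{K}}$ are the relevant invertible (implectic) operators whose inverses appear throughout the construction of the flows \eqref{ph} and \eqref{sk--n}. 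No new structural input beyond Lemma~\ref{l3.1} and the explicit forms \eqref{nov-kj}, \eqref{sk-op-k}, \eqref{sk-op-j} is needed.
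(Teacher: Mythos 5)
Your approach is essentially the paper's: write out $\mathcal{K}^{-1}$ explicitly from \eqref{nov-kj}, convert with $\partial_x=m^{\frac23}\partial_y$, apply \eqref{op2} and \eqref{op3}, and close with the same formal induction, the base case $n=1$ carrying all the content. One correction at exactly the point you flagged as delicate: the intermediate identity you assert, $m^{\frac13}\mathcal{K}^{-1}m^{\frac13}=2\,\partial_y^{-1}\bar{\mathcal{K}}\partial_y^{-1}$, is off by a right factor of $m^{-\frac23}$, because the rightmost $\partial_x^{-1}$ contributes $\partial_y^{-1}m^{-\frac23}$; if it were true as stated, your subsequent multiplication by $m^{-\frac13}m=m^{\frac23}$ would leave a residual $m^{\frac23}$ and the lemma would fail. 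The fix is simply to absorb the full trailing $m$ rather than splitting it, giving $m^{\frac13}\mathcal{K}^{-1}m=2\,\partial_x^{-1}m^{-\frac23}\big(4\partial_x-\partial_x^3\big)m^{-\frac23}\partial_x^{-1}m^{\frac23}=2\,\partial_y^{-1}\bar{\mathcal{K}}\partial_y^{-1}$, whence $m^{-1}\mathcal{J}\mathcal{K}^{-1}m=\tfrac12\partial_y\bar{\mathcal{J}}\partial_y\cdot 2\,\partial_y^{-1}\bar{\mathcal{K}}\partial_y^{-1}=\partial_y\bar{\mathcal{J}}\bar{\mathcal{K}}\partial_y^{-1}$ with no stray powers at all; this is precisely the paper's one-line computation, and your inductive step then goes through verbatim.
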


\begin{proof}
 We prove \eqref{l2.2} by induction on $n$. First, using the inverse operator $\mathcal{K}^{-1}$ along with~\eqref{recixy-nov}, we deduce from \eqref{op2} and~\eqref{op3} that
 \begin{gather*}
m^{-1} \mathcal{J} \mathcal{K}^{-1} m=2 m^{-1} \mathcal{J} m^{-\frac{1}{3}} \partial_x^{-1} m^{-\frac{2}{3}} \big(4\partial_x-\partial_x^3\big) m^{-\frac{2}{3}} \partial_x^{-1} m^{\frac{2}{3}}=\partial_y \bar{\mathcal{J}} \bar{\mathcal{K}} \partial_y^{-1},
\end{gather*}
which shows that \eqref{l2.2} holds for $n=1$. Next, we assume that \eqref{l2.2} holds for $n=k$, say
\begin{gather*}\label{op-l2}
m^{-1}\big(\mathcal{J} \mathcal{K}^{-1}\big)^k m=\partial_y\big(\bar{\mathcal{J}}\bar{\mathcal{K}}\big)^k\partial_y^{-1}.
\end{gather*}
Then for $n=k+1$, thanks to the result when $n=1$, one has
\begin{gather*}
m^{-1}\big(\mathcal{J} \mathcal{K}^{-1}\big)^{k+1} m =m^{-1} \mathcal{J} \mathcal{K}^{-1} \big(\mathcal{J} \mathcal{K}^{-1}\big)^k m=\partial_y\big(\bar{\mathcal{J}}\bar{\mathcal{K}}\big)^{k+1}\partial_y^{-1}.
\end{gather*}
This completes the induction step, and thus proves the lemma.
\end{proof}

\begin{proof} [\textbf{Proof of Theorem \ref{t1}}]
The proof of Theorem \ref{t1} contains two parts:

{\textbf{(i)}}. Let us begin with the (Novikov)$_{-n}$ equation for $n\geq 1$. First, since the relation \eqref{recixy-nov} can be rewritten as
\begin{gather}\label{reciqm1}
Q=-m^{-1} \big(1-\partial_x^2\big) m^{-\frac{1}{3}},
\end{gather}
we deduce from \eqref{op1} that the f\/irst negative f\/low \eqref{nov--1} of the Novikov hierarchy satisf\/ies
\begin{gather*}
m_t=K_{-1}=-3 \big(1-\partial_x^2\big) m^{-1} Q_x=3m\big(Q+\partial_y^2\big) Q_y=3m \partial_y \left(\frac{1}{2}Q^2+Q_{yy}\right).
\end{gather*}
Hence, the $n$-th member in the negative hierarchy takes the form
\begin{gather}\label{nov--nQ}
m_t=K_{-n}=\big(\mathcal{J} \mathcal{K}^{-1}\big)^{n-1} K_{-1}
= 3\big(\mathcal{J} \mathcal{K}^{-1}\big)^{n-1} m \partial_y \left(\frac{1}{2}Q^2+Q_{yy}\right),\qquad n=1, 2,\ldots.\!\!\!
\end{gather}

Next, suppose that $m(t, x)$ is the solution of the equation~\eqref{nov--nQ}. We calculate the $t$-derivative of the corresponding function $Q(\tau, y)$ def\/ined in~\eqref{reciqm1}. More precisely, we deduce that, on the one hand,
\begin{gather*}
Q_t=Q_\tau+Q_y\int^x \big( m^{\frac{2}{3}}(t, \xi)\big)_t \mathrm{d}\xi
=Q_\tau+\frac{2}{3}Q_y\partial_x^{-1}m^{-\frac{1}{3}}m_t=Q_\tau+\frac{2}{3}Q_y\partial_y^{-1}m^{-1}m_t,
\end{gather*}
and on the other hand, in view of \eqref{reciqm1},
\begin{gather*}
Q_t=m^{-2} m_t \big(1-\partial_x^2\big) m^{-\frac{1}{3}}+\frac{1}{3}m^{-1} \big(1-\partial_x^2\big) m^{-\frac{4}{3}}m_t=-m^{-1}\left( Q-\frac{1}{3}\big(1-\partial_x^2\big) m^{-\frac{4}{3}}\right) m_t.
\end{gather*}
Hence, combining the preceding two equations, we arrive at
\begin{gather}\label{t3.1-1}
Q_\tau=\left( -\frac{2}{3} Q_y\partial_y^{-1}m^{-1}-m^{-1}Q+\frac{1}{3}m^{-1}\big(1-\partial_x^2\big) m^{-\frac{4}{3}}\right) m_t=\frac{1}{3} \bar{\mathcal{K}}\partial_y^{-1}m^{-1}m_t,
\end{gather}
where we have made use of the formula \eqref{op1}.

Finally, according to Lemma \ref{lem2.2}, we deduce that if $m(t, x)$ is the solution of the (Novikov)$_{-n}$ equation \eqref{nov--nQ}, the corresponding $Q(\tau, y)$ satisf\/ies
 \begin{gather*}
Q_\tau =\bar{\mathcal{K}}\partial_y^{-1}m^{-1}\big(\mathcal{J} \mathcal{K}^{-1}\big)^{n-1} m \partial_y\left(\frac{1}{2}Q^2+Q_{yy}\right)\\
\hphantom{Q_\tau}{} =\bar{\mathcal{K}}\big(\bar{\mathcal{J}} \bar{\mathcal{K}}\big)^{n-1} \left(\frac{1}{2}Q^2+Q_{yy}\right)=\bar{\mathcal{R}}^{n-1}\bar{K}_1=\bar{K}_n.
\end{gather*}
This immediately implies that $Q(\tau, y)$ solves the (SK)$_{n}$ equation~\eqref{pohi-sk}.

Conversely, if $Q(\tau, y)$ is a solution of the (SK)$_{n}$ equation for $n\geq 1$, since the transforma\-tion~\eqref{recixy-nov} is a bijection, tracing the previous steps backwards suf\/f\/ices to verify that the reverse argument is also true.

{\textbf{(ii)}}. Now, we focus our attention on the (Novikov)$_{n}$ equation for $n\geq 1$, which can be written as
\begin{gather}\label{nov-n}
m_t=K_n=\big( \mathcal{K}\mathcal{J}^{-1}\big)^{n-1} \mathcal{K} \frac{\delta \mathcal{H}_0}{\delta m}=9\big( \mathcal{K}\mathcal{J}^{-1}\big)^{n-1} m^{\frac{1}{3}} \partial_x m^{\frac{2}{3}} \big(4\partial_x-\partial_x^3\big)^{-1} m^{\frac{2}{3}} \partial_x m^{\frac{1}{3}} u.
\end{gather}
 Plugging it into \eqref{t3.1-1}, we f\/ind
\begin{gather*}
Q_\tau=3\bar{\mathcal{K}}\partial_y^{-1}m^{-1}\big( \mathcal{K}\mathcal{J}^{-1}\big)^{n-1} m^{\frac{1}{3}} \partial_x m^{\frac{2}{3}} \big(4\partial_x-\partial_x^3\big)^{-1} m^{\frac{2}{3}} \partial_x m^{\frac{1}{3}} u.
\end{gather*}
As a consequence, the operator factorization identity \eqref{sk-ro-1}, when combined with~\eqref{l2.2}, allows us to deduce that, for each $n\geq 1$, if $m(t, x)$ solves the (Novikov)$_n$ equation \eqref{nov-n}, then for the operator $\mathcal{B}$ def\/ined by
\begin{gather*}
\mathcal{B}=-2\big( \partial_y^4+5Q \partial_y^2+4Q_y \partial_y+Q_{yy}+4Q^2+2Q_y \partial_y^{-1}Q\big),
\end{gather*}
the corresponding $Q(\tau, y)$ satisf\/ies
\begin{gather*}
 \bar{\mathcal{R}}^n Q_\tau=\mathcal{B} \big(\partial_y^2+Q+Q_y\partial_y^{-1}\big) \big(\bar{\mathcal{K}}\bar{\mathcal{J}}\big)^{n-1} Q_\tau\\
\hphantom{\bar{\mathcal{R}}^n Q_\tau}{} =3\mathcal{B} \partial_y\big( \partial_y^2\!+Q\big) \partial_y^{{-}1}\big(\bar{\mathcal{K}}\bar{\mathcal{J}}\big)^{n{-}1}\bar{\mathcal{K}}\partial_y^{{-}1}m^{{-}1} \big( \mathcal{K}\mathcal{J}^{{-}1}\big)^{n{-}1}
 m^{\frac{1}{3}} \partial_x m^{\frac{2}{3}}\big(4\partial_x\!- \partial_x^3\big)^{{-}1} m^{\frac{2}{3}} \partial_x m^{\frac{1}{3}} u\\
\hphantom{\bar{\mathcal{R}}^n Q_\tau}{}
=3\mathcal{B} \partial_y \big( \partial_y^2+Q\big) \partial_y^{-1} \bar{\mathcal{K}} m^{\frac{2}{3}} \big(4\partial_x-\partial_x^3\big)^{-1} m^{\frac{2}{3}} \partial_x m^{\frac{1}{3}} u\\
\hphantom{\bar{\mathcal{R}}^n Q_\tau}{} =3\mathcal{B} \partial_y \big( \partial_y^2+Q\big) m^{\frac{1}{3}} u=-3\mathcal{B} \partial_y\cdot 1=0,
\end{gather*}
where we have made use of the operator identity \eqref{op1}. This immediately reveals that $Q(\tau, y)$ solves the (SK)$_{-n}$ equation~\eqref{sk--n}. We thus prove that, for each $n\geq 1$ the (Novikov)$_n$ equation is mapped into the (SK)$_{-n}$ equation under the transformation~\eqref{recixy-nov}.

In analogy with the proof of part (i), the reverse argument follows from the fact that \eqref{recixy-nov} is a bijection.
\end{proof}

\subsection[The correspondence between the Hamiltonian conservation laws of the Novikov and SK equations]{The correspondence between the Hamiltonian conservation laws\\ of the Novikov and SK equations}\label{section2.3}

According to Magri's theorem, one can also recursively construct
an inf\/inite hierarchy of Hamiltonian conservation laws of any bi-Hamiltonian structure. In particular, for the Novikov equation~\eqref{nov},
at the $n$-th stage we determine the Hamiltonian conservation laws $\mathcal{H}_{n}$ satisfying the recursive formula
\begin{gather}\label{novclhie}
\mathcal{K} \frac{\delta \mathcal{H}_{n-1}}{\delta m}=\mathcal{J} \frac{\delta \mathcal{H}_n}{\delta m},\qquad n\in \mathbb{Z},
\end{gather}
where $\mathcal{K}$ and $\mathcal{J}$ are the two compatible Hamiltonian operators~\eqref{nov-kj} admitted by the Novikov equation.
On the other hand, the recursive formula
\begin{gather}\label{skclhie}
\bar{\mathcal{J}} \bar{\mathcal{K}} \frac{\delta \bar{\mathcal{H}}_{n-1}}{\delta Q}=\frac{\delta \bar{\mathcal{H}}_n}{\delta Q},\qquad n\in \mathbb{Z},
\end{gather}
formally provides an inf\/inite collection of Hamiltonian conservation laws for the SK equation~\eqref{sk},
using the operator pair $\bar{\mathcal{K}}$ and $\bar{\mathcal{J}}$ given in~\eqref{sk-op-k} and \eqref{sk-op-j}.

In this subsection we investigate the relationship between the two hierarchies of Hamiltonian conservation laws
$\{\mathcal{H}_n\}$ and $\{\bar{\mathcal{H}}_n\}$. Let us begin with two preliminary lemmas.

\begin{Lemma}
Let $\{\mathcal{H}_n\}$ and $\{\bar{\mathcal{H}}_n\}$ be the hierarchies of Hamiltonian conservation laws of the Novikov and SK equations, respectively. Then, for each $n\in \mathbb{Z}$, their corresponding variational derivatives satisfy the relation
\begin{gather}\label{l3.3-1}
\frac{\delta \bar{ \mathcal{H}}_{n}}{\delta Q}=\frac{1}{3}\partial_x^{-1}m^{-\frac{1}{3}}\mathcal{K}\frac{\delta \mathcal{H}_{-(n+2)}}{\delta m}.
\end{gather}
\end{Lemma}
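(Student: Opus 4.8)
The plan is to prove \eqref{l3.3-1} by relating the two recursion hierarchies through the operator identities of Lemma~\ref{l3.1} and the main correspondence already established in Lemma~\ref{lem2.2}. Note first that the right-hand side of \eqref{l3.3-1} involves $\mathcal{K}\,\delta\mathcal{H}_{-(n+2)}/\delta m$, which by the Novikov recursion \eqref{novclhie} equals $\mathcal{J}\,\delta\mathcal{H}_{-(n+1)}/\delta m = K_{-(n+1)}$, the $(n+1)$-st negative Novikov flow. So the claimed identity is really a statement comparing $\delta\bar{\mathcal H}_n/\delta Q$ with $\tfrac13\partial_x^{-1}m^{-1/3}K_{-(n+1)}$; and from \eqref{t3.1-1} in the proof of Theorem~\ref{t1} we already know that $Q_\tau = \tfrac13\bar{\mathcal K}\partial_y^{-1}m^{-1}m_t$ when $m_t = K_{-(n+1)}$, i.e.\ the right-hand side of \eqref{l3.3-1} is, up to the $\bar{\mathcal K}$ and the change from $\partial_x^{-1}m^{-1/3}$ to $\partial_y^{-1}m^{-1}$, exactly the $\tau$-flow of the corresponding SK member. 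This strongly suggests the cleanest route is induction on~$n$.

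First I would establish the base case. The natural anchor is $n=-1$ or $n=0$: using \eqref{nov--1}, \eqref{op1}, and the explicit Casimir gradient $\delta\mathcal H_C/\delta m = 3m^{-1/3}$, together with the known $\delta\bar{\mathcal H}_0/\delta Q = \tfrac12 Q^2 + Q_{yy}$ from \eqref{sk-H0}, I would verify \eqref{l3.3-1} directly for one value of $n$ by the same short computation that appears in part~(i) of the proof of Theorem~\ref{t1} (the line showing $K_{-1} = 3m\partial_y(\tfrac12 Q^2 + Q_{yy})$ rearranges to exactly the desired form after applying $\tfrac13\partial_x^{-1}m^{-1/3}$ and using $\partial_x = m^{2/3}\partial_y$, hence $\partial_x^{-1}m^{-1/3}(3m\,\cdot) = 3\partial_y^{-1}(\,\cdot\,)$... being careful with the precise power of $m$).

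Then I would run the induction step. Assume \eqref{l3.3-1} at stage~$n$; apply $\bar{\mathcal J}\bar{\mathcal K}$ to both sides and use the SK recursion \eqref{skclhie} on the left to get $\delta\bar{\mathcal H}_{n+1}/\delta Q$. On the right I must show $\bar{\mathcal J}\bar{\mathcal K}\cdot\tfrac13\partial_x^{-1}m^{-1/3}\mathcal K\,\delta\mathcal H_{-(n+2)}/\delta m = \tfrac13\partial_x^{-1}m^{-1/3}\mathcal K\,\delta\mathcal H_{-(n+3)}/\delta m$. Here I would rewrite $\mathcal K\,\delta\mathcal H_{-(n+2)}/\delta m = \mathcal J\,\delta\mathcal H_{-(n+1)}/\delta m$ and $\mathcal K\,\delta\mathcal H_{-(n+3)}/\delta m = \mathcal J\,\delta\mathcal H_{-(n+2)}/\delta m = \mathcal J\mathcal K^{-1}\mathcal J\,\delta\mathcal H_{-(n+1)}/\delta m$, so the claim reduces to the operator identity $\bar{\mathcal J}\bar{\mathcal K}\,\partial_x^{-1}m^{-1/3}\mathcal J = \partial_x^{-1}m^{-1/3}\mathcal J\mathcal K^{-1}\mathcal J$ acting on the relevant gradient, equivalently $\bar{\mathcal J}\bar{\mathcal K} = \partial_x^{-1}m^{-1/3}\mathcal J\mathcal K^{-1}m^{1/3}\partial_x$. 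Comparing with Lemma~\ref{lem2.2}, which gives $m^{-1}(\mathcal J\mathcal K^{-1})m = \partial_y(\bar{\mathcal J}\bar{\mathcal K})\partial_y^{-1}$, and using $\partial_x = m^{2/3}\partial_y$ together with \eqref{op1}--\eqref{op3}, this is exactly the conjugation relation needed; the only real work is tracking the $m$-powers and the $\partial_x$-versus-$\partial_y$ bookkeeping to see that $\partial_x^{-1}m^{-1/3}(\cdot)m^{1/3}\partial_x$ and $\partial_y^{-1}m^{-1}(\cdot)m\partial_y$ coincide on gradients. The main obstacle, as usual in this circle of arguments, is not conceptual but the care required with the nonlocal operators $\partial_x^{-1}$, $\partial_y^{-1}$: one must make sure the constants of integration are fixed consistently (so that the identities hold as genuine operator identities on the relevant function spaces, not merely up to Casimirs), and that the chain-rule substitution $\partial_x = m^{2/3}\partial_y$ is applied to the correct factors. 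Once that bookkeeping is in place, both the base case and the induction step are short, and \eqref{l3.3-1} follows for all $n\in\mathbb Z$ by induction in both directions.
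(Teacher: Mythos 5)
Your proposal takes essentially the same route as the paper: the base case is the explicit computation $\mathcal{K}\,\delta\mathcal{H}_{-2}/\delta m = K_{-1} = 3m\,\partial_y\bigl(\tfrac12 Q^2+Q_{yy}\bigr)$, and the induction step reduces, exactly as in the paper, to Lemma~\ref{lem2.2} with $n=1$ combined with the recursions \eqref{novclhie} and \eqref{skclhie} (your conjugation identity $\bar{\mathcal{J}}\bar{\mathcal{K}}=\partial_x^{-1}m^{-1/3}\mathcal{J}\mathcal{K}^{-1}m^{1/3}\partial_x$ is precisely that lemma rewritten via $\partial_x=m^{2/3}\partial_y$). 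The only details the paper spells out that you leave implicit are the crossing at $n=-1$, handled there by checking that both sides are annihilated by $\bar{\mathcal{K}}$ (since $\bar{\mathcal{H}}_{-1}$ is a Casimir), and the downward induction, which is run through $\mathcal{J}^{-1}\mathcal{K}$ on the Novikov side rather than by inverting $\bar{\mathcal{J}}\bar{\mathcal{K}}$.
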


\begin{proof}
The proof relies on an induction argument. First of all, since
\begin{gather*}
\frac{\delta \mathcal{H}_{-2}}{\delta m} =\mathcal{K}^{-1}\mathcal{J}\frac{\delta \mathcal{H}_{-1}}{\delta m}=3\mathcal{K}^{-1}\big(1-\partial_x^2\big)m^{-1}\partial_xm^{-1}\big(1-\partial_x^2\big) m^{-\frac{1}{3}}\\
\hphantom{\frac{\delta \mathcal{H}_{-2}}{\delta m}}{} =-3\mathcal{K}^{-1}\big(1-\partial_x^2\big)m^{-1} Q_x=3\mathcal{K}^{-1}m\big(Q+\partial_y^2\big) Q_y =3\mathcal{K}^{-1}m\partial_y\frac{\delta \bar{\mathcal{H}}_{0}}{\delta Q},
\end{gather*}
by \eqref{recixy-nov}, \eqref{op1}, and \eqref{reciqm1}, and so, clearly, \eqref{l3.3-1} holds for $n=0$.

Suppose by induction, that \eqref{l3.3-1} holds for $n=k$ with $k\geq 0$,
say
\begin{gather*}\label{vd-sk}
\frac{\delta \bar{ \mathcal{H}}_{k}}{\delta Q}=\frac{1}{3}\partial_x^{-1}m^{-\frac{1}{3}}\mathcal{K}\frac{\delta \mathcal{H}_{-(k+2)}}{\delta m}.
\end{gather*}
Then, for $n=k+1$, by \eqref{novclhie} and \eqref{skclhie}, we deduce that
\begin{gather*}
\begin{split}&
\frac{\delta \bar{ \mathcal{H}}_{k+1}}{\delta Q} =\bar{\mathcal{J}}\bar{\mathcal{K}}\frac{\delta \bar{ \mathcal{H}}_{k}}{\delta Q}=\frac{1}{3}\bar{\mathcal{J}}\bar{\mathcal{K}}\partial_x^{-1}m^{-\frac{1}{3}}\mathcal{K}\frac{\delta \mathcal{H}_{-(k+2)}}{\delta m}\\
& \hphantom{\frac{\delta \bar{ \mathcal{H}}_{k+1}}{\delta Q}}{}
=\frac{1}{3}\bar{\mathcal{J}}\bar{\mathcal{K}}\partial_y^{-1}m^{-1}\mathcal{K}\mathcal{J}^{-1}\mathcal{K}\frac{\delta \mathcal{H}_{-(k+3)}}{\delta m}=\frac{1}{3}\partial_x^{-1}m^{-\frac{1}{3}}\mathcal{K}\frac{\delta \mathcal{H}_{-(k+3)}}{\delta m},
\end{split}
\end{gather*}
where we have made use of Lemma \ref{lem2.2} with $n=1$. This verif\/ies \eqref{l3.3-1} for $n\geq 0$, completing the f\/irst step.

Next, in the case of $n=-1$, it follows from $\delta \mathcal{H}_{-1}/\delta m=3m^{-1/3}$ that
\begin{gather*}
\frac{1}{3}\bar{\mathcal{K}}\partial_x^{-1}m^{-\frac{1}{3}}\mathcal{K}\frac{\delta \mathcal{H}_{-1}}{\delta m}=\frac{1}{2}m^{-\frac{2}{3}}\partial_x\cdot 1=0,
\end{gather*}
which, together with the fact that $\bar{\mathcal{K}}\cdot (\delta\bar{\mathcal{H}}_{-1}/\delta Q)=0$ shows that \eqref{l3.3-1} holds for $n=-1$.

Finally, to prove \eqref{l3.3-1} holds for all $n\leq -1$, we assume that \eqref{l3.3-1} holds for $n=k$. Then for $n=k-1$, using the recursive formulae \eqref{novclhie} and \eqref{skclhie} and Lemma~\ref{lem2.2} with $n=1$ again, we infer that
\begin{gather*}
\frac{\delta \mathcal{H}_{-(k-1)}}{\delta m}= \mathcal{J}^{-1}\mathcal{K}\frac{\delta \mathcal{H}_{-k}}{\delta m}=3\mathcal{J}^{-1}m\partial_y\frac{\delta \bar{ \mathcal{H}}_{k-2}}{\delta Q} =3\mathcal{J}^{-1}m\partial_y\bar{\mathcal{J}}\bar{\mathcal{K}}\frac{\delta \bar{ \mathcal{H}}_{k-3}}{\delta Q}=3\mathcal{K}^{-1}m^{\frac{1}{3}}\partial_x\frac{\delta \bar{ \mathcal{H}}_{k-3}}{\delta Q},\!
\end{gather*}
which establishes the induction step for $n\leq -1$ and thus proves the lemma in general.
\end{proof}

In order to establish the correspondence between Hamiltonian conservation laws admitted by the Novikov and SK equations, we require the formula for the change of variational derivatives.

\begin{Lemma}\label{lem2.4}
Let $m(t, x)$ and $Q(\tau, y)$ be related by the transformations \eqref{recixy-nov}. If $\mathcal{H}(m)=\bar{\mathcal{H}}(Q)$, then
\begin{gather}\label{vdqm}
\frac{\delta \mathcal{H}}{\delta m}=\frac{1}{3} m^{-\frac{1}{3}}\partial_y^{-1}\bar{\mathcal{K}}\frac{\delta \bar{\mathcal{H}}}{\delta Q},
\end{gather}
where $\bar{\mathcal{K}}$ is the Hamiltonian operator \eqref{sk-op-k} admitted by the SK equation \eqref{sk}.
\end{Lemma}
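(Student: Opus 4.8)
The assertion is a standard change-of-variables formula for variational derivatives under the Liouville/reciprocal transformation \eqref{recixy-nov}, and the natural route is to compute directly how the first variation of a functional transforms. Suppose $\mathcal{H}(m)=\bar{\mathcal{H}}(Q)$ with $m$ and $Q$ related by \eqref{recixy-nov}. The first step is to write the variation of the $y$-side functional as
\begin{gather*}
\delta \bar{\mathcal{H}}=\int \frac{\delta \bar{\mathcal{H}}}{\delta Q}\,\delta Q\,\mathrm{d}y,
\end{gather*}
and then to express $\delta Q$ in terms of $\delta m$ using the closed-form relation $Q=-m^{-1}(1-\partial_x^2)m^{-1/3}$ from \eqref{reciqm1}, together with the substitution $\partial_x=m^{2/3}\partial_y$ and $\mathrm{d}y=m^{2/3}\,\mathrm{d}x$. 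Linearizing \eqref{reciqm1} — paying attention to the fact that $y$ itself depends on $m$ through the integral in \eqref{recixy-nov}, so that a variation of $m$ induces both an explicit variation of $Q$ and a reparametrization of the $y$-variable — produces a linear operator $\mathcal{L}$ with $\delta Q=\mathcal{L}\,\delta m$, expressed in the $x$-variable.

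Next I would assemble the computation. Converting the $y$-integral back to an $x$-integral via $\mathrm{d}y=m^{2/3}\,\mathrm{d}x$, one gets
\begin{gather*}
\delta \mathcal{H}=\delta \bar{\mathcal{H}}=\int \left(\frac{\delta \bar{\mathcal{H}}}{\delta Q}\right)\mathcal{L}\,\delta m \; m^{2/3}\,\mathrm{d}x=\int \big(\mathcal{L}^{*}\!\big(m^{2/3}\,\tfrac{\delta \bar{\mathcal{H}}}{\delta Q}\big)\big)\,\delta m\,\mathrm{d}x,
\end{gather*}
so that $\delta \mathcal{H}/\delta m=\mathcal{L}^{*}(m^{2/3}\,\delta\bar{\mathcal{H}}/\delta Q)$ after moving the operator onto the other factor by integration by parts and discarding boundary terms. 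The heart of the matter is then to identify this adjoint expression with the right-hand side of \eqref{vdqm}, i.e.\ to show
\begin{gather*}
\mathcal{L}^{*}\big(m^{2/3}\cdot\big)=\tfrac{1}{3}\,m^{-1/3}\partial_y^{-1}\bar{\mathcal{K}}.
\end{gather*}
This is precisely the adjoint counterpart of the operator identities already proved in Lemma~\ref{l3.1}; in particular \eqref{op1}, rewritten as $m^{-1}(1-\partial_x^2)m^{-1/3}=-(Q+\partial_y^2)$, is the linearization of \eqref{reciqm1} evaluated at the relevant factor, and taking formal adjoints (using $\partial_x^{*}=-\partial_x$, $\partial_y^{*}=-\partial_y$, and the self-adjointness of multiplication operators) converts the chain $m^{2/3}(Q+\partial_y^2)\partial_y(\cdots)$ into the operator $\bar{\mathcal{K}}=-(\partial_y^3+2Q\partial_y+2\partial_y Q)$ up to the constant $\tfrac13$ and the conjugating factors $m^{-1/3}\partial_y^{-1}$. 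An alternative, perhaps cleaner, derivation is to bypass the explicit linearization entirely: combine the already-established correspondence of Hamiltonian flows from Theorem~\ref{t1} — under which $K_{-n}$ on the Novikov side corresponds to $\bar K_n$ on the SK side — with identity \eqref{t3.1-1}, namely $Q_\tau=\tfrac13\bar{\mathcal{K}}\partial_y^{-1}m^{-1}m_t$, to read off how gradients must transform; chasing $m_t=\mathcal{J}\,\delta\mathcal{H}/\delta m$ through \eqref{op2} and comparing with $Q_\tau=\bar{\mathcal{K}}\,\delta\bar{\mathcal{H}}/\delta Q$ yields \eqref{vdqm} after cancelling the operator $\tfrac12\partial_y\bar{\mathcal{J}}\partial_y$ from \eqref{op2}.

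**The main obstacle.**
The delicate point is the contribution to $\delta Q$ coming from the variation of the independent variable $y$ itself: since $y=\int^x m^{2/3}\,\mathrm{d}\xi$, a perturbation $m\mapsto m+\varepsilon\,\delta m$ moves the $y$-coordinate by $\delta y=\int^x \tfrac23 m^{-1/3}\,\delta m\,\mathrm{d}\xi$, and one must correctly account for this when writing $\delta Q=Q_{\mathrm{expl}}-Q_y\,\delta y$ (the total versus partial variation), exactly as in the computation of $Q_t$ preceding \eqref{t3.1-1}. Getting the signs and the placement of the nonlocal operators $\partial_x^{-1}$, $\partial_y^{-1}$ right in this step, and then verifying that the resulting adjoint operator collapses to $\tfrac13 m^{-1/3}\partial_y^{-1}\bar{\mathcal{K}}$, is where essentially all the work lies; once the operator identities from Lemma~\ref{l3.1} are invoked in adjoint form, the rest is bookkeeping. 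For this reason I would favor the second, flow-based derivation, which packages all of this nonlocal bookkeeping into the already-proven relation \eqref{t3.1-1} and \eqref{op2}.
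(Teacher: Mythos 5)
Your first, linearization-based plan is essentially the paper's own proof: the paper introduces $F[m]\equiv -m^{-1}\big(1-\partial_x^2\big)m^{-\frac13}=Q$, computes its Fr\'echet derivative, splits the total variation into the transport piece $Q_y\cdot\frac23\partial_x^{-1}m^{-\frac13}\rho$ coming from the $m$-dependence of $y$ and the fixed-$y$ piece, which (via \eqref{op1}) collapses to $\frac13\bar{\mathcal{K}}\partial_y^{-1}m^{-1}\rho$, and then dualizes using the skew-symmetry of $\bar{\mathcal{K}}$ and $\partial_y^{-1}$ together with $\mathrm{d}y=m^{\frac23}\mathrm{d}x$ to obtain \eqref{vdqm}. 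Your identification of the delicate point --- total versus fixed-$y$ variation, handled exactly as in the computation preceding \eqref{t3.1-1} --- is precisely where the paper's work lies, so this route is sound.

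However, the ``alternative, cleaner, flow-based'' derivation that you say you would favor does not work as sketched. Comparing $m_t=\mathcal{J}\,\delta\mathcal{H}/\delta m$ with ``$Q_\tau=\bar{\mathcal{K}}\,\delta\bar{\mathcal{H}}/\delta Q$'' presupposes that the transformed flow is the $\bar{\mathcal{K}}$-Hamiltonian flow generated by $\bar{\mathcal{H}}$, which is essentially what the gradient formula is needed to justify, so the argument is circular; worse, that premise is false. Indeed, inserting \eqref{vdqm} into \eqref{t3.1-1} and using \eqref{op2} gives
\begin{gather*}
Q_\tau=\frac13\,\bar{\mathcal{K}}\partial_y^{-1}m^{-1}\mathcal{J}\,\frac{\delta\mathcal{H}}{\delta m}
=\frac19\,\bar{\mathcal{K}}\partial_y^{-1}\big(m^{-1}\mathcal{J}m^{-\frac13}\big)\partial_y^{-1}\bar{\mathcal{K}}\,\frac{\delta\bar{\mathcal{H}}}{\delta Q}
=\frac1{18}\,\bar{\mathcal{K}}\bar{\mathcal{J}}\bar{\mathcal{K}}\,\frac{\delta\bar{\mathcal{H}}}{\delta Q},
\end{gather*}
not $\bar{\mathcal{K}}\,\delta\bar{\mathcal{H}}/\delta Q$: the recursion operator intervenes, which is exactly why the correspondence of conservation laws in Theorem~\ref{t2} carries the index shift and factor $\mathcal{H}_n(m)=18\,\bar{\mathcal{H}}_{-(n+2)}(Q)$. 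There is no legitimate ``cancelling of $\frac12\partial_y\bar{\mathcal{J}}\partial_y$,'' so you should carry out your first plan (the direct linearization with the $y$-reparametrization term) rather than the flow-based shortcut.
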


\begin{proof}
First of all, motivated by \eqref{recixy-nov} and \eqref{reciqm1}, we introduce
\begin{gather*}
F[m(t, x)]\equiv -m^{-1}\big(1-\partial_x^2\big)m^{-\frac{1}{3}}=Q(\tau, y).
\end{gather*}
Then the Fr\'echet derivative of $F[m]$ is
\begin{gather*}
\mathcal{D}_{F[m]} =\frac{4}{3}m^{-\frac{7}{3}}-m^{-2}\big(m^{-\frac{1}{3}}\big)_{xx}-\frac{1}{3}m^{-1}\partial_x^2 m^{-\frac{4}{3}}\\
\hphantom{\mathcal{D}_{F[m]}}{} =-m^{-1}Q-\frac{1}{3}\big(Q+\partial_y^2\big)m^{-1}=-\frac{1}{3}\big(4Q+\partial_y^2\big)m^{-1}.
\end{gather*}
On the other hand,
\begin{gather*}\label{l4-1}
\frac{\mathrm{d}}{\mathrm{d}\epsilon}\Big|_{\epsilon=0}F[m+\epsilon \rho]=Q_y\frac{\mathrm{d}}{\mathrm{d}\epsilon}\Big|_{\epsilon=0}y(m+\epsilon \rho)+\frac{\mathrm{d}}{\mathrm{d}\epsilon}\Big|_{\epsilon=0}^{y \, {\rm f\/ixed}}F[m+\epsilon \rho],
\end{gather*}
where, by \eqref{recixy-nov},
\begin{gather*}\frac{\mathrm{d}}{\mathrm{d}\epsilon}\Big|_{\epsilon=0}y(m+\epsilon \rho)=\frac{2}{3}\partial_x^{-1} m^{-\frac{1}{3}}\rho.
\end{gather*}

Next, it follows from
\begin{gather*}
\frac{\mathrm{d}}{\mathrm{d}\epsilon}\Big|_{\epsilon=0}F[m+\epsilon \rho]=\mathcal{D}_{F[m]}(\rho)=-\frac{1}{3}\big(4Q+\partial_y^2\big)m^{-1}\rho
\end{gather*} that
\begin{gather*}
\frac{\mathrm{d}}{\mathrm{d}\epsilon}\Big|_{\epsilon=0}^{y \, {\rm f\/ixed}}F[m+\epsilon \rho]=-\frac{1}{3}\big(4Q+\partial_y^2\big)m^{-1}\rho-\frac{2}{3}Q_y\partial_y^{-1} m^{-1}\rho=\frac{1}{3}\bar{\mathcal{K}}\partial_y^{-1} m^{-1}\rho.
\end{gather*}

Finally, the assumption of the lemma implies that
\begin{gather*}
\frac{\mathrm{d}}{\mathrm{d}\epsilon}\Big|_{\epsilon=0} \mathcal{H}(m+\epsilon \rho)=\frac{\mathrm{d}}{\mathrm{d}\epsilon}\Big|_{\epsilon=0} \bar{\mathcal{H}}\left(F[m+\epsilon \rho]\right).
\end{gather*}
According to the usual def\/inition of the variational derivative, we have, on the one hand,
\begin{gather}\label{l4-2}
\frac{\mathrm{d}}{\mathrm{d}\epsilon}\Big|_{\epsilon=0}\mathcal{H}(m+\epsilon \rho)=\int \frac{\delta \mathcal{H}}{\delta m} \cdot \rho \mathrm{d}x.
\end{gather}
On the other hand, using the fact that $\bar{\mathcal{K}}$ is skew-symmetric, we infer that
\begin{gather*}
\frac{\mathrm{d}}{\mathrm{d}\epsilon}\Big|_{\epsilon=0} \bar{\mathcal{H}} (F[m+\epsilon \rho] )
=\int \frac{\delta \bar{\mathcal{H}}}{\delta Q} \cdot\frac{\mathrm{d}}{\mathrm{d}\epsilon}\Big|_{\epsilon=0}^{y \, {\rm f\/ixed}}F[m+\epsilon \rho] \mathrm{d}y=\frac{1}{3} \int \frac{\delta \bar{\mathcal{H}}}{\delta Q} \cdot\bar{\mathcal{K}}\partial_y^{-1} m^{-1}\rho \mathrm{d}y\\
\hphantom{\frac{\mathrm{d}}{\mathrm{d}\epsilon}\Big|_{\epsilon=0} \bar{\mathcal{H}} (F[m+\epsilon \rho] ) }{}
=\frac{1}{3} \int m^{\frac{2}{3}} \big(\bar{\mathcal{K}}\partial_y^{-1} m^{-1}\big)^*\frac{\delta \bar{\mathcal{H}}}{\delta Q} \cdot \rho \mathrm{d}x=\frac{1}{3} \int m^{-\frac{1}{3}} \partial_y^{-1}\bar{\mathcal{K}}\frac{\delta \bar{\mathcal{H}}}{\delta Q} \cdot \rho \mathrm{d}x,
\end{gather*}
which, in comparison with \eqref{l4-2} verif\/ies \eqref{vdqm}, proving the lemma.
\end{proof}

Finally, referring back to the form of the Hamiltonian operator $\mathcal{K}$, one has
\begin{gather*}
\mathcal{K}^{-1}m^{\frac{1}{3}}\partial_x=2m^{-\frac{1}{3}}\partial_x^{-1}m^{-\frac{2}{3}}\big(4\partial_x-\partial_x^3\big)
m^{-\frac{2}{3}}=2m^{-\frac{1}{3}}\partial_y^{-1}\bar{\mathcal{K}}.
\end{gather*}
It follows that the relation \eqref{l3.3-1} can be written in an equivalent form, namely
\begin{gather}\label{l3.3-1'}
\frac{\delta \mathcal{H}_{n}}{\delta m}=3\mathcal{K}^{-1}m^{\frac{1}{3}}\partial_x\frac{\delta \bar{ \mathcal{H}}_{-(n+2)}}{\delta Q}=6m^{-\frac{1}{3}}\partial_y^{-1}\bar{\mathcal{K}}\frac{\delta \bar{ \mathcal{H}}_{-(n+2)}}{\delta Q}.
\end{gather}
Therefore, subject to the hypothesis of Lemma \ref{lem2.4}, if we def\/ine the functional
\begin{gather*}
\mathcal{G}_{l}(Q)\equiv \mathcal{H}_{n}(m),
\end{gather*}
for some $l \in \mathbb{Z}$, then Lemma \ref{lem2.4} allows us to conclude that, for each $n\in \mathbb{Z}$,
\begin{gather*}
\frac{\delta \mathcal{H}_{n}}{\delta m}= \frac{1}{3}m^{-\frac{1}{3}} \partial_y^{-1}\bar{\mathcal{K}}\frac{\delta \mathcal{G}_{l}}{\delta Q}.
\end{gather*}
This, when combined with \eqref{l3.3-1'}, immediately leads to
\begin{gather*} \mathcal{G}_{l}(Q)=18 \bar{\mathcal{H}}_{-(n+2)}(Q),\end{gather*}
and then
\begin{gather*}
\mathcal{H}_{n}(m)=18 \bar{\mathcal{H}}_{-(n+2)}(Q)
\end{gather*}
follows. We thus conclude that there exsits an one-to-one correspondence between the sequences of the Hamiltonian conservation laws admitted by the Novikov and SK equations.

Indeed, we have proved the following theorem.

\begin{Theorem}\label{t2}
Under the Liouville transformation \eqref{recixy-nov}, for each $n\in \mathbb{Z}$, the Hamiltonian conservation law $\bar{\mathcal{H}}_n(Q)$ of the SK equation is related to the Hamiltonian conservation law $\mathcal{H}_{-n}(m)$ of the Novikov equation, according to the following identity
\begin{gather*}
\mathcal{H}_{n}(m)=18 \bar{\mathcal{H}}_{-(n+2)}(Q), \qquad n\in \mathbb{Z}.
\end{gather*}
\end{Theorem}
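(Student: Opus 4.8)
The plan is to establish the identity $\mathcal{H}_{n}(m)=18\bar{\mathcal{H}}_{-(n+2)}(Q)$ by combining the two lemmas proved just above --- the relation \eqref{l3.3-1} (equivalently \eqref{l3.3-1'}) between the variational derivatives of the two hierarchies, and the change-of-variables formula \eqref{vdqm} from Lemma~\ref{lem2.4}. First I would observe that both formulas express $\delta\mathcal{H}/\delta m$ (for the relevant member of the Novikov hierarchy) in terms of $m^{-1/3}\partial_y^{-1}\bar{\mathcal{K}}$ applied to a variational derivative in $Q$: on the one hand, rewriting \eqref{l3.3-1} using the computed operator identity $\mathcal{K}^{-1}m^{1/3}\partial_x=2m^{-1/3}\partial_y^{-1}\bar{\mathcal{K}}$ gives
\begin{gather*}
\frac{\delta \mathcal{H}_{n}}{\delta m}=6\,m^{-\frac13}\partial_y^{-1}\bar{\mathcal{K}}\,\frac{\delta \bar{\mathcal{H}}_{-(n+2)}}{\delta Q};
\end{gather*}
on the other hand, defining a functional $\mathcal{G}_l(Q)\equiv \mathcal{H}_n(m)$ via the transformation \eqref{recixy-nov} and applying Lemma~\ref{lem2.4} yields
\begin{gather*}
\frac{\delta \mathcal{H}_{n}}{\delta m}=\frac13\,m^{-\frac13}\partial_y^{-1}\bar{\mathcal{K}}\,\frac{\delta \mathcal{G}_{l}}{\delta Q}.
\end{gather*}

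Next I would equate the two right-hand sides. Since $m^{-1/3}\partial_y^{-1}\bar{\mathcal{K}}$ is a fixed (injective on the relevant space of variational gradients) operator, comparing coefficients forces $\delta\mathcal{G}_l/\delta Q=18\,\delta\bar{\mathcal{H}}_{-(n+2)}/\delta Q$, and hence $\mathcal{G}_l(Q)=18\,\bar{\mathcal{H}}_{-(n+2)}(Q)$ up to an additive constant, which may be normalized away (or absorbed, since Hamiltonian functionals are only defined modulo constants/trivial terms). Unwinding the definition $\mathcal{G}_l(Q)=\mathcal{H}_n(m)$ then gives exactly $\mathcal{H}_n(m)=18\,\bar{\mathcal{H}}_{-(n+2)}(Q)$ for each $n\in\mathbb{Z}$. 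One should check the base index as a sanity test: for $n=-2$ this reads $\mathcal{H}_{-2}(m)=18\,\bar{\mathcal{H}}_0(Q)$, which can be verified directly from the explicit formula \eqref{sk-H0} for $\bar{\mathcal{H}}_0$ and the known expression for $\mathcal{H}_{-2}$, providing a useful consistency check even though the structural argument already covers all $n$.

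The main obstacle is the cancellation/injectivity step: from equality of $m^{-1/3}\partial_y^{-1}\bar{\mathcal{K}}\,(\delta/\delta Q)$ applied to two functionals, one must legitimately conclude equality of the variational derivatives themselves, and then pass from equal variational derivatives to functionals that agree up to a constant. The operator $\bar{\mathcal{K}}=-(\partial_y^3+2Q\partial_y+2\partial_y Q)$ has a nontrivial kernel, so a priori one only gets that $\delta\mathcal{G}_l/\delta Q$ and $18\,\delta\bar{\mathcal{H}}_{-(n+2)}/\delta Q$ differ by an element of $\ker\bar{\mathcal{K}}$; I would argue this difference must vanish because both are genuine variational gradients lying in the image of the recursion structure (they are built by applying $\bar{\mathcal{J}}\bar{\mathcal{K}}$ repeatedly starting from $\delta\bar{\mathcal{H}}_0/\delta Q$, cf. \eqref{skclhie}), and the decay/normalization conventions for the conserved densities eliminate the residual kernel terms. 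Passing from equal variational derivatives to functionals agreeing up to a constant is standard (the null space of $\delta/\delta Q$ consists of total derivatives), and the constant is irrelevant for conservation laws; so the only genuinely delicate point is the kernel bookkeeping for $\bar{\mathcal{K}}$, which I expect to handle by appealing to the explicit recursive construction of the $\bar{\mathcal{H}}_n$ together with the already-established correspondences of Theorem~\ref{t1}.
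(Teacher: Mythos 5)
Your proposal follows essentially the same route as the paper: it combines the variational-derivative relation \eqref{l3.3-1}, rewritten via $\mathcal{K}^{-1}m^{1/3}\partial_x=2m^{-1/3}\partial_y^{-1}\bar{\mathcal{K}}$ into \eqref{l3.3-1'}, with the change-of-variables formula \eqref{vdqm} applied to $\mathcal{G}_l(Q)\equiv\mathcal{H}_n(m)$, and equates the two expressions to get $\mathcal{G}_l=18\,\bar{\mathcal{H}}_{-(n+2)}$, exactly as in the text, including the explicit check $\mathcal{H}_{-2}(m)=18\,\bar{\mathcal{H}}_0(Q)$. Your extra care about the kernel of $m^{-1/3}\partial_y^{-1}\bar{\mathcal{K}}$ and the additive constant is a point the paper passes over silently, so it is a welcome refinement rather than a deviation.
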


For instance, in the case of $n=2$,
\begin{gather*}\label{cl-rela}
\frac{\delta \mathcal{H}_{-2}}{\delta m}=\mathcal{K}^{-1} \mathcal{J}\frac{\delta \mathcal{H}_{-1}}{\delta m},
\end{gather*}
which can be expressed in terms of $Q$ according to \eqref{reciqm1}, say
\begin{gather*}
\frac{\delta \mathcal{H}_{-2}}{\delta m}
=6m^{-\frac{1}{3}}\partial_y^{-1}\bar{\mathcal{K}}\left(\frac{1}{2}Q^2+Q_{yy}\right)=-6m^{-\frac{1}{3}}\left( Q_{yyyy}+5QQ_{yy}+\frac{5}{3}Q^3\right).
\end{gather*}
As a consequence,
\begin{gather*}
\mathcal{H}_{-2}(m)=9\int m^{\frac{2}{3}} \left( \frac{1}{5}Q_{yyyy}+QQ_{yy}+\frac{1}{3}Q^3 \right) \mathrm{d}x=3\int \big( Q^3-3Q_y^2\big) \mathrm{d}y,
\end{gather*}
with $Q$ being determined by \eqref{recixy-nov} and \eqref{reciqm1}, which, when compared with \eqref{sk-H0}, shows that $\mathcal{H}_{-2}(m)=18\bar{\mathcal{H}}_{0}(Q)$, in accordance with Theorem~\ref{t2}.

\section{The correspondence between the DP and KK hierarchies}\label{section3}

\subsection[A Liouville transformation between the isospectral problems of the DP and KK hierarchies]{A Liouville transformation between the isospectral problems\\ of the DP and KK hierarchies}\label{section3.1}

The Lax pair for the DP equation
\begin{gather}\label{dp}
n_t=vn_x+3v_xn,\qquad n=v-v_{xx},
\end{gather}
takes the form \cite{hw2}
\begin{gather}\label{iso-dp}
\mathbf{\Psi}_x =\begin{pmatrix} 0 &1 & 0\\
 0 &0 & 1\\
 -\lambda n &1 & 0
 \end{pmatrix} \mathbf{\Psi},\qquad
\mathbf{\Psi}=\begin{pmatrix}\psi_1 \\ \psi_2 \\ \psi_3\end{pmatrix},
\end{gather}
and \begin{gather*}
\mathbf{\Psi}_t
=\begin{pmatrix} v_x &-v & -\lambda^{-1}\\
 v &-\lambda^{-1} & -v\\
 \lambda vn+v_x & 0 & -\lambda^{-1}-v_x
 \end{pmatrix}\mathbf{\Psi}.
\end{gather*}
These can be rewritten in scalar form by setting $\Psi=\psi_1$, namely
\begin{gather*}
\Psi_{xxx}-\Psi_{x} +\lambda n \Psi=0, \qquad \Psi_{t}+\lambda^{-1}\Psi_{xx} +v\Psi_{x}-v_x \Psi=0.
\end{gather*}

Consider the KK equation
\begin{gather}\label{kk}
P_\tau+P_{yyyyy}+20PP_{yyy}+50P_yP_{yy}+80P^2P_y=0.
\end{gather}
It has been shown in \cite{hw2} that the Lax pair for the f\/irst negative f\/low of its associated hierarchy is
 \begin{gather}\label{iso-kk}
\Phi_{yyy}+4P\Phi_{y}+2P_y\Phi=\mu \Phi
\end{gather}and
\begin{gather}\label{laxt-kk}\Phi_{\tau}+\mu^{-1}\left( U\Phi_{yy}-\frac{1}{2}U_y\Phi_{y} +\frac{1}{6}(U_{yy}+16PU)\Phi\right)=0,
\end{gather}
which is a reduction of a $(2+1)$-dimensional non-isospectral Lax pair given in \cite{gp}. The compatibility condition for (\ref{iso-kk}) and (\ref{laxt-kk}) gives rise to
\begin{gather}\label{kk--1'}
P_\tau=\frac{3}{4}U_{y},\qquad \mathcal{A} U=0,
\end{gather}
where $\mathcal{A}$ is the f\/ifth-order operator
\begin{gather*}
\mathcal{A}=\partial_y^5+6\big(\partial_y P \partial_y^2+\partial_y^2 P \partial_y\big)+4\big(\partial_y^3 P+P \partial_y^3\big)+32\big(\partial_y P^2+P^2 \partial_y\big).
\end{gather*}

In analogy with the Liouville correspondence between the Novikov equation and the f\/irst negative f\/low of the SK hierarchy, there exists a similar correspondence between the DP equation and the f\/irst negative f\/low of the KK hierarchy. In fact, it has been found \cite{dhh} that the following coordinate transformations
\begin{gather*}
\mathrm{d} y=n^{\frac{1}{3}} \mathrm{d}x+n^{\frac{1}{3}} v^2 \mathrm{d}t, \qquad \mathrm{d} \tau=\mathrm{d} t,
\end{gather*}
together with
\begin{gather*}
\Psi=n^{-\frac{1}{3}}\Phi,\qquad \lambda=-\mu, \qquad P=\frac{1}{4}\left(\frac{7}{9}n^{-\frac{8}{3}}n_x^2-\frac{2}{3}n^{-\frac{5}{3}}n_{xx}-n^{-\frac{2}{3}}\right)
\end{gather*}
will convert the scalar form of the isospectral problem (\ref{iso-dp}) into (\ref{iso-kk}).

As before, in this section we investigate the Liouville correspondence between the DP and KK hierarchies. More precisely, the respective f\/lows in the two hierarchies are related by the Liouville transformations
\begin{gather}\label{recixy-dp}
y=\int^x n^{\frac{1}{3}}(t, \xi) \mathrm{d}\xi, \qquad \tau=t,
\end{gather}
and
\begin{gather}\label{recipn}
P=\frac{1}{4}\left(\frac{7}{9}n^{-\frac{8}{3}}n_x^2-\frac{2}{3}n^{-\frac{5}{3}}n_{xx}-n^{-\frac{2}{3}}\right)
=\frac{1}{4}n^{-\frac{1}{2}}\big(4\partial_x^2-1\big) n^{-\frac{1}{6}}.
\end{gather} In addition, the relationship between the Hamiltonian conservation laws for the DP hierarchy and those for the KK hierarchy is also clarif\/ied.

\subsection{The correspondence between the DP and KK hierarchies}\label{section3.2}

The DP equation \eqref{dp} is also a bi-Hamiltonian system \cite{dhh}
\begin{gather*}
n_t=G_1=\mathcal{L} \frac{\delta \mathcal{E}_0}{\delta n}=\mathcal{D} \frac{\delta \mathcal{E}_1}{\delta n},\qquad n=v-v_{xx},
\end{gather*}
where
\begin{gather}\label{dp-ld}
\mathcal{L}=n^{\frac{2}{3}} \partial_x n^{\frac{1}{3}} \big(\partial_x-\partial_x^3\big)^{-1} n^{\frac{1}{3}} \partial_x n^{\frac{2}{3}}, \qquad
\mathcal{D}=\partial_x \big(1-\partial_x^2\big) \big(4-\partial_x^2\big)
\end{gather}
are a pair of compatible Hamiltonian operators, and the corresponding Hamiltonian functionals are
\begin{gather*}\label{dp-e0}
\mathcal{E}_0=\frac{9}{2}\int n \mathrm{d}x,\qquad \mathcal{E}_1= \frac{1}{6}\int u^3 \mathrm{d}x.
\end{gather*}

Applying the recursion operator $\widetilde{\mathcal{R}}=\mathcal{L}\mathcal{D}^{-1}$ successively to the initial symmetry $n_t=G_1$ gives rise to an inf\/inite hierarchy
\begin{gather*}
n_t=G_l=\mathcal{L} \frac{\delta \mathcal{E}_{l-1}}{\delta n}=\mathcal{D} \frac{\delta \mathcal{E}_l}{\delta n},\qquad l\in \mathbb{Z},
\end{gather*}
of commuting bi-Hamiltonian f\/lows and consequent conservation laws $\mathcal{E}_{l}$. As far as the associated negative f\/lows are concerned, noting that \begin{gather*}
G_0=\mathcal{D} \frac{\delta \mathcal{E}_0}{\delta n}=0,
\end{gather*}
and $\mathcal{L}$ admits the Casimir functional
\begin{gather*}
\mathcal{E}_C=18 \int n^{\frac{1}{3}} \mathrm{d}x\qquad \hbox{with}\qquad \frac{\delta \mathcal{E}_C}{\delta n}=6n^{-\frac{2}{3}}.
\end{gather*}
Therefore, we conclude that the f\/irst negative f\/low of the DP hierarchy is the Casimir equation
\begin{gather*}
n_t=G_{-1}=\mathcal{D} \frac{\delta \mathcal{E}_{C}}{\delta n}=6\mathcal{D} n^{-\frac{2}{3}}
\end{gather*}
and applying $\widetilde{\mathcal{R}}^{-1}=\mathcal{D}\mathcal{L}^{-1}$ successively to it produces the hierarchy of negative f\/lows, in which the $l$-th member takes the form
\begin{gather*}
n_t=G_{-l}=6\big( \mathcal{D}\mathcal{L}^{-1}\big)^{l-1} \mathcal{D} n^{-\frac{2}{3}},\qquad l=1, 2,\ldots.
\end{gather*}

Analogous to the SK hierarchy, the integrable hierarchy of the KK equation also arises from a generalized bi-Hamiltonian structure, the f\/low is governed by $P_\tau=\bar{G}_l[P]$, where $\bar{G}_l[P]$ are determined by the relations
\begin{gather*}
\bar{G}_l[P]=\bar{\mathcal{L}}\frac{\delta \bar{\mathcal{E}}_{l-1}}{\delta P} \qquad \hbox{and}\qquad \bar{\mathcal{D}}\bar{G}_l[P]=\frac{\delta \bar{\mathcal{E}}_l}{\delta P},\qquad l\in \mathbb{Z},
\end{gather*}
with
\begin{gather}
\begin{split}
& \bar{\mathcal{L}}= -\big( \partial_y^3+2 P \partial_y+2 \partial_y P\big), \\
& \bar{\mathcal{D}}= \partial_y^3+6(P \partial_y+\partial_y P)+4\big(\partial_y^2 P \partial_y^{-1}+\partial_y^{-1} P\partial_y^2\big)+32\big(P^2  \partial_y^{-1}+\partial_y^{-1} P^2\big),\end{split}\label{kk-op-l}
\end{gather}
and $\widehat{\mathcal{R}}=\bar{\mathcal{L}}\bar{\mathcal{D}}$ is the consequent recursion operator. It is easy to see that the KK equation \eqref{kk} in this hierarchy is exactly
\begin{gather*}
P_\tau=\bar{G}_1[P]=\bar{\mathcal{L}}\frac{\delta \bar{\mathcal{E}}_{0}}{\delta P}=\bar{\mathcal{L}}\big(P_{yy}+8P^2\big),
\end{gather*}
with the corresponding Hamiltonian functional
\begin{gather*}
\bar{\mathcal{E}}_0= \int \left( \frac{8}{3}P^3-\frac{1}{2}P_y^2\right) \mathrm{d}y.
\end{gather*}

Similarly, if we use the fact that $\bar{\mathcal{D}}\cdot0=\delta \bar{\mathcal{E}}_{0}/ \delta P$, we may conclude that the negative f\/lows of the KK hierarchy take the form
\begin{gather} \label{kk--n}
\big(\bar{\mathcal{L}}\bar{\mathcal{D}}\big)^l P_\tau=0,\qquad l=1, 2,\ldots.
\end{gather}
It is worth noting that since $\bar{\mathcal{D}}=\partial_y^{-1}\mathcal{A}\partial_y^{-1}$, so the equation \eqref{kk--1'} arising from the compatibility condition of the Lax pair \eqref{iso-kk} and \eqref{laxt-kk} is a reduction of the f\/irst negative f\/low $\bar{\mathcal{L}}\bar{\mathcal{D}}P_\tau=0$.

As before, we hereafter denote, for a positive integer $l$, the $l$-th equation in the positive and negative directions of the DP hierarchy by (DP)$_l$ and (DP)$_{-l}$, respectively, while the $l$-th positive and negative f\/lows of the KK hierarchy by (KK)$_{l}$ and (KK)$_{-l}$, respectively. With this notation, we state the main theorem on the Liouville correspondence between the DP and KK hierarchies as follows.

\begin{Theorem}\label{t3}
Under the Liouville transformations \eqref{recixy-dp} and \eqref{recipn}, for each nonzero integer $0 \ne l \in \mathbb{Z}$, the {\rm(}DP{\rm)}$_{l}$ equation is mapped into the {\rm(}KK{\rm)}$_{-l}$ equation, and conversely.
\end{Theorem}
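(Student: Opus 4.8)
The plan is to mirror, step by step, the proof of Theorem \ref{t1}, replacing the Novikov--SK pair by the DP--KK pair and keeping track of the different fractional powers of $n$ (namely $n^{1/3}$, $n^{1/6}$, $n^{2/3}$, \dots) that appear in \eqref{recixy-dp}, \eqref{recipn}, and in the Hamiltonian operators \eqref{dp-ld}, \eqref{kk-op-l}. First I would establish the DP--KK analogues of Lemma \ref{l3.1}, i.e., a package of operator identities relating the building blocks of $\mathcal{L}$, $\mathcal{D}$ to those of $\bar{\mathcal{L}}$, $\bar{\mathcal{D}}$ under the change of variables $\partial_x = n^{1/3}\partial_y$. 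The key identity to target is the factorization hidden in \eqref{recipn}, namely $n^{-1/2}(4\partial_x^2-1)n^{-1/6} = 4(\partial_y^2 + \text{(multiple of }P))$ or the appropriate normalization, from which one reads off $\mathcal{D} = \partial_x(1-\partial_x^2)(4-\partial_x^2)$ conjugated by the correct powers of $n$ equals (a multiple of) $\partial_y \bar{\mathcal{D}} \partial_y$, and that $\mathcal{L}$ conjugated similarly equals (a multiple of) $\bar{\mathcal{L}}$. These are the exact counterparts of \eqref{op1}--\eqref{op3}, and their proofs are direct — if lengthy — computations using $\partial_x = n^{1/3}\partial_y$ and the formula for $P$ in \eqref{recipn}.

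Next I would prove the DP--KK analogue of Lemma \ref{lem2.2}: under \eqref{recixy-dp}--\eqref{recipn}, a suitably conjugated power of $\mathcal{D}\mathcal{L}^{-1}$ equals the corresponding power of $\bar{\mathcal{D}}\bar{\mathcal{L}}$ (up to an explicit power of $n$ and factors of $\partial_y$), by induction on $l$, the base case $l=1$ being immediate from the operator identities just established. With that in hand, the proof of Theorem \ref{t3} splits into the same two parts as Theorem \ref{t1}. For part (i) — mapping (DP)$_{-l}$ to (KK)$_l$ — I would write the first negative DP flow $n_t = G_{-1} = 6\mathcal{D} n^{-2/3}$ explicitly in the $y$-variable using the factorization of $\mathcal{D}$, observe it becomes a multiple of $n^{1/3}\partial_y(\delta\bar{\mathcal{E}}_0/\delta P)$, compute $P_\tau$ in terms of $n_t$ exactly as in \eqref{t3.1-1} (using the analogue of \eqref{op1} and the chain rule $\partial_t = \partial_\tau + n^{1/3}v^2\partial_y$), and then apply the conjugation lemma to conclude that $Q(\tau,y)$ — here $P(\tau,y)$ — solves (KK)$_l$. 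For part (ii) — mapping (DP)$_l$ to (KK)$_{-l}$ — the crucial extra ingredient is a \emph{factorization of the KK recursion operator} $\widehat{\mathcal{R}} = \bar{\mathcal{L}}\bar{\mathcal{D}}$ analogous to \eqref{sk-ro-1}: one needs $\widehat{\mathcal{R}} = \mathcal{C}\,(\partial_y^2 + \text{lower order involving }P)$ for some operator $\mathcal{C}$, so that applying $\widehat{\mathcal{R}}^l$ to $P_\tau$ telescopes — using the conjugation lemma and the operator identities — down to $\mathcal{C}\cdots\partial_y(\partial_y^2 + cP)\,n^{1/6}v$, which collapses to $\mathcal{C}\cdots\partial_y \cdot 1 = 0$ by \eqref{recipn} (since $(4\partial_x^2-1)n^{-1/6}$ times the right prefactor is $4n^{1/2}(\partial_y^2 + cP)$, hence $(\partial_y^2+cP)n^{1/6}v \cdot(\text{something}) $ annihilates constants). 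The reverse directions follow in both parts because \eqref{recixy-dp}--\eqref{recipn} is a bijection, exactly as in Theorem \ref{t1}.

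The main obstacle I anticipate is part (ii): finding and verifying the correct factorization of the KK recursion operator $\bar{\mathcal{L}}\bar{\mathcal{D}}$ through the second-order operator $\partial_y^2 + cP$ (the KK analogue of the factor $\partial_y^2 + Q + Q_y\partial_y^{-1}$ in \eqref{sk-ro-1}), since $\bar{\mathcal{D}}$ in \eqref{kk-op-l} is more complicated than $\bar{\mathcal{J}}$ in \eqref{sk-op-j} — it has the extra sextic term $32(P^2\partial_y^{-1} + \partial_y^{-1}P^2)$ and different numerical coefficients coming from the $4P\Phi_y + 2P_y\Phi$ structure of the KK spectral problem \eqref{iso-kk}. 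One must check that the nonlocal tails $\partial_y^{-1}$ combine exactly so that $\bar{\mathcal{L}}\bar{\mathcal{D}} = (\text{4th-order local operator})(\partial_y^2 + cP + (cP)_y\partial_y^{-1})$, mirroring the SK identity of \cite{cq}; this is the one genuinely non-routine algebraic verification, and it is what makes the identification of the transformed positive DP flows with the \emph{negative} KK flows \eqref{kk--n} possible. Everything else — the operator identities of the Lemma \ref{l3.1} analogue, the induction of the Lemma \ref{lem2.2} analogue, and the bookkeeping of fractional powers of $n$ — is conceptually the same as in Section \ref{section2.2}, just with different exponents.
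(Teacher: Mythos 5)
Your scaffolding up to part (ii) is essentially the paper's: the operator identities you plan to prove are exactly \eqref{op4}--\eqref{op6} (Lemma \ref{l4.1}), your conjugation statement is Lemma \ref{lem3.2}, the chain-rule computation yields \eqref{t4-1}, and part (i) together with the bijectivity argument for the converse proceeds as you describe. The genuine gap is in part (ii), where you declare that a factorization of $\widehat{\mathcal{R}}=\bar{\mathcal{L}}\bar{\mathcal{D}}$ through a second-order factor $\partial_y^2+cP+(cP)_y\partial_y^{-1}$, analogous to \eqref{sk-ro-1}, is the crucial ingredient, and that the computation terminates because $(\partial_y^2+cP)$ applied to $n^{1/6}v$ produces a constant. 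That terminal identity is false: by \eqref{op4}, $\big(P+\partial_y^2\big)\big(n^{\frac16}v\big)=-n^{-\frac12}\big(\tfrac14-\partial_x^2\big)v=-n^{-\frac12}\big(n-\tfrac34 v\big)$, which is not constant, because the operator hidden in \eqref{recipn} is $\tfrac14-\partial_x^2$ rather than the $1-\partial_x^2$ that made $(\partial_y^2+Q)m^{\frac13}u=-1$ work in the Novikov--SK setting. So the telescoping you sketch does not close, and in addition it rests on a KK factorization that you have not produced (and which the paper never uses).

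The structural point you are missing is that no such factorization is needed here: the seed of the positive DP hierarchy is the \emph{constant} $\delta\mathcal{E}_0/\delta n=\tfrac92$ (in contrast to $\delta\mathcal{H}_0/\delta m=18u$ for Novikov), so $G_l=(\mathcal{L}\mathcal{D}^{-1})^{l-1}\mathcal{L}\cdot\tfrac92$. Plugging this into \eqref{t4-1} and applying Lemma \ref{lem3.2} with the \emph{full} power $l$ (rather than $l-1$) leaves exactly one uncancelled factor $\mathcal{D}\mathcal{L}^{-1}$:
\begin{gather*}
\big(\bar{\mathcal{L}}\bar{\mathcal{D}}\big)^{l}P_\tau=\frac34\,\bar{\mathcal{L}}\big(\bar{\mathcal{D}}\bar{\mathcal{L}}\big)^{l}\partial_y^{-1}n^{-1}\big(\mathcal{L}\mathcal{D}^{-1}\big)^{l-1}\mathcal{L}\cdot 1
=\frac34\,\bar{\mathcal{L}}\partial_y^{-1}n^{-1}\big(\mathcal{D}\mathcal{L}^{-1}\big)^{l}\big(\mathcal{L}\mathcal{D}^{-1}\big)^{l-1}\mathcal{L}\cdot 1\\
\hphantom{\big(\bar{\mathcal{L}}\bar{\mathcal{D}}\big)^{l}P_\tau}{}=\frac34\,\bar{\mathcal{L}}\partial_y^{-1}n^{-1}\mathcal{D}\cdot 1=0,
\end{gather*}
since $\mathcal{D}=\partial_x\big(1-\partial_x^2\big)\big(4-\partial_x^2\big)$ is purely local and annihilates constants. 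This is how the paper identifies the image of (DP)$_l$ with the negative flow \eqref{kk--n}; the asymmetry with Section \ref{section2.2}, where the factorization \eqref{sk-ro-1} really was indispensable, stems precisely from this difference in the seeds. By importing the SK mechanism wholesale you both manufacture an unnecessary obstacle and leave the actual closing step of part (ii) unproved.
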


The proof of this theorem is based on the following two preliminary lemmas, which clarify the relations between certain operators.

\begin{Lemma}\label{l4.1}
Let $n(t, x)$ and $P(\tau, y)$ be related by the transformations \eqref{recixy-dp} and \eqref{recipn}. Then the following identities hold:
\begin{gather}
n^{-\frac{1}{2}} \left(\frac{1}{4}-\partial_x^2\right) n^{-\frac{1}{6}}=-\big(P+\partial_y^2\big), \label{op4}\\
n^{-\frac{2}{3}} \big(\partial_x-\partial_x^3\big) n^{-\frac{1}{3}}=\bar{\mathcal{L}}, \label{op5}\\
n^{-1} \mathcal{D} n^{-\frac{2}{3}}=\partial_y \bar{\mathcal{D}} \partial_y.\label{op6}
\end{gather}
\end{Lemma}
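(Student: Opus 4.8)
The plan is to mimic exactly the argument used for Lemma~\ref{l3.1} in the Novikov--SK setting, since the structure of the DP--KK case is completely parallel, with the reciprocal transformation $\partial_x = n^{1/3}\partial_y$ replacing $\partial_x = m^{2/3}\partial_y$. The three identities \eqref{op4}--\eqref{op6} should be established in sequence, each building on the previous one.

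\textbf{Step 1: the second-order identity \eqref{op4}.} First I would compute $\partial_x^2 n^{-1/6}$ using the substitution rule $\partial_x = n^{1/3}\partial_y$, obtaining $\partial_x^2 n^{-1/6} = n^{1/3}\partial_y n^{1/3}\partial_y$. Expanding this yields a term $n\cdot n^{-2/3}\partial_y^2 = n^{1/3}\cdot n^{1/3}\partial_y^2$ plus lower-order terms; more precisely I expect $\partial_x^2 n^{-1/6} = n\,\partial_y^2(\text{with appropriate coefficient}) + (n^{-1/6})_{xx}$ after reorganizing. The key subcomputation is to evaluate $(n^{-1/6})_{xx}$ directly from \eqref{recipn} rewritten as the scalar identity $P = \tfrac14(\tfrac79 n^{-8/3}n_x^2 - \tfrac23 n^{-5/3}n_{xx} - n^{-2/3})$; solving this for $n_{xx}$ and substituting should collapse $(n^{-1/6})_{xx}$ into a clean multiple of $n^{-1/6}(n^{-2/3}+P)$ or similar, exactly as $(m^{-1/3})_{xx} = m(m^{-4/3}+Q)$ appeared before. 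Then \eqref{op4} follows by forming $n^{-1/2}(\tfrac14 - \partial_x^2)n^{-1/6}$ and watching the $n^{-1/6}$ and $Pn^{-1/6}$ pieces combine with the $\tfrac14$-term to leave exactly $-(P+\partial_y^2)$.

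\textbf{Step 2: the operator \eqref{op5}.} Here I would compute $\partial_x^2 n^{-1/3}$ via $\partial_x = n^{1/3}\partial_y$, expand fully into a second-order differential operator in $\partial_y$ with coefficients expressed through $n$, $n_x$, and $P$ (again using the scalar form of \eqref{recipn} to eliminate $n_{xx}$), and then assemble $n^{-2/3}(\partial_x - \partial_x^3)n^{-1/3} = n^{-2/3}\partial_x(1-\partial_x^2)n^{-1/3}$. After pushing all factors of $n$ through and collecting by powers of $\partial_y$, the zeroth-, first-, and second-order coefficients should reorganize into $-(\partial_y^3 + 2P\partial_y + 2\partial_y P) = \bar{\mathcal{L}}$; the check that the $\partial_y$ and $\partial_y^0$ coefficients match $2P\partial_y + 2P_y$ is where the scalar identity for $P$ is used decisively (it converts expressions in $n_x^2, n_{xx}$ into $P$ and $P_y$).

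\textbf{Step 3: the symplectic-operator identity \eqref{op6}.} This one I would try to deduce algebraically from \eqref{op4}, in the same way \eqref{op2} was obtained from \eqref{op1}: write $\mathcal{D} = \partial_x(1-\partial_x^2)(4-\partial_x^2) = 4\partial_x(1-\partial_x^2)(1-\tfrac14\partial_x^2)$, factor out the relevant powers of $n$, and insert \eqref{op4} twice (for the $(1-\partial_x^2)$ and $(4-\partial_x^2)$ blocks after rescaling). Conjugating $(P+\partial_y^2)$ by $\partial_y$ on either side should produce precisely the combination $\partial_y^3 + 6(P\partial_y+\partial_y P) + 4(\partial_y^2 P\partial_y^{-1} + \partial_y^{-1}P\partial_y^2) + 32(P^2\partial_y^{-1}+\partial_y^{-1}P^2) = \bar{\mathcal{D}}$ after multiplying by $\partial_y$ on the left and right; matching the numerical coefficients $6$, $4$, $32$ against those arising from the $(4-\partial_x^2)$ and $(1-\partial_x^2)$ factors is the bookkeeping that must come out right.

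\textbf{Main obstacle.} I expect the principal difficulty to be purely computational rather than conceptual: the substitution $\partial_x = n^{1/3}\partial_y$ generates a proliferation of terms involving $n_x$, $n_{xx}$, and fractional powers of $n$, and the identities only close up after repeatedly invoking the scalar form of \eqref{recipn} to trade $n_{xx}$ (and combinations of $n_x^2$) for $P$ and $P_y$. Keeping track of the fractional exponents and the $\partial_y^{-1}$ tails in \eqref{op6} is the most error-prone part; the safeguard is that each identity has a rigid target (a specific low-order operator), so any surviving spurious term signals an arithmetic slip.
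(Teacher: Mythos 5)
Your Steps 1 and 2 are essentially the paper's own argument for \eqref{op4} and \eqref{op5} (the paper writes $\chi=n^{1/3}$, but the substance --- expanding via $\partial_x=n^{1/3}\partial_y$ and using the scalar form of \eqref{recipn} to eliminate $n_{xx}$ --- is the same). The gap is in Step 3. The identity \eqref{op6} cannot be obtained by ``inserting \eqref{op4} twice after rescaling'' into the factors $(1-\partial_x^2)$ and $(4-\partial_x^2)$ of $\mathcal{D}$. A multiplicative rescaling changes the constant term and the $\partial_x^2$ term by the same factor, so neither $1-\partial_x^2$ nor $4-\partial_x^2$ is a multiple of $\frac14-\partial_x^2$; concretely, repeating your Step 1 computation gives $n^{-\frac12}\big(1-\partial_x^2\big)n^{-\frac16}=\frac34 n^{-\frac23}-\big(P+\partial_y^2\big)$ and $n^{-\frac12}\big(4-\partial_x^2\big)n^{-\frac16}=\frac{15}{4}n^{-\frac23}-\big(P+\partial_y^2\big)$, so these blocks retain explicit $n$-dependence and never collapse to clean $(P+\partial_y^2)$ factors. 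The obstruction is also visible on the target side: the Novikov--SK shortcut worked because of the exact factorization $\frac12\partial_y\bar{\mathcal{J}}\partial_y=(Q+\partial_y^2)\partial_y(Q+\partial_y^2)$, whereas $\partial_y\bar{\mathcal{D}}\partial_y=\partial_y^5+6\big(\partial_yP\partial_y^2+\partial_y^2P\partial_y\big)+4\big(\partial_y^3P+P\partial_y^3\big)+32\big(\partial_yP^2+P^2\partial_y\big)$ admits no such factorization: any product $\big(\alpha P+\partial_y^2\big)\partial_y\big(\alpha P+\partial_y^2\big)$ yields the coefficient pattern $\big(\alpha,\,0,\,\alpha^2/2\big)$ on the three bracketed groups, which can never match $(4,6,32)$. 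So the ``bookkeeping'' you defer to the end cannot come out right by this route.

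What is actually needed (and what the paper does) is to use \eqref{op5}, not \eqref{op4}: since $n^{-1}\partial_x\big(1-\partial_x^2\big)=n^{-\frac13}\,\bar{\mathcal{L}}\,n^{\frac13}$ by \eqref{op5}, one has $n^{-1}\mathcal{D}n^{-\frac23}=n^{-\frac13}\,\bar{\mathcal{L}}\,n^{\frac13}\big(4-\partial_x^2\big)n^{-\frac23}$, and the remaining factor is \emph{not} expressible in $P$ alone --- it equals $-16\chi P-6\chi_{yy}+3\chi_y\partial_y-\chi\partial_y^2$ with $\chi=n^{1/3}$. Only after composing with $n^{-1/3}\bar{\mathcal{L}}$ and repeatedly invoking \eqref{recipn} does all residual $\chi$-dependence cancel, leaving exactly $4\big(16PP_y+P_{yyy}\big)+2\big(32P^2+9P_{yy}\big)\partial_y+30P_y\partial_y^2+20P\partial_y^3+\partial_y^5=\partial_y\bar{\mathcal{D}}\partial_y$. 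That direct expansion is unavoidable; your proposal replaces it with an algebraic shortcut that does not exist in the DP--KK setting.
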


\begin{proof}
{\textbf{(i)}}.\quad Def\/ine $\chi=n^{\frac{1}{3}}$, so from \eqref{recixy-dp} and \eqref{recipn}, we have $\partial_x=\chi \partial_y$ and \begin{gather}\label{recips}
P=\frac{1}{4}\chi^{-2}\chi_y^2-\frac{1}{2}\chi^{-1}\chi_{yy}-\frac{1}{4}\chi^{-2}.
\end{gather}
And then, a direct calculation shows that
\begin{gather*}
\partial_x^2 \chi^{-\frac{1}{2}}=\chi \partial_y \chi \partial_y \chi^{-\frac{1}{2}}=\chi^{\frac{3}{2}} \partial_y^2-\frac{1}{2}\chi \big( \chi^{-\frac{1}{2}} \chi_y\big)_y,
\end{gather*}
where, by \eqref{recips}
\begin{gather*}
\chi \big( \chi^{-\frac{1}{2}} \chi_y\big)_y=-\frac{1}{2} \chi^{-\frac{1}{2}}-2 \chi^{\frac{3}{2}} P.
\end{gather*}
We thus have
\begin{gather*}
\partial_x^2 \chi^{-\frac{1}{2}}=\frac{1}{4} \chi^{-\frac{1}{2}}+\chi^{\frac{3}{2}} \big(P+\partial_y^2\big),
\end{gather*}
which immediately leads to
\begin{gather*}
\left(\frac{1}{4}-\partial_x^2\right) \chi^{-\frac{1}{2}}=-\chi^{\frac{3}{2}} \big(P+\partial_y^2\big),
\end{gather*}
and verif\/ies \eqref{op4}.

{\textbf{(ii)}}.\quad To prove \eqref{op5}, according to \eqref{recixy-dp} and \eqref{recipn}, we only need to verify
\begin{gather}\label{op5'}
\bar{\mathcal{L}}=\chi^{-1} \partial_y \big(1-\partial_x^2\big) \chi^{-1}.
\end{gather}
In fact, since \begin{gather*}\big(1-\partial_x^2\big) \chi^{-1}=\chi^{-1}+\chi\big(\chi^{-1}\chi_y\big)_y+\chi_y\partial_y-\chi\partial_y^2,\end{gather*} one has
\begin{gather*}
\chi^{-1} \partial_y \big(1-\partial_x^2\big) \chi^{-1}=\chi^{-1} \big( \chi^{-1}+\chi \big(\chi^{-1} \chi_y\big)_y\big)_y+\chi^{-1} \big( \chi^{-1} +\chi \big(\chi^{-1} \chi_y\big)_y+\chi_{yy}\big) \partial_y-\partial_y^3.
\end{gather*}
This, when combined with \eqref{recips}, proves \eqref{op5'}.

{\textbf{(iii)}}.\quad In view of the explicit form of $\mathcal{D}$, we deduce that
\begin{gather*}
\chi^{-3} \mathcal{D} \chi^{-2}=\chi^{-1} \bar{\mathcal{L}} \chi \big(4-\partial_x^2\big) \chi^{-2},
\end{gather*}
where we have made used of \eqref{op5}. In the right-hand side of the preceding equation
 \begin{gather*}
\chi \big(4-\partial_x^2\big)\chi^{-2} = \chi\big( 4\chi^{-2}+2\chi \big(\chi^{-2} \chi_y\big)_y+3\chi^{-1} \chi_y\partial_y-\partial_y^2\big)\\
 \hphantom{\chi \big(4-\partial_x^2\big)\chi^{-2}}{} = -16\chi P-6\chi_{yy}+3\chi_y\partial_y-\chi\partial_y^2.
\end{gather*}
Hence,
\begin{gather*}
\chi^{-1} \bar{\mathcal{L}} \chi \big(4-\partial_x^2\big) \chi^{-2}\\
\qquad{} =\chi^{-1}\big( 32\chi PP_y+64(\chi P)_yP+12\chi_{yy}P_y+24\chi_{yyy}P+16(\chi P)_{yyy}+6\chi_{yyyyy}\big)\\
\qquad \quad{} +\chi^{-1}\big( 64\chi P^2-6\chi_yP_y+12\chi_{yy}P+48(\chi P)_{yy}+15\chi_{yyyy}\big)\partial_y\\
\qquad \quad{} +\chi^{-1}\big( 2\chi P_y-8\chi_yP+48(\chi P)_y+10\chi_{yyy}\big)\partial_y^2+20P\partial_y^3+\partial_y^5\\
\qquad{} =4 ( 16PP_y+P_{yyy} )+2\big( 32P^2+9P_{yy}\big) \partial_y+30P_y\partial_y^2+20P\partial_y^3+\partial_y^5,
 \end{gather*}
which implies
\begin{gather*}\chi^{-3} \mathcal{D} \chi^{-2}=\partial_y \bar{\mathcal{D}} \partial_y,\end{gather*}
and then \eqref{op6} follows.
\end{proof}

\begin{Lemma}\label{lem3.2}
Under the transformations \eqref{recixy-dp} and \eqref{recipn}, the relation
\begin{gather}\label{l3.2}
n^{-1}\big(\mathcal{D} \mathcal{L}^{-1}\big)^l n=\partial_y\big(\bar{\mathcal{D}}\bar{\mathcal{L}}\big)^l\partial_y^{-1}
\end{gather}
holds for each integer $l\geq 1$.
\end{Lemma}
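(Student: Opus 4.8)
The plan is to prove \eqref{l3.2} by induction on $l$, following the same strategy used for Lemma~\ref{lem2.2} in the Novikov--SK setting. The base case $l=1$ amounts to showing
\begin{gather*}
n^{-1}\,\mathcal{D}\,\mathcal{L}^{-1}\,n=\partial_y\,\bar{\mathcal{D}}\,\bar{\mathcal{L}}\,\partial_y^{-1},
\end{gather*}
and this is where the work lies. First I would use the factorization $\mathcal{L}=n^{\frac{2}{3}}\partial_x n^{\frac{1}{3}}(\partial_x-\partial_x^3)^{-1}n^{\frac{1}{3}}\partial_x n^{\frac{2}{3}}$ from \eqref{dp-ld} to write the inverse as $\mathcal{L}^{-1}=n^{-\frac{2}{3}}\partial_x^{-1}n^{-\frac{1}{3}}(\partial_x-\partial_x^3)n^{-\frac{1}{3}}\partial_x^{-1}n^{-\frac{2}{3}}$. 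Then, inserting this into $n^{-1}\mathcal{D}\mathcal{L}^{-1}n$ and using the change of variable $\partial_x=n^{\frac13}\partial_y$ (equivalently $\partial_x=\chi\partial_y$ with $\chi=n^{1/3}$), I would regroup the middle factors so that the two operator identities of Lemma~\ref{l4.1} can be applied: the block $n^{-\frac{1}{3}}(\partial_x-\partial_x^3)n^{-\frac{1}{3}}$ is exactly $\bar{\mathcal{L}}$ by \eqref{op5} (after matching the $n^{2/3}$ versus $n^{1/3}$ weights, which costs only a conjugation by a power of $n$ that gets absorbed into the adjacent $\partial_x^{-1}$'s), and the outer block $n^{-1}\mathcal{D}n^{-\frac{2}{3}}$ is $\partial_y\bar{\mathcal{D}}\partial_y$ by \eqref{op6}. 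Careful bookkeeping of the powers of $n$ sandwiched between the $\partial_x^{-1}=n^{-1/3}\partial_y^{-1}$ factors should collapse the whole expression to $\partial_y\bar{\mathcal{D}}\,\bar{\mathcal{L}}\,\partial_y^{-1}$.

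Once the case $l=1$ is in hand, the inductive step is routine: assuming $n^{-1}(\mathcal{D}\mathcal{L}^{-1})^k n=\partial_y(\bar{\mathcal{D}}\bar{\mathcal{L}})^k\partial_y^{-1}$, I would write
\begin{gather*}
n^{-1}\big(\mathcal{D}\mathcal{L}^{-1}\big)^{k+1}n
= \big(n^{-1}\mathcal{D}\mathcal{L}^{-1}n\big)\big(n^{-1}\big(\mathcal{D}\mathcal{L}^{-1}\big)^{k}n\big)
= \partial_y\bar{\mathcal{D}}\bar{\mathcal{L}}\partial_y^{-1}\cdot\partial_y\big(\bar{\mathcal{D}}\bar{\mathcal{L}}\big)^{k}\partial_y^{-1}
= \partial_y\big(\bar{\mathcal{D}}\bar{\mathcal{L}}\big)^{k+1}\partial_y^{-1},
\end{gather*}
where the inserted $n\cdot n^{-1}=\mathrm{id}$ in the middle is the standard trick that makes the telescoping work, and the cancellation $\partial_y^{-1}\partial_y=\mathrm{id}$ joins the two blocks. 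This completes the induction and hence the lemma.

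The main obstacle will be the base case bookkeeping: unlike the Novikov case where the weights were $m^{1/3}$ and $m^{2/3}$, here the DP operators carry the asymmetric weights $n^{1/3}$ and $n^{2/3}$ in $\mathcal{L}$ and the bare $\partial_x(1-\partial_x^2)(4-\partial_x^2)$ in $\mathcal{D}$, and the KK spectral problem \eqref{iso-kk} has the extra first-order term $2P_y$, which is why Lemma~\ref{l4.1} is stated with the specific weight $n^{-1/6}$ in \eqref{op4} and $n^{-1/3}$, $n^{-2/3}$ in \eqref{op5}, \eqref{op6}. I would need to verify that the powers of $n$ appearing between consecutive $\partial_x^{-1}$ factors in the expansion of $\mathcal{L}^{-1}$ are precisely those for which \eqref{op5} and \eqref{op6} apply after commuting $\partial_x^{-1}$ past powers of $n$ via $\partial_x^{-1}=n^{-1/3}\partial_y^{-1}$; any leftover scalar factor must turn out to be $1$. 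Provided these weights match as designed—which they should, since Lemma~\ref{l4.1} was evidently tailored to this computation—the argument goes through exactly as in Lemma~\ref{lem2.2}.
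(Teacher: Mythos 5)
Your proposal is correct and follows essentially the same route as the paper: the base case $l=1$ is obtained by writing $\mathcal{L}^{-1}=n^{-2/3}\partial_x^{-1}n^{-1/3}(\partial_x-\partial_x^3)n^{-1/3}\partial_x^{-1}n^{-2/3}$, applying \eqref{op6} to $n^{-1}\mathcal{D}n^{-2/3}$ and \eqref{op5} to the middle block, with the weight bookkeeping you flag working out exactly as designed (since $\partial_y\partial_x^{-1}n^{-1/3}=n^{-2/3}$ and $\partial_x^{-1}n^{1/3}=\partial_y^{-1}$), and the induction step is the same telescoping the paper invokes.
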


\begin{proof}
Due to the form of the inverse operator $\mathcal{L}^{-1}$ and the identities \eqref{op5} and \eqref{op6}, we arrive at
\begin{gather*}
n^{-1} \mathcal{D}\mathcal{L}^{-1} n =\partial_y \bar{\mathcal{D}} \partial_y \partial_x^{-1} n^{-\frac{1}{3}} \big(\partial_x-\partial_x^3\big) n^{-\frac{1}{3}} \partial_x^{-1} n^{\frac{1}{3}}=\partial_y \bar{\mathcal{D}} \bar{\mathcal{L}} \partial_y^{-1},
\end{gather*}
which verif\/ies \eqref{l3.2} for $l=1$. Then an obvious induction procedure allows us to prove \eqref{l3.2} in general. Hence the lemma is proved.
\end{proof}

\begin{proof}[Proof of Theorem \ref{t3}]
To prove this theorem, we take the analogous steps as in the proof of Theorem \ref{t1}. First of all, the derivative of $P$ with respect to $t$ is
\begin{gather*}
P_t=P_\tau+P_y\int^x \big( n^{\frac{1}{3}}(t, \xi)\big)_t \mathrm{d}\xi
=P_\tau+\frac{1}{3}P_y\partial_x^{-1}n^{-\frac{2}{3}}n_t=P_\tau+\frac{1}{3}P_y\partial_y^{-1}n^{-1}n_t.
\end{gather*}
On the other hand, it follows from \eqref{recipn} that
\begin{gather*}
P_t =\frac{1}{8}n^{-\frac{3}{2}}n_t \big(1-4\partial_x^2\big) n^{-\frac{1}{6}}+\frac{1}{24}n^{-\frac{1}{2}} \big(1-4\partial_x^2\big) n^{-\frac{7}{6}} n_t\\
\hphantom{P_t}{} =-\frac{1}{2}Pn^{-1}n_t+\frac{1}{24}n^{-\frac{1}{2}} \big(1-4\partial_x^2\big) n^{-\frac{7}{6}} n_t.
\end{gather*}
From the preceding equations and using the formula \eqref{op4}, we have
\begin{gather}\label{t4-1}
P_\tau=\left( -\frac{1}{3} P_y\partial_y^{-1}-\frac{1}{2}P-\frac{1}{6}\big(P+\partial_y^2\big)\right) n^{-1}n_t=\frac{1}{6}\bar{\mathcal{L}} \partial_y^{-1}n^{-1}n_t.
\end{gather}

Now, suppose $n(t, x)$ is the solution of the (DP)$_{-l}$ equation
 \begin{gather}\label{dp--l}
n_t=G_{-l}=6\big( \mathcal{D} \mathcal{L}^{-1}\big) ^{l-1}\mathcal{D}n^{-\frac{2}{3}},\qquad l=1, 2,\ldots,
\end{gather}
and $n(t, x)$ is related to $P(\tau, y)$ according to \eqref{recixy-dp} and \eqref{recipn}. Then plugging \eqref{dp--l} into \eqref{t4-1}, one f\/inds that the corresponding function $P(\tau, y)$ satisf\/ies
\begin{gather*}
P_\tau =\bar{\mathcal{L}}\partial_y^{-1}n^{-1}\big( \mathcal{D} \mathcal{L}^{-1}\big) ^{l-1}\mathcal{D}n^{-\frac{2}{3}}=\bar{\mathcal{L}}\big( \bar{\mathcal{D}} \bar{\mathcal{L}}\big) ^{l-1}\partial_y^{-1}n^{-1}\mathcal{D}n^{-\frac{2}{3}}\\
 \hphantom{P_\tau}{} =\big( \bar{\mathcal{L}} \bar{\mathcal{D}}\big) ^{l-1} \bar{\mathcal{L}} \bar{\mathcal{D}}\cdot 0=\big( \bar{\mathcal{L}} \bar{\mathcal{D}}\big) ^{l-1} \bar{\mathcal{L}} \frac{\delta \bar{\mathcal{E}}_0}{\delta P}=\bar{G}_l.
\end{gather*}
This shows that by the transformations \eqref{recixy-dp} and \eqref{recipn}, the (DP)$_{-l}$ equation is mapped into the (KK)$_{l}$ equation.

When it comes to the (DP)$_{l}$ equation for $l\geq 1$, inserting the formula \begin{gather*}
n_t=G_{l}=\big( \mathcal{L} \mathcal{D}^{-1}\big) ^{l-1}\mathcal{L} \frac{\delta \mathcal{E}_0}{\delta n}=\big( \mathcal{L} \mathcal{D}^{-1}\big) ^{l-1}\mathcal{L}\cdot \frac{9}{2}
\end{gather*}
for the (DP)$_{l}$ equation
into \eqref{t4-1} yields
\begin{gather*}
P_\tau=\frac{3}{4}\bar{\mathcal{L}} \partial_y^{-1}n^{-1} \big( \mathcal{L} \mathcal{D}^{-1}\big) ^{l-1} \mathcal{L} \cdot 1.
\end{gather*}
Therefore, referring back to the form of the operator $\mathcal{D}$ and using the identity \eqref{l3.2}, we deduce that
\begin{gather*}
\big(\bar{\mathcal{L}} \bar{\mathcal{D}}\big) ^{l}P_\tau
 =\frac{3}{4}\bar{\mathcal{L}} \big(\bar{\mathcal{D}} \bar{\mathcal{L}}\big)^{l} \partial_y^{-1} n^{-1} \big( \mathcal{L} \mathcal{D}^{-1}\big) ^{l-1} \mathcal{L}\cdot 1\\
\hphantom{\big(\bar{\mathcal{L}} \bar{\mathcal{D}}\big) ^{l}P_\tau}{}
=\frac{3}{4}\bar{\mathcal{L}} \partial_y^{-1} n^{-1} \mathcal{D} \cdot 1=-\frac{3}{4}\big(\partial_y^2+4P+2P_y\partial_y^{-1}\big) \cdot 0=0,
\end{gather*}
which shows that $P(\tau, y)$ is a solution for the (KK)$_{-l}$ equation \eqref{kk--n}. We thus have proved that for each $l\geq 1$, the (DP)$_{l}$ equation is mapped, via the transformations \eqref{recixy-dp} and \eqref{recipn}, into the (KK)$_{-l}$ equation.

As in Theorem \ref{t1}, the converse argument is also valid.
\end{proof}

\subsection[The correspondence between the Hamiltonian conservation laws of the DP and KK equations]{The correspondence between the Hamiltonian conservation laws\\ of the DP and KK equations}\label{section3.3}

We now investigate the relationship between the Hamiltonian conservation laws of the DP and KK equations. For the DP equation \eqref{dp},
with the Hamiltonian pair $\mathcal{L}$ and $\mathcal{D}$ def\/ined in \eqref{dp-ld} in hand, the corresponding recursive formula formally def\/ines an inf\/inite hierarchy of Hamiltonian conservation laws $\mathcal{E}_{l}$ determined by
\begin{gather}\label{dpclhie}
\mathcal{L} \frac{\delta \mathcal{E}_{l-1}}{\delta n}=\mathcal{D} \frac{\delta \mathcal{E}_l}{\delta n},\qquad l\in \mathbb{Z}.
\end{gather}
For the KK equation, its Hamiltonian conservation laws $\bar{\mathcal{E}}_{l}$ can be determined by the generalized bi-Hamiltonian (bi-Dirac) structure
\begin{gather*}
\bar{\mathcal{D}} \bar{\mathcal{L}} \frac{\delta \bar{\mathcal{E}}_{l-1}}{\delta P}= \frac{\delta \bar{\mathcal{E}}_l}{\delta P},\qquad l\in \mathbb{Z},
\end{gather*}
with $\bar{\mathcal{L}}$ and $\bar{\mathcal{D}}$ given by \eqref{kk-op-l}.

Before proving the main theorem for the correspondence between the two hierarchies of Hamiltonian conservation laws $\{\mathcal{E}_l\}$ and $\{\bar{\mathcal{E}}_l\}$, two lemmas regarding their variational derivatives are in order.

\begin{Lemma}
Let $\{\mathcal{E}_l\}$ and $\{\bar{\mathcal{E}}_l\}$ be the hierarchies of Hamiltonian conservation laws of the DP and KK equations, respectively. Then, for each $l\in \mathbb{Z}$, their corresponding variational derivatives are related according to the following identity
\begin{gather}\label{l4.3-1}
\frac{\delta \mathcal{E}_{l}}{\delta n}=6\mathcal{L}^{-1} n \partial_y \frac{\delta \bar{ \mathcal{E}}_{-(l+2)}}{\delta P}.
\end{gather}
\end{Lemma}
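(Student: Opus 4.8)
The plan is to mirror the proof of the analogous Novikov--SK result (the Lemma preceding Theorem~\ref{t2}), replacing the operators $\mathcal{K},\mathcal{J},\bar{\mathcal{K}},\bar{\mathcal{J}}$ by $\mathcal{L},\mathcal{D},\bar{\mathcal{L}},\bar{\mathcal{D}}$ and using Lemmas~\ref{l4.1} and~\ref{lem3.2} in place of Lemmas~\ref{l3.1} and~\ref{lem2.2}. The argument is an induction on $l$ running in both directions from a base case, and the key structural input is the computation, parallel to the one displayed just before \eqref{l3.3-1'}, that
\begin{gather*}
\mathcal{L}^{-1} n \partial_y = n^{\frac{1}{3}} \partial_x^{-1} n^{-\frac{1}{3}}\big(\partial_x-\partial_x^3\big) n^{-\frac{1}{3}} = n^{\frac{1}{3}} \partial_x^{-1} n^{\frac{1}{3}} \bar{\mathcal{L}} \partial_y^{-1},
\end{gather*}
obtained from the explicit form of $\mathcal{L}$ in \eqref{dp-ld} together with \eqref{op5}; this is what makes the right-hand side of \eqref{l4.3-1} well-defined and ties it to $\bar{\mathcal{L}}$.

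First I would establish the base case $l=0$. Starting from the first negative flow $n_t=G_{-1}=6\mathcal{D}n^{-2/3}$, i.e.\ $\delta\mathcal{E}_{-1}/\delta n = 6 n^{-2/3}$, I would use \eqref{dpclhie} in the form $\delta\mathcal{E}_0/\delta n = \mathcal{L}^{-1}\mathcal{D}(\delta\mathcal{E}_{-1}/\delta n)$, then rewrite $\mathcal{D} n^{-2/3}$ via \eqref{recipn} and \eqref{op4} (exactly as $\delta\mathcal{H}_{-2}/\delta m$ was handled in the Novikov case, where $\mathcal{D}n^{-2/3}$ becomes $n\partial_y$ applied to $\delta\bar{\mathcal{E}}_0/\delta P = P_{yy}+8P^2$, using $(1/4-\partial_x^2)n^{-1/6}=-n^{1/2}(P+\partial_y^2)$ and $P+\partial_y^2$ applied to $P_y$). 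This gives $\delta\mathcal{E}_0/\delta n = 6\mathcal{L}^{-1} n\partial_y(\delta\bar{\mathcal{E}}_0/\delta P)$, matching \eqref{l4.3-1} at $l=-(0+2)=-2$, i.e.\ at $n=0$.

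For the forward induction ($l\ge 0 \to l+1$), I would apply $\mathcal{L}\mathcal{D}^{-1}$ on the left and use the KK recursion $\delta\bar{\mathcal{E}}_{m+1}/\delta P = \bar{\mathcal{D}}\bar{\mathcal{L}}(\delta\bar{\mathcal{E}}_m/\delta P)$, then commute $n\partial_y$ through $(\mathcal{D}\mathcal{L}^{-1})$-type blocks using Lemma~\ref{lem3.2} with $l=1$, namely $n^{-1}\mathcal{D}\mathcal{L}^{-1}n = \partial_y\bar{\mathcal{D}}\bar{\mathcal{L}}\partial_y^{-1}$; this is the step that moves one power of the recursion operator from the $P$-side to the $n$-side. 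The backward induction ($l\le -1 \to l-1$) is symmetric, applying $\mathcal{D}\mathcal{L}^{-1}$ instead, with the extra base point $l=-1$ handled by checking that both sides vanish (on the DP side, $\bar{\mathcal{L}}\partial_x^{-1}n^{-1/3}\mathcal{L}(\delta\mathcal{E}_{-1}/\delta n)$ collapses to a multiple of $n^{-2/3}\partial_x\cdot 1=0$, and on the KK side $\bar{\mathcal{L}}(\delta\bar{\mathcal{E}}_{-1}/\delta P)=0$), exactly as in the corresponding place in the Novikov--SK lemma.

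I expect the only genuine obstacle to be bookkeeping rather than conceptual: one must verify that the powers of $n$ (here $n^{1/3}$, $n^{2/3}$, $n^{-1/6}$, $n^{-2/3}$) and the resulting $\partial_x\leftrightarrow n^{1/3}\partial_y$ substitutions thread correctly through the operator identities \eqref{op4}--\eqref{op6}, so that the prefactor in \eqref{l4.3-1} comes out exactly $6$ and the index shift is exactly $l\mapsto -(l+2)$. The algebraic identities doing the real work are all already supplied by Lemmas~\ref{l4.1} and~\ref{lem3.2}, so once the base cases are pinned down the induction steps are formal.
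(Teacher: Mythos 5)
Your plan is essentially the paper's own proof: a two-sided induction anchored at the base case coming from the Casimir $\delta\mathcal{E}_{-1}/\delta n=6n^{-2/3}$ together with \eqref{op6} and the convention $\bar{\mathcal{D}}\cdot 0=\delta\bar{\mathcal{E}}_0/\delta P$, the special case $l=-1$ settled by showing both sides are annihilated by $\bar{\mathcal{L}}$, and the induction steps driven by the recursion formula \eqref{dpclhie} combined with Lemma~\ref{lem3.2} at $l=1$ and the identities \eqref{op4}--\eqref{op6}. Two slips should be repaired when writing it out: the gradient recursion $\mathcal{L}^{-1}\mathcal{D}$ \emph{lowers} the index, so $\mathcal{L}^{-1}\mathcal{D}\,\delta\mathcal{E}_{-1}/\delta n=\delta\mathcal{E}_{-2}/\delta n$ and your base case is $l=-2$ (consistent with your own computation $l=-(0+2)=-2$, but not with the label $\delta\mathcal{E}_0/\delta n$ you attach to it), while the upward step acts on gradients by $\mathcal{D}^{-1}\mathcal{L}$, not $\mathcal{L}\mathcal{D}^{-1}$. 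Moreover your displayed key identity is off: since $\mathcal{L}^{-1}=n^{-2/3}\partial_x^{-1}n^{-1/3}\big(\partial_x-\partial_x^3\big)n^{-1/3}\partial_x^{-1}n^{-2/3}$ and $\partial_x^{-1}n^{1/3}\partial_y=1$, the correct statement is $\mathcal{L}^{-1}n\partial_y=n^{-2/3}\partial_x^{-1}n^{-1/3}\big(\partial_x-\partial_x^3\big)n^{-1/3}=n^{-2/3}\partial_y^{-1}\bar{\mathcal{L}}$ (exactly what the paper records after the lemma), not the version with prefactor $n^{1/3}$ and a trailing $\partial_y^{-1}$, and getting this right is precisely what fixes the factor $6$ and the index shift $l\mapsto-(l+2)$ you flag as the remaining bookkeeping.
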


\begin{proof}
We f\/irst consider the case of $l\leq -2$. Since
\begin{gather*}
\frac{\delta \mathcal{E}_{-2}}{\delta n}=\mathcal{L}^{-1}\mathcal{D}\frac{\delta \mathcal{E}_{-1}}{\delta n}=6\mathcal{L}^{-1}\mathcal{D} n^{-\frac{2}{3}}
=6\mathcal{L}^{-1} n \partial_y \bar{\mathcal{D}}\cdot 0,
\end{gather*}
then the fact $\bar{\mathcal{D}} \cdot 0=\delta \bar{\mathcal{E}}_{0}/\delta P$ reveals that \eqref{l4.3-1} holds for $l=-2$.

We proceed by induction on $l$. Assume that \eqref{l4.3-1} holds when $l=k$, namely
\begin{gather*}
\frac{\delta \mathcal{E}_{k}}{\delta n}=6\mathcal{L}^{-1} n \partial_y \frac{\delta \bar{ \mathcal{E}}_{-(k+2)}}{\delta P}.
\end{gather*}
From the recursive formula \eqref{dpclhie} and by the assumption,
\begin{gather*}
\frac{\delta \bar{\mathcal{E}}_{-(k+1)}}{\delta P} =\bar{\mathcal{D}} \bar{\mathcal{L}} \frac{\delta \bar{\mathcal{E}}_{-(k+2)}}{\delta P}=\frac{1}{6}\bar{\mathcal{D}} \bar{\mathcal{L}} \partial_y^{-1} n^{-1} \mathcal{L}\frac{\delta \mathcal{E}_{k}}{\delta n}
=\frac{1}{6}\bar{\mathcal{D}} \bar{\mathcal{L}} \partial_y^{-1} n^{-1} \mathcal{L} \mathcal{D}^{-1} \mathcal{L} \frac{\delta \mathcal{E}_{k-1}}{\delta n}.
\end{gather*}
Then, thanks to Lemma \ref{lem3.2} with $l=1$, we conclude that \eqref{l4.3-1} holds for $l=k-1$.

Furthermore, for the case of $l=-1$, we claim
\begin{gather*}
\frac{\delta \bar{\mathcal{E}}_{-1}}{\delta P}=\frac{1}{6}\partial_y^{-1} n^{-1} \mathcal{L} \frac{\delta \mathcal{E}_{-1}}{\delta n}.
\end{gather*}
Note that $\bar{\mathcal{E}}_{-1}$ is a Casimir functional for Hamiltonian operator $\bar{\mathcal{L}}$, it suf\/f\/ices to show that
\begin{gather*}
\bar{\mathcal{L}} \partial_y^{-1} n^{-1} \mathcal{L} \frac{\delta \mathcal{E}_{-1}}{\delta n}=0.
\end{gather*}
Indeed, from the def\/inition of the operator $\mathcal{L}$ and the formula \eqref{op5}, we have
\begin{gather*}
\bar{\mathcal{L}} \partial_y^{-1} n^{-1} \mathcal{L} n^{-\frac{2}{3}}= n^{-\frac{1}{3}} \partial_x\cdot 1=0,
\end{gather*}
proving the claim and verifying that \eqref{l4.3-1} holds for $l=-1$.

Finally, induction on $l$ shows that if \eqref{l4.3-1} holds for $l=k$, then for $l=k+1$, from the recursive formula \eqref{dpclhie} and the identities \eqref{op5} and \eqref{op6}, we infer that
\begin{gather*}
\frac{\delta \mathcal{E}_{k+1}}{\delta n} =\mathcal{D}^{-1}\mathcal{L} \frac{\delta \mathcal{E}_{k}}{\delta n}=6\mathcal{D}^{-1} n \partial_y \frac{\delta \bar{\mathcal{E}}_{-(k+2)}}{\delta P}=6\mathcal{D}^{-1} n \partial_y \bar{\mathcal{D}} \bar{\mathcal{L}} \frac{\delta \bar{\mathcal{E}}_{-(k+3)}}{\delta P}\\
\hphantom{\frac{\delta \mathcal{E}_{k+1}}{\delta n}}{}
 =6n^{-\frac{2}{3}} \partial_x^{-1} n^{-\frac{1}{3}} \big(\partial_x-\partial_x^3\big) n^{-\frac{1}{3}} \frac{\delta \bar{\mathcal{E}}_{-(k+3)}}{\delta P}=6\mathcal{L}^{-1} n \partial_y \frac{\delta \bar{\mathcal{E}}_{-(k+3)}}{\delta P},
\end{gather*}
which completes the induction step, and thereby proves the lemma.
\end {proof}

\begin{Lemma} \label{lem3.4}
Let $n(t, x)$ and $P(\tau, y)$ be related by the transformations \eqref{recixy-dp} and \eqref{recipn}. If
$\mathcal{E}(n)=\bar{\mathcal{E}}(P)$, then
\begin{gather}\label{vdpn}
\frac{\delta \mathcal{E}}{\delta n}=\frac{1}{6} n^{-\frac{2}{3}}\partial_y^{-1}\bar{\mathcal{L}} \frac{\delta \bar{\mathcal{E}}}{\delta P},
\end{gather}
where $\bar{\mathcal{L}}$ is the Hamiltonian operator \eqref{kk-op-l} admitted by the KK equation \eqref{kk}.
\end{Lemma}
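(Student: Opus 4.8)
\textbf{Proof proposal for Lemma \ref{lem3.4}.}

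The plan is to mimic the proof of Lemma~\ref{lem2.4} step by step, replacing the Novikov--SK data by the DP--KK data. First I would introduce the map
\begin{gather*}
F[n(t, x)]\equiv -n^{-\frac{1}{2}}\left(\frac{1}{4}-\partial_x^2\right)n^{-\frac{1}{6}}=P(\tau, y),
\end{gather*}
which encodes the second relation in \eqref{recipn}, and compute its Fr\'echet derivative $\mathcal{D}_{F[n]}$. Using \eqref{op4} to rewrite $n^{-1/2}(\frac14-\partial_x^2)n^{-1/6}=-(P+\partial_y^2)$ and carefully differentiating the prefactor $n^{-1/2}$ and the interior $n^{-1/6}$, I expect to land on an expression of the form $\mathcal{D}_{F[n]}(\rho)=-\frac{1}{6}\bigl(aP+\partial_y^2\bigr)n^{-1}\rho$ for the appropriate constant $a$ (the DP-analogue of the factor $\tfrac13(4Q+\partial_y^2)$ appearing in Lemma~\ref{lem2.4}).

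Next, since $F$ involves the independent-variable change \eqref{recixy-dp}, I would split the total variation into the $y$-fixed part and the part coming from the motion of $y$, exactly as in Lemma~\ref{lem2.4}:
\begin{gather*}
\frac{\mathrm{d}}{\mathrm{d}\epsilon}\Big|_{\epsilon=0}F[n+\epsilon \rho]=P_y\frac{\mathrm{d}}{\mathrm{d}\epsilon}\Big|_{\epsilon=0}y(n+\epsilon \rho)+\frac{\mathrm{d}}{\mathrm{d}\epsilon}\Big|_{\epsilon=0}^{y\,{\rm fixed}}F[n+\epsilon \rho],
\end{gather*}
where from \eqref{recixy-dp} one gets $\frac{\mathrm{d}}{\mathrm{d}\epsilon}\big|_{\epsilon=0}y(n+\epsilon\rho)=\frac13\partial_x^{-1}n^{-2/3}\rho=\frac13\partial_y^{-1}n^{-1}\rho$. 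Solving for the $y$-fixed piece and recognizing $\bar{\mathcal{L}}=-(\partial_y^3+2P\partial_y+2\partial_y P)$ via \eqref{kk-op-l}, I anticipate the identity
\begin{gather*}
\frac{\mathrm{d}}{\mathrm{d}\epsilon}\Big|_{\epsilon=0}^{y\,{\rm fixed}}F[n+\epsilon \rho]=\frac{1}{6}\bar{\mathcal{L}}\partial_y^{-1}n^{-1}\rho,
\end{gather*}
which is the DP--KK counterpart of the key intermediate formula in Lemma~\ref{lem2.4}; matching the constants so that $\bar{\mathcal{L}}$ appears cleanly is the one place where the arithmetic must be done with care.

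Finally, from $\mathcal{E}(n)=\bar{\mathcal{E}}(P)$ I would equate the two expressions for $\frac{\mathrm{d}}{\mathrm{d}\epsilon}\big|_{\epsilon=0}$ of the functional: on the DP side this is $\int \frac{\delta\mathcal{E}}{\delta n}\cdot\rho\,\mathrm{d}x$, and on the KK side, using the $y$-fixed variation computed above together with skew-symmetry of $\bar{\mathcal{L}}$, it is
\begin{gather*}
\int \frac{\delta \bar{\mathcal{E}}}{\delta P}\cdot\frac{1}{6}\bar{\mathcal{L}}\partial_y^{-1}n^{-1}\rho\,\mathrm{d}y=\frac{1}{6}\int n^{\frac{1}{3}}\big(\bar{\mathcal{L}}\partial_y^{-1}n^{-1}\big)^*\frac{\delta \bar{\mathcal{E}}}{\delta P}\cdot\rho\,\mathrm{d}x=\frac{1}{6}\int n^{-\frac{2}{3}}\partial_y^{-1}\bar{\mathcal{L}}\frac{\delta \bar{\mathcal{E}}}{\delta P}\cdot\rho\,\mathrm{d}x,
\end{gather*}
where I used $\mathrm{d}y=n^{1/3}\mathrm{d}x$ and $\big(\bar{\mathcal{L}}\partial_y^{-1}n^{-1}\big)^*=n^{-1}\partial_y^{-1}\bar{\mathcal{L}}$ (adjoints taken with respect to the $y$-integral, using skew-symmetry of both $\bar{\mathcal{L}}$ and $\partial_y^{-1}$). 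Comparing integrands yields \eqref{vdpn}. The main obstacle is purely computational: getting the $y$-fixed Fr\'echet derivative of $F[n]$ into the precise form $\frac16\bar{\mathcal{L}}\partial_y^{-1}n^{-1}$, which requires combining the differentiation of the two fractional powers of $n$ with the identity \eqref{op4} and keeping track of the numerical coefficients; everything else is a transcription of the Novikov--SK argument.
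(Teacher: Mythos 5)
Your proposal follows the paper's proof essentially verbatim: the same map $\tilde F[n]=-n^{-1/2}\bigl(\tfrac14-\partial_x^2\bigr)n^{-1/6}=P$, the same split of the variation into a $y$-fixed part plus $P_y$ times $\tfrac13\partial_y^{-1}n^{-1}\rho$, the same key identity that the $y$-fixed variation equals $\tfrac16\bar{\mathcal{L}}\partial_y^{-1}n^{-1}\rho$, and the same concluding duality argument using skew-adjointness of $\bar{\mathcal{L}}$ and $\mathrm{d}y=n^{1/3}\mathrm{d}x$. The one computation you leave pending indeed closes as expected: the Fr\'echet derivative is $\mathcal{D}_{\tilde F[n]}=-\tfrac16\bigl(4P+\partial_y^2\bigr)n^{-1}$ (so your constant is $a=4$, the exact analogue of the factor $\tfrac13(4Q+\partial_y^2)m^{-1}$ in Lemma~\ref{lem2.4}), after which your argument coincides with the paper's.
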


\begin{proof}
As the f\/irst step, in view of \eqref{recipn}, for convenience, we introduce the notation
\begin{gather*}
\tilde{F}[n(t, x)]\equiv -n^{-\frac{1}{2}}\left( \frac{1}{4}-\partial_x^2 \right) n^{-\frac{1}{6}}=P(\tau, y).
\end{gather*}
Evaluating the Fr\'echet derivative of $\tilde{F}[n]$ produces
\begin{gather*}
\mathcal{D}_{\tilde{F}[n]} =\frac{1}{6}n^{-\frac{5}{3}}-\frac{1}{2}n^{-\frac{3}{2}}\big( n^{-\frac{1}{6}}\big)_{xx}-\frac{1}{6}n^{-\frac{1}{2}}\big( n^{-\frac{7}{6}}\big)_{xx}\\
\hphantom{\mathcal{D}_{\tilde{F}[n]}}{}
=-\frac{1}{2}n^{-1} P-\frac{1}{6} \big(P+\partial_y^2\big) n^{-1}=-\frac{1}{6}\big(4P+\partial_y^2\big) n^{-1}.
\end{gather*}
On the other hand, we get
\begin{gather*}
\frac{\mathrm{d}}{\mathrm{d}\epsilon}\Big|_{\epsilon=0}^{y \, {\rm f\/ixed}}\tilde{F}[n+\epsilon \rho]=\mathcal{D}_{\tilde{F}}(\rho)-\frac{1}{3}P_y \partial_y^{-1} n^{-1}\rho=\frac{1}{6}\bar{\mathcal{L}} \partial_y^{-1} n^{-1}\rho.
\end{gather*}

Secondly, by the assumption, one has
\begin{gather*}
\frac{\mathrm{d}}{\mathrm{d}\epsilon}\Big|_{\epsilon=0} \mathcal{E}(n+\epsilon \rho)=\frac{\mathrm{d}}{\mathrm{d}\epsilon}\Big|_{\epsilon=0} \bar{\mathcal{E}}\big(\tilde{F}[n+\epsilon \rho]\big).
\end{gather*}
Furthermore, due to the fact that $\bar{\mathcal{L}}$ is skew-adjoint, the righ-hand side of the above expression yields
\begin{gather*}
\frac{\mathrm{d}}{\mathrm{d}\epsilon}\Big|_{\epsilon=0} \bar{\mathcal{E}}\big(\tilde{F}[n+\epsilon \rho]\big)
=\int \frac{\delta \bar{\mathcal{E}}}{\delta P} \cdot\frac{\mathrm{d}}{\mathrm{d}\epsilon}\Big|_{\epsilon=0}^{y \, {\rm f\/ixed}}\tilde{F}[n+\epsilon \rho] \mathrm{d}y\\
\hphantom{\frac{\mathrm{d}}{\mathrm{d}\epsilon}\Big|_{\epsilon=0} \bar{\mathcal{E}}\big(\tilde{F}[n+\epsilon \rho]\big)}{}
 =\int \frac{\delta \bar{\mathcal{E}}}{\delta P} \cdot\left(\frac{1}{6}\bar{\mathcal{L}} \partial_y^{-1} n^{-1}\rho\right) \mathrm{d}y
=\frac{1}{6}\int n^{-\frac{2}{3}} \partial_y^{-1} \bar{\mathcal{L}} \frac{\delta \bar{\mathcal{E}}}{\delta P} \rho \mathrm{d}x,
\end{gather*}
which, combined with the def\/inition of the variational derivative, produces \eqref{vdpn}.
\end{proof}

Finally, it follows from \eqref{op6} and \eqref{l3.2} with $l=1$ that
\begin{gather*}
\mathcal{L}^{-1}n \partial_y=\mathcal{D}^{-1}n \partial_y \bar{\mathcal{D}} \bar{\mathcal{L}}=n^{-\frac{2}{3}} \partial_y^{-1} \bar{\mathcal{L}},
\end{gather*}
which, together with \eqref{l4.3-1}, implies
\begin{gather}\label{l4.3-1'}
\frac{\delta \mathcal{E}_{l}}{\delta n}=6n^{-\frac{2}{3}}\partial_y^{-1}\bar{\mathcal{L}}\frac{\delta \bar{ \mathcal{E}}_{-(l+2)}}{\delta P}.
\end{gather}
Now, we suppose that $n(t, x)$ and $P(\tau, y)$ are related by the transformations \eqref{recixy-dp} and \eqref{recipn}. Def\/ine a functional
\begin{gather*}
\label{t3.2-1}\tilde{\mathcal{G}}_{k}(P)\equiv \mathcal{E}_{l}(n),
\end{gather*}
for some $k \in \mathbb{Z}$. Then from Lemma \ref{lem3.4}, we conclude, for each $k\in \mathbb{Z}$,
\begin{gather*}
\frac{\delta \mathcal{E}_{l}}{\delta n}= \frac{1}{6}n^{-\frac{2}{3}} \partial_y^{-1}\bar{\mathcal{L}}\frac{\delta \tilde{\mathcal{G}}_{k}}{\delta P},
\end{gather*}
which, in comparision with \eqref{l4.3-1'} produces
\begin{gather*}
\mathcal{E}_l(n)=\tilde{\mathcal{G}}_{k}(P)=36 \bar{\mathcal{E}}_{-(l+2)}(P).
\end{gather*}
As a consequence, the following theorem is thereby proved.

\begin{Theorem}\label{t4}
Under the Liouville transformations \eqref{recixy-dp} and \eqref{recipn}, for each $l\in \mathbb{Z}$, the Hamiltonian conservation law $\bar{\mathcal{E}}_l(P)$ of the KK equation is related to that $\mathcal{E}_{l}(n)$ of the DP equation, according to the following identity
\begin{gather*}
\mathcal{E}_{l}(n)=36 \bar{\mathcal{E}}_{-(l+2)}(P), \qquad l\in \mathbb{Z}.
\end{gather*}
\end{Theorem}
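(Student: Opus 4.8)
The plan is to mirror precisely the argument used to prove Theorem~\ref{t2}, since the DP--KK setting runs parallel to the Novikov--SK one. The statement $\mathcal{E}_{l}(n)=36\,\bar{\mathcal{E}}_{-(l+2)}(P)$ has in fact already been derived in the paragraph immediately preceding the theorem; what remains is to assemble the ingredients and spell out why the chain of identities is legitimate. First I would invoke the two lemmas just proved: the variational-derivative relation \eqref{l4.3-1}, $\delta\mathcal{E}_{l}/\delta n=6\,\mathcal{L}^{-1}n\,\partial_y\,\delta\bar{\mathcal{E}}_{-(l+2)}/\delta P$, and the change-of-variables formula \eqref{vdpn} from Lemma~\ref{lem3.4}, valid whenever $\mathcal{E}(n)=\bar{\mathcal{E}}(P)$ under \eqref{recixy-dp}--\eqref{recipn}.

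The key intermediate step is the operator simplification $\mathcal{L}^{-1}n\,\partial_y=n^{-2/3}\partial_y^{-1}\bar{\mathcal{L}}$. I would obtain this by combining \eqref{op6} (which gives $n^{-1}\mathcal{D}\,n^{-2/3}=\partial_y\bar{\mathcal{D}}\partial_y$, hence $\mathcal{D}^{-1}n\,\partial_y=n^{-2/3}\partial_y^{-1}\bar{\mathcal{D}}^{-1}$ after rearranging) with the $l=1$ instance of Lemma~\ref{lem3.2}, namely $n^{-1}\mathcal{D}\mathcal{L}^{-1}n=\partial_y\bar{\mathcal{D}}\bar{\mathcal{L}}\partial_y^{-1}$, so that $\mathcal{L}^{-1}n\,\partial_y=\mathcal{D}^{-1}n\,\partial_y\,\bar{\mathcal{D}}\bar{\mathcal{L}}=n^{-2/3}\partial_y^{-1}\bar{\mathcal{L}}$. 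Substituting this into \eqref{l4.3-1} yields the rewritten form \eqref{l4.3-1'}: $\delta\mathcal{E}_{l}/\delta n=6\,n^{-2/3}\partial_y^{-1}\bar{\mathcal{L}}\,\delta\bar{\mathcal{E}}_{-(l+2)}/\delta P$.

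Now I would introduce, for fixed $l$, the functional $\tilde{\mathcal{G}}_k(P)\equiv\mathcal{E}_{l}(n)$ defined by pulling back along \eqref{recixy-dp}--\eqref{recipn}, so that the hypothesis of Lemma~\ref{lem3.4} is met with $\mathcal{E}=\mathcal{E}_l$ and $\bar{\mathcal{E}}=\tilde{\mathcal{G}}_k$. Lemma~\ref{lem3.4} then gives $\delta\mathcal{E}_{l}/\delta n=\tfrac16 n^{-2/3}\partial_y^{-1}\bar{\mathcal{L}}\,\delta\tilde{\mathcal{G}}_k/\delta P$. Comparing this with \eqref{l4.3-1'} forces $\delta\tilde{\mathcal{G}}_k/\delta P=36\,\delta\bar{\mathcal{E}}_{-(l+2)}/\delta P$, and since two functionals with equal variational derivatives differ by a constant (which is irrelevant for conservation laws, or can be normalized away), we get $\tilde{\mathcal{G}}_k(P)=36\,\bar{\mathcal{E}}_{-(l+2)}(P)$, hence $\mathcal{E}_{l}(n)=36\,\bar{\mathcal{E}}_{-(l+2)}(P)$, as claimed for all $l\in\mathbb{Z}$.

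The main obstacle is not any single hard computation but rather bookkeeping: one must check that the operator $\bar{\mathcal{L}}$ appearing on both sides really cancels (using its skew-adjointness, already exploited in Lemma~\ref{lem3.4}), and that passing from equality of variational derivatives to equality of functionals is justified—this requires knowing that the relevant function spaces are connected so the additive constant is globally determined, and that the normalization of the $\bar{\mathcal{E}}_l$ hierarchy (via $\bar{\mathcal{D}}\bar{\mathcal{L}}\,\delta\bar{\mathcal{E}}_{l-1}/\delta P=\delta\bar{\mathcal{E}}_l/\delta P$, anchored at $\bar{\mathcal{E}}_0$) is compatible with the normalization of $\{\mathcal{E}_l\}$ so that the constant $36$ is the same for every~$l$. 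Verifying one base case explicitly—for instance tracing $\mathcal{E}_{-2}(n)$ back to $18\bar{\mathcal{E}}_0$ via the DP analogue of the $n=2$ example for Theorem~\ref{t2}, and reconciling the factor—would confirm the constant and close the argument.
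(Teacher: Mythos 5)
Your proposal is correct and follows essentially the same route as the paper: the operator simplification $\mathcal{L}^{-1}n\,\partial_y=\mathcal{D}^{-1}n\,\partial_y\,\bar{\mathcal{D}}\bar{\mathcal{L}}=n^{-2/3}\partial_y^{-1}\bar{\mathcal{L}}$ obtained from \eqref{op6} and Lemma~\ref{lem3.2} with $l=1$, the rewritten relation \eqref{l4.3-1'}, the pullback functional $\tilde{\mathcal{G}}_k(P)\equiv\mathcal{E}_l(n)$, and the comparison via Lemma~\ref{lem3.4} are exactly the paper's steps. Only a minor slip in your closing remark: the base-case check in the DP--KK setting gives $\mathcal{E}_{-2}(n)=36\,\bar{\mathcal{E}}_{0}(P)$ (the factor $18$ belongs to the Novikov--SK setting of Theorem~\ref{t2}).
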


For example, in the case of $l=-2$, it is inferred from \eqref{op5} and \eqref{op6}, and using the fact $\bar{\mathcal{D}}\cdot 0=\delta \bar{\mathcal{E}}_0/ \delta P=(P_{yy}+8P^2)$, that
\begin{gather*}
\frac{\delta \mathcal{E}_{-2}}{\delta n}=\mathcal{L}^{-1}\mathcal{D}\frac{\delta \mathcal{E}_{-1}}{\delta n}
=6n^{-\frac{2}{3}} \partial_y^{-1} \bar{\mathcal{L}} \bar{\mathcal{D}}\cdot 0
=-6n^{-\frac{2}{3}} \left( P_{yyyy}+20PP_{yy}+15P_y^2+\frac{80}{3}P^3\right),
\end{gather*}
with $P$ satisfying \eqref{recipn}. Consequently, we arrive at
\begin{gather*}
\mathcal{E}_{-2}(n) =\frac{18}{5}\int n^{\frac{1}{3}}\left( P_{yyyy}+20PP_{yy}+15P_y^2+\frac{80}{3}P^3\right) \mathrm{d}x\\
\hphantom{\mathcal{E}_{-2}(n)}{} =36\int \left( \frac{8}{3}P^3-\frac{1}{2}P_y^2\right) \mathrm{d}y=36 \bar{\mathcal{E}}_{0}(P),
\end{gather*}
which is in accordance with Theorem \ref{t4}.

\section[The relationship between the Novikov equation and the DP equation]{The relationship between the Novikov equation\\ and the DP equation}\label{section4}

It has been shown in \cite{fg} that under the Miura transformations
\begin{gather}\label{mitr-sk}
\mathcal{B}_1(Q, V)\equiv Q-V_y+V^2=0
\end{gather}
and \begin{gather}\label{mitr-kk}
\mathcal{B}_2(P, V)\equiv P+V_y+\frac{1}{2}V^2=0,
\end{gather}
the SK equation \eqref{sk} and the KK equation \eqref{kk} are respectively transformed into the \emph{Fordy--Gibbons--Jimbo--Miwa equation}
\begin{gather}\label{mitr-eq}
V_\tau+V_{yyyyy}-5\big(V_yV_{yyy}+V_{yy}^2+V_y^3+4VV_yV_{yy}+V^2V_{yyy}-V^4V_y\big)=0.
\end{gather}
This, together with the fact that there exist the Liouville correspondences between the Novikov and SK hierarchies, as well as between the DP and KK hierarchies, inspires a natural question as to whether there exists some relationship between the Novikov equation~\eqref{nov} and the DP equation~\eqref{dp}.

We can regard \eqref{mitr-sk} and \eqref{mitr-kk} as B\"acklund transformations. According to Fokas and Fuchs\-stei\-ner~\cite{ff1}, all the positive f\/lows in the SK hierarchy admit the same transformation \eqref{mitr-sk}. More precisely, set
\begin{gather}\label{T1}
T_1\equiv \mathcal{B}_{1,V}^{-1}\mathcal{B}_{1,Q}=( 2V-\partial_y)^{-1},
\end{gather}
where $\mathcal{B}_{1,V}$ and $\mathcal{B}_{1,Q}$ are the Fr\'echet derivatives of \eqref{mitr-sk} with respect to $V$ and $Q$, respectively. Then, the recursion operator $\bar{\mathcal{R}}$ of the SK equation and the recursion operator~$\mathcal{R}^*$ of equation~\eqref{mitr-eq} satisfy
\begin{gather*}
\mathcal{R}^*=T_1\bar{\mathcal{R}}T_1^{-1} .
\end{gather*}
Similarly, each member in the KK hierarchy admits the Miura transformation~\eqref{mitr-kk}, and its corresponding recursion operator $\widehat{\mathcal{R}}$ is linked with the recursion operator $\mathcal{R}^*$ according to the identity
\begin{gather*}
\mathcal{R}^*=T_2\widehat{\mathcal{R}} T_2^{-1} ,
\end{gather*}
where
\begin{gather*}
T_2\equiv \mathcal{B}_{2,V}^{-1}\mathcal{B}_{2,P}= ( V+\partial_y )^{-1}
\end{gather*}
is the operator arising from the function \eqref{mitr-kk}.

In light of these relations, we claim that both the f\/irst negative f\/low of the SK hierarchy and the KK hierarchy are related to the same equation
\begin{gather}\label{mitr-eq-1}
\mathcal{R}^* V_{\tau}=0,
\end{gather}
via the Miura transformations \eqref{mitr-sk} and \eqref{mitr-kk}, respectively. Indeed, we have the following result.

\begin{Proposition}\label{p4.1}
Assume that $V$ satisfies the equation \eqref{mitr-eq-1}.
Then $Q=V_y-V^2$ and $P=-V_y-\frac{1}{2} V^2$ satisfy the first negative flow of the SK hierarchy $\bar{\mathcal{R}} Q_{\tau}=0$ and the first negative flow of the KK hierarchy $\widehat{\mathcal{R}} P_{\tau}=0$, respectively.
\end{Proposition}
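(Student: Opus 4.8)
The plan is to use the two Miura/Bäcklund relations \eqref{mitr-sk}--\eqref{mitr-kk} together with the conjugacy identities $\mathcal{R}^*=T_1\bar{\mathcal{R}}T_1^{-1}$ and $\mathcal{R}^*=T_2\widehat{\mathcal{R}}T_2^{-1}$ recorded just above the statement, where $T_1=(2V-\partial_y)^{-1}$ and $T_2=(V+\partial_y)^{-1}$. The starting point is the observation that if $\mathcal{B}_1(Q,V)=0$, i.e.\ $Q=V_y-V^2$, then differentiating in $\tau$ gives $Q_\tau=\mathcal{B}_{1,V}\,V_\tau=(\partial_y-2V)\,V_\tau=-T_1^{-1}V_\tau$, and similarly from $\mathcal{B}_2(P,V)=0$, i.e.\ $P=-V_y-\tfrac12 V^2$, one gets $P_\tau=\mathcal{B}_{2,V}\,V_\tau=(\partial_y+V)\,V_\tau=T_2^{-1}V_\tau$. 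So the time derivatives of the transformed potentials are, up to sign, exactly $T_1^{-1}$ and $T_2^{-1}$ applied to $V_\tau$.

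Next I would simply substitute. For the SK side, $\bar{\mathcal{R}}Q_\tau=\bar{\mathcal{R}}\,(-T_1^{-1}V_\tau)=-T_1^{-1}(T_1\bar{\mathcal{R}}T_1^{-1})V_\tau=-T_1^{-1}\mathcal{R}^*V_\tau$, which vanishes precisely because $V$ solves \eqref{mitr-eq-1}, namely $\mathcal{R}^*V_\tau=0$. For the KK side, in the same way $\widehat{\mathcal{R}}P_\tau=\widehat{\mathcal{R}}\,(T_2^{-1}V_\tau)=T_2^{-1}(T_2\widehat{\mathcal{R}}T_2^{-1})V_\tau=T_2^{-1}\mathcal{R}^*V_\tau=0$. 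Thus $Q$ solves $\bar{\mathcal{R}}Q_\tau=0$, the first negative flow of the SK hierarchy as confirmed via the factorization \eqref{sk-ro-1}, and $P$ solves $\widehat{\mathcal{R}}P_\tau=0$, the first negative flow of the KK hierarchy, which by $\bar{\mathcal{D}}=\partial_y^{-1}\mathcal{A}\partial_y^{-1}$ is the reduction of \eqref{kk--1'}.

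The main obstacle is not the algebra but the justification of manipulating these nonlocal pseudo-differential operators: one must check that the conjugation identities hold as genuine operator identities on the relevant class of functions, that $T_1$, $T_2$ are invertible there (so $T_1^{-1}T_1=\mathrm{id}$ etc.\ are legitimate), and that the formal computation $Q_\tau=\mathcal{B}_{1,V}V_\tau$ is valid — i.e.\ that differentiating the constraint $\mathcal{B}_1(Q,V)=0$ in $\tau$ really produces the Fréchet derivative acting on $V_\tau$ with no extra integration-constant ambiguity. I would handle this by appealing to the Fokas--Fuchssteiner framework \cite{ff1} already cited, which establishes exactly that every flow in each hierarchy is compatible with the corresponding Miura transformation and that the recursion operators intertwine as stated; given that, the proof reduces to the two-line substitutions above. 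A brief remark reconciling \eqref{mitr-eq-1} with the first negative flow of the Fordy--Gibbons--Jimbo--Miwa equation \eqref{mitr-eq} would round off the argument.
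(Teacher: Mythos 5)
Your argument is essentially the paper's own proof: differentiate the Miura relations in $\tau$ to obtain $Q_\tau=(\partial_y-2V)V_\tau=-T_1^{-1}V_\tau$ (and the analogous formula for $P_\tau$), then move $\bar{\mathcal{R}}$, $\widehat{\mathcal{R}}$ past $T_1^{-1}$, $T_2^{-1}$ via the conjugacy identities so that $\mathcal{R}^*V_\tau=0$ forces both $\bar{\mathcal{R}}Q_\tau$ and $\widehat{\mathcal{R}}P_\tau$ to vanish. The only blemishes are immaterial sign slips (correctly $Q_\tau=-\mathcal{B}_{1,V}V_\tau$ and $P_\tau=-(\partial_y+V)V_\tau=-T_2^{-1}V_\tau$), which do not affect the vanishing conclusion.
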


\begin{proof}
Thanks to \eqref{mitr-sk}, one has
\begin{gather*}\label{3.3.3}
Q_{\tau}=V_{y\tau}-2VV_{\tau}=-T_1^{-1}V_{\tau}.
\end{gather*}
This, together with \eqref{T1}, implies
\begin{gather*}
\bar{\mathcal{R}} Q_{\tau}=-\bar{\mathcal{R}} T_1^{-1} V_{\tau}=-T_1^{-1} \mathcal{R}^* V_{\tau}.
\end{gather*}
Therefore, if $\mathcal{R}^* V_{\tau}=0$, then $\bar{\mathcal{R}} Q_{\tau}=-(2V-\partial_y) \mathcal{R}^* V_{\tau}=0$, proving the SK part of the proposition. The KK part can be proved by in a similar manner.
\end{proof}

Finally, using Proposition \ref{p4.1}, combined with Theorems \ref{t1} and \ref{t3}, we are able to establish a relationship between the Novikov equation \eqref{nov} and the DP equation \eqref{dp}. This fact is summarized in the following proposition.

\begin{Proposition}\label{p4.2}
Both the Novikov equation \eqref{nov} and the DP equation \eqref{dp} are linked with the equation \eqref{mitr-eq-1}
in the following sense. If $V(\tau, y)$ is a solution of equation \eqref{mitr-eq-1}, then the function $m(t, x)$ determined implicitly by the relation
\begin{gather*}
V_y-V^2=-m^{-1} \big(1-\partial_x^2\big) m^{-\frac{1}{3}},\qquad y=\int^x m^{\frac{2}{3}}(t, \xi) \mathrm{d}\xi, \qquad \tau=t,
\end{gather*}
satisfies the Novikov equation \eqref{nov}, while the function $n(t, x)$ determined by
\begin{gather*}
V_y+\frac{1}{2}V^2=\frac{1}{4}n^{-\frac{1}{2}}\big(1-4\partial_x^2\big) n^{-\frac{1}{6}},\qquad y=\int^x n^{\frac{1}{3}}(t, \xi) \mathrm{d}\xi, \qquad \tau=t,
\end{gather*}
satisfies the DP equation \eqref{dp}.
\end{Proposition}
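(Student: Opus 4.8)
The plan is to combine the three principal results already established in the paper—Theorem~\ref{t1} (the Liouville correspondence between the positive Novikov flows and the negative SK flows), Theorem~\ref{t3} (the analogous correspondence between the positive DP flows and the negative KK flows), and Proposition~\ref{p4.1} (the Miura-type link between solutions of \eqref{mitr-eq-1} and the first negative flows $\bar{\mathcal{R}} Q_\tau=0$ and $\widehat{\mathcal{R}} P_\tau=0$)—into a single diagram. First I would start from a solution $V(\tau,y)$ of \eqref{mitr-eq-1}. By Proposition~\ref{p4.1}, the function $Q=V_y-V^2$ solves $\bar{\mathcal{R}} Q_\tau=0$, which is exactly the first negative flow (SK)$_{-1}$ of the SK hierarchy, as recorded in \eqref{sk-1'}--\eqref{sk-ro-1}. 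Similarly $P=-V_y-\tfrac12 V^2$ solves the first negative flow (KK)$_{-1}$ of the KK hierarchy.

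Next I would invoke the case $n=1$ of Theorem~\ref{t1}: since the Liouville transformation \eqref{recixy-nov} is a bijection mapping (Novikov)$_n$ to (SK)$_{-n}$, a solution $Q$ of (SK)$_{-1}$ pulls back, under \eqref{recixy-nov}, to a solution $m(t,x)$ of (Novikov)$_1$, i.e.\ of the Novikov equation \eqref{nov} itself. Concretely, the pullback is the function $m(t,x)$ determined implicitly by the reciprocal relation $y=\int^x m^{2/3}(t,\xi)\,\mathrm{d}\xi$, $\tau=t$, together with the potential relation $Q=-m^{-1}(1-\partial_x^2)m^{-1/3}$, which is precisely \eqref{reciqm1}. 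Substituting $Q=V_y-V^2$ gives the first displayed system in the statement. In exactly the same way, applying the $l=1$ case of Theorem~\ref{t3} to the solution $P$ of (KK)$_{-1}$ produces, via the bijective Liouville transformations \eqref{recixy-dp}--\eqref{recipn}, a solution $n(t,x)$ of (DP)$_1$, that is, of the DP equation \eqref{dp}; here the defining relations are $y=\int^x n^{1/3}(t,\xi)\,\mathrm{d}\xi$, $\tau=t$, and $P=\tfrac14 n^{-1/2}(1-4\partial_x^2)n^{-1/6}$ from \eqref{recipn}, and substituting $P=-V_y-\tfrac12 V^2$ yields the second displayed system.

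The only subtlety worth spelling out—and the place I expect the main (mild) obstacle—is matching conventions at the two interface points. On the KK side one must check that the normalization of the KK equation used in Proposition~\ref{p4.1} and in the Miura transformation \eqref{mitr-kk}, namely \eqref{KK}, agrees with the normalization \eqref{kk} used in Section~\ref{section3} (the coefficients $10,25,80$ versus $20,50,80$ differ by a rescaling of $P$), so that ``first negative flow of the KK hierarchy $\widehat{\mathcal R}P_\tau=0$'' in Proposition~\ref{p4.1} is the same equation as the (KK)$_{-1}$ equation appearing in Theorem~\ref{t3}; the remark after \eqref{kk--n} that \eqref{kk--1'} is a reduction of $\bar{\mathcal{L}}\bar{\mathcal{D}}P_\tau=0$ takes care of this. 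One must likewise confirm on the SK side that $\bar{\mathcal R}Q_\tau=0$ from Proposition~\ref{p4.1} coincides with (SK)$_{-1}$, which is exactly the content of the factorization \eqref{sk-ro-1} together with the paragraph immediately following it. Once these identifications are in place, the proof is simply the concatenation $V\ \longmapsto\ Q\ \longmapsto\ m$ and $V\ \longmapsto\ P\ \longmapsto\ n$, and the implicit relations in the statement are nothing but the composition of \eqref{reciqm1} with $Q=V_y-V^2$, and of \eqref{recipn} with $P=-V_y-\tfrac12 V^2$, respectively.

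I would end with a brief remark that the correspondence is genuinely implicit: there is no closed-form map $m\mapsto n$, because eliminating $V$ would require inverting the two Miura relations \eqref{mitr-sk}--\eqref{mitr-kk} simultaneously, which is not possible in elementary terms; nonetheless the shared intermediate equation \eqref{mitr-eq-1} provides a well-defined bridge, and solutions of the Novikov equation and of the DP equation arising from the same $V$ are thereby canonically paired. This parallels the known situation \cite{kloq} in which the mCH and CH equations are linked through an intermediate transformation rather than a direct change of variables.
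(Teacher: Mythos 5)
Your proposal is correct and follows exactly the route the paper intends: Proposition~\ref{p4.1} sends $V$ to solutions $Q=V_y-V^2$ of (SK)$_{-1}$ and $P=-V_y-\tfrac12V^2$ of (KK)$_{-1}$, and then the converse parts of Theorem~\ref{t1} (with $n=1$) and Theorem~\ref{t3} (with $l=1$) pull these back through \eqref{reciqm1}--\eqref{recixy-nov} and \eqref{recipn}--\eqref{recixy-dp} to solutions of the Novikov and DP equations, which is precisely the composition stated in Proposition~\ref{p4.2}. The normalization point you raise is a harmless side remark, since Section~\ref{section4} works with the forms \eqref{sk} and \eqref{kk} throughout.
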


\subsection*{Acknowledgements}

The authors thank the referees for valuable comments and suggestions. Kang's research was supported by NSFC under Grant 11631007 and Grant 11471260. Liu's research was supported in part by NSFC under Grant 11631007 and Grant 11401471, and Ph.D.~Programs Foundation of Ministry of Education of China-20136101120017. Olver's research was supported by NSF under Grant DMS-1108894. Qu's research was supported by NSFC under Grant 11631007 and Grant 11471174.

\pdfbookmark[1]{References}{ref}
\LastPageEnding


\begin{thebibliography}{99}
\footnotesize\itemsep=0pt

\bibitem{achm}
Alber M.S., Camassa R., Holm D.D., Marsden J.E., The geometry of peaked
 solitons and billiard solutions of a class of integrable {PDE}s,
 \href{https://doi.org/10.1007/BF00739423}{\textit{Lett. Math. Phys.}} \textbf{32} (1994), 137--151.

\bibitem{bcg}
Babalic C.N., Constantinescu R., Gerdjikov V.S., On the soliton solutions of a
 family of {T}zitzeica equations, \href{https://doi.org/10.7546/jgsp-37-2015-1-24}{\textit{J.~Geom. Symmetry Phys.}} \textbf{37}
 (2015), 1--24, \href{http://arxiv.org/abs/1703.05855}{arXiv:1703.05855}.

\bibitem{bss}
Beals R., Sattinger D.H., Szmigielski J., Multipeakons and the classical moment
 problem, \href{https://doi.org/10.1006/aima.1999.1883}{\textit{Adv. Math.}} \textbf{154} (2000), 229--257,
 \href{http://arxiv.org/abs/solv-int/9906001}{solv-int/9906001}.

\bibitem{ch}
Camassa R., Holm D.D., An integrable shallow water equation with peaked
 solitons, \href{https://doi.org/10.1103/PhysRevLett.71.1661}{\textit{Phys. Rev. Lett.}} \textbf{71} (1993), 1661--1664,
 \href{http://arxiv.org/abs/patt-sol/9305002}{patt-sol/9305002}.

\bibitem{chh}
Camassa R., Holm D.D., Hyman J.M., A new integrable shallow water equation,
 \href{https://doi.org/10.1016/S0065-2156(08)70254-0}{\textit{Adv. Appl. Mech.}} \textbf{31} (1994), 1--33.

\bibitem{cht}
Cao C., Holm D.D., Titi E.S., Traveling wave solutions for a class of
 one-dimensional nonlinear shallow water wave models, \href{https://doi.org/10.1023/B:JODY.0000041284.26400.d0}{\textit{J.~Dynam.
 Differential Equations}} \textbf{16} (2004), 167--178.

\bibitem{cdg}
Caudrey P.J., Dodd R.K., Gibbon J.D., A new hierarchy of {K}orteweg--de {V}ries
 equations, \href{https://doi.org/10.1098/rspa.1976.0149}{\textit{Proc. Roy. Soc. London Ser.~A}} \textbf{351} (1976),
 407--422.

\bibitem{cq}
Chou K.S., Qu C.Z., Integrable equations arising from motions of plane curves,
 \href{https://doi.org/10.1016/S0167-2789(01)00364-5}{\textit{Phys.~D}} \textbf{162} (2002), 9--33.

\bibitem{ck2}
Coclite G.M., Karlsen K.H., Periodic solutions of the {D}egasperis--{P}rocesi
 equation: well-posedness and asymptotics, \href{https://doi.org/10.1016/j.jfa.2014.11.008}{\textit{J.~Funct. Anal.}}
 \textbf{268} (2015), 1053--1077.

\bibitem{ce-1}
Constantin A., Escher J., Wave breaking for nonlinear nonlocal shallow water
 equations, \href{https://doi.org/10.1007/BF02392586}{\textit{Acta Math.}} \textbf{181} (1998), 229--243.

\bibitem{cgi}
Constantin A., Gerdjikov V.S., Ivanov R.I., Inverse scattering transform for
 the {C}amassa-{H}olm equation, \href{https://doi.org/10.1088/0266-5611/22/6/017}{\textit{Inverse Problems}} \textbf{22} (2006),
 2197--2207, \href{http://arxiv.org/abs/nlin.SI/0603019}{nlin.SI/0603019}.

\bibitem{ci}
Constantin A., Ivanov R., Dressing method for the {D}egasperis--{P}rocesi
 equation, \href{https://doi.org/10.1111/sapm.12149}{\textit{Stud. Appl. Math.}} \textbf{138} (2017), 205--226,
 \href{http://arxiv.org/abs/1608.02120}{arXiv:1608.02120}.

\bibitem{cil}
Constantin A., Ivanov R.I., Lenells J., Inverse scattering transform for the
 {D}egasperis--{P}rocesi equation, \href{https://doi.org/10.1088/0951-7715/23/10/012}{\textit{Nonlinearity}} \textbf{23} (2010),
 2559--2575, \href{http://arxiv.org/abs/1205.4754}{arXiv:1205.4754}.

\bibitem{cl}
Constantin A., Lannes D., The hydrodynamical relevance of the {C}amassa--{H}olm
 and {D}egasperis--{P}rocesi equations, \href{https://doi.org/10.1007/s00205-008-0128-2}{\textit{Arch. Ration. Mech. Anal.}}
 \textbf{192} (2009), 165--186, \href{http://arxiv.org/abs/0709.0905}{arXiv:0709.0905}.

\bibitem{cm}
Constantin A., McKean H.P., A shallow water equation on the circle,
 \href{https://doi.org/10.1002/(SICI)1097-0312(199908)52:8<949::AID-CPA3>3.0.CO;2-D}{\textit{Comm. Pure Appl. Math.}} \textbf{52} (1999), 949--982.

\bibitem{cs}
Constantin A., Strauss W.A., Stability of peakons, \href{https://doi.org/10.1002/(SICI)1097-0312(200005)53:5<603::AID-CPA3>3.3.CO;2-C}{\textit{Comm. Pure Appl. Math.}} \textbf{53} (2000), 603--610.

\bibitem{dhh}
Degasperis A., Holm D.D., Hone A.N.W., A new integrable equation with peakon
 solutions, \href{https://doi.org/10.1023/A:1021186408422}{\textit{Theoret. and Math. Phys.}} \textbf{133} (2002), 1463--1474,
 \href{http://arxiv.org/abs/nlin.SI/0205023}{nlin.SI/0205023}.

\bibitem{dp}
Degasperis A., Procesi M., Asymptotic integrability, in Symmetry and
 Perturbation Theory ({R}ome, 1998), World Sci. Publ., River Edge, NJ, 1999,
 23--37.

\bibitem{Dorfman}
Dorfman I., Dirac structures and integrability of nonlinear evolution
 equations, Nonlinear Science: Theory and Applications, John Wiley \& Sons,
 Ltd., Chichester, 1993.

\bibitem{dgh}
Dullin H.R., Gottwald G.A., Holm D.D., Camassa--{H}olm, {K}orteweg--de
 {V}ries-5 and other asymptotically equivalent equations for shallow water
 waves, \href{https://doi.org/10.1016/S0169-5983(03)00046-7}{\textit{Fluid Dynam. Res.}} \textbf{33} (2003), 73--95.

\bibitem{ff1}
Fokas A.S., Fuchssteiner B., B\"acklund transformations for hereditary
 symmetries, \href{https://doi.org/10.1016/0362-546X(81)90025-0}{\textit{Nonlinear Anal.}} \textbf{5} (1981), 423--432.

\bibitem{for}
Fokas A.S., Olver P.J., Rosenau P., A plethora of integrable bi-{H}amiltonian
 equations, in Algebraic Aspects of Integrable Systems, \href{https://doi.org/10.1007/978-1-4612-2434-1_5}{\textit{Progr.
 Nonlinear Differential Equations Appl.}}, Vol.~26, Birkh\"auser Boston,
 Boston, MA, 1997, 93--101.

\bibitem{fg}
Fordy A.P., Gibbons J., Some remarkable nonlinear transformations,
 \href{https://doi.org/10.1016/0375-9601(80)90829-4}{\textit{Phys. Lett.~A}} \textbf{75} (1980), 325
.

\bibitem{fuc}
Fuchssteiner B., Some tricks from the symmetry-toolbox for nonlinear equations:
 generalizations of the {C}amassa--{H}olm equation, \href{https://doi.org/10.1016/0167-2789(96)00048-6}{\textit{Phys.~D}}
 \textbf{95} (1996), 229--243.

\bibitem{ff}
Fuchssteiner B., Fokas A.S., Symplectic structures, their {B}\"acklund
 transformations and hereditary symmetries, \href{https://doi.org/10.1016/0167-2789(81)90004-X}{\textit{Phys.~D}} \textbf{4}
 (1981), 47--66.

\bibitem{fo}
Fuchssteiner B., Oevel W., The bi-{H}amiltonian structure of some nonlinear
 f\/ifth- and seventh-order dif\/ferential equations and recursion formulas for
 their symmetries and conserved covaria, \href{https://doi.org/10.1063/1.525376}{\textit{J.~Math. Phys.}} \textbf{23}
 (1982), 358--363.

\bibitem{ger}
Gerdjikov V.S., On a {K}aup--{K}upershmidt-type equations and their soliton
 solutions, \href{https://doi.org/10.1393/ncc/i2015-15161-7}{\textit{Nuovo Cimento~C}} \textbf{38} (2015), 161, 19~pages,
 \href{http://arxiv.org/abs/1703.05850}{arXiv:1703.05850}.

\bibitem{gp}
Gordoa P.R., Pickering A., Nonisospectral scattering problems: a key to
 integrable hierarchies, \href{https://doi.org/10.1063/1.533055}{\textit{J.~Math. Phys.}} \textbf{40} (1999),
 5749--5786.

\bibitem{gloq}
Gui G.L., Liu Y., Olver P.J., Qu C.Z., Wave-breaking and peakons for a modif\/ied
 {C}amassa--{H}olm equation, \href{https://doi.org/10.1007/s00220-012-1566-0}{\textit{Comm. Math. Phys.}} \textbf{319} (2013),
 731--759.

\bibitem{hh}
Himonas A.A., Holliman C., The {C}auchy problem for the {N}ovikov equation,
 \href{https://doi.org/10.1088/0951-7715/25/2/449}{\textit{Nonlinearity}} \textbf{25} (2012), 449--479.

\bibitem{hi}
Holm D.D., Ivanov R.I., Smooth and peaked solitons of the {CH} equation,
 \href{https://doi.org/10.1088/1751-8113/43/43/434003}{\textit{J.~Phys.~A: Math. Theor.}} \textbf{43} (2010), 434003, 18~pages,
 \href{http://arxiv.org/abs/1003.1338}{arXiv:1003.1338}.

\bibitem{hls}
Hone A.N.W., Lundmark H., Szmigielski J., Explicit multipeakon solutions of
 {N}ovikov's cubically nonlinear integrable {C}amassa--{H}olm type equation,
 \href{https://doi.org/10.4310/DPDE.2009.v6.n3.a3}{\textit{Dyn. Partial Differ. Equ.}} \textbf{6} (2009), 253--289,
 \href{http://arxiv.org/abs/0903.3663}{arXiv:0903.3663}.

\bibitem{hw1}
Hone A.N.W., Wang J.P., Prolongation algebras and {H}amiltonian operators for
 peakon equations, \href{https://doi.org/10.1088/0266-5611/19/1/307}{\textit{Inverse Problems}} \textbf{19} (2003), 129--145.

\bibitem{hw2}
Hone A.N.W., Wang J.P., Integrable peakon equations with cubic nonlinearity,
 \href{https://doi.org/10.1088/1751-8113/41/37/372002}{\textit{J.~Phys.~A: Math. Theor.}} \textbf{41} (2008), 372002, 10~pages,
 \href{http://arxiv.org/abs/0805.4310}{arXiv:0805.4310}.

\bibitem{joh}
Johnson R.S., Camassa--{H}olm, {K}orteweg--de {V}ries and related models for
 water waves, \href{https://doi.org/10.1017/S0022112001007224}{\textit{J.~Fluid Mech.}} \textbf{455} (2002), 63--82.

\bibitem{kloq}
Kang J., Liu X.C., Olver P.J., Qu C.Z., Liouville correspondence between the
 modif\/ied {K}d{V} hierarchy and its dual integrable hierarchy,
 \href{https://doi.org/10.1007/s00332-015-9272-7}{\textit{J.~Nonlinear Sci.}} \textbf{26} (2016), 141--170.

\bibitem{kau}
Kaup D.J., On the inverse scattering problem for cubic eigenvalue problems of
 the class {$\psi_{xxx}+6Q\psi_{x}+6R\psi =\lambda \psi $},
 \href{https://doi.org/10.1002/sapm1980623189}{\textit{Stud. Appl. Math.}} \textbf{62} (1980), 189--216.

\bibitem{kou}
Kouranbaeva S., The {C}amassa--{H}olm equation as a geodesic f\/low on the
 dif\/feomorphism group, \href{https://doi.org/10.1063/1.532690}{\textit{J.~Math. Phys.}} \textbf{40} (1999), 857--868,
 \href{http://arxiv.org/abs/math-ph/9807021}{math-ph/9807021}.

\bibitem{kup}
Kupershmidt B.A., A super {K}orteweg--de {V}ries equation: an integrable
 system, \href{https://doi.org/10.1016/0375-9601(84)90693-5}{\textit{Phys. Lett.~A}} \textbf{102} (1984), 213--215.

\bibitem{len1}
Lenells J., The correspondence between {K}d{V} and {C}amassa--{H}olm,
 \href{https://doi.org/10.1155/S1073792804142451}{\textit{Int. Math. Res. Not.}} \textbf{2004} (2004), 3797--3811.

\bibitem{lo}
Li Y.A., Olver P.J., Well-posedness and blow-up solutions for an integrable
 nonlinearly dispersive model wave equation, \href{https://doi.org/10.1006/jdeq.1999.3683}{\textit{J.~Differential
 Equations}} \textbf{162} (2000), 27--63.

\bibitem{lq}
Li Y.Y., Qu C.Z., Shu S.C., Integrable motions of curves in projective
 geometries, \href{https://doi.org/10.1016/j.geomphys.2010.03.001}{\textit{J.~Geom. Phys.}} \textbf{60} (2010), 972--985.

\bibitem{ll}
Lin Z.W., Liu Y., Stability of peakons for the {D}egasperis--{P}rocesi
 equation, \href{https://doi.org/10.1002/cpa.20239}{\textit{Comm. Pure Appl. Math.}} \textbf{62} (2009), 125--146,
 \href{http://arxiv.org/abs/0712.2007}{arXiv:0712.2007}.

\bibitem{llq2}
Liu X.C., Liu Y., Qu C.Z., Orbital stability of the train of peakons for an
 integrable modif\/ied {C}amassa--{H}olm equation, \href{https://doi.org/10.1016/j.aim.2013.12.032}{\textit{Adv. Math.}}
 \textbf{255} (2014), 1--37.

\bibitem{llq1}
Liu X.C., Liu Y., Qu C.Z., Stability of peakons for the {N}ovikov equation,
 \href{https://doi.org/10.1016/j.matpur.2013.05.007}{\textit{J.~Math. Pures Appl.}} \textbf{101} (2014), 172--187.

\bibitem{ly}
Liu Y., Yin Z., Global existence and blow-up phenomena for the
 {D}egasperis--{P}rocesi equation, \href{https://doi.org/10.1007/s00220-006-0082-5}{\textit{Comm. Math. Phys.}} \textbf{267}
 (2006), 801--820.

\bibitem{lun}
Lundmark H., Formation and dynamics of shock waves in the
 {D}egasperis--{P}rocesi equation, \href{https://doi.org/10.1007/s00332-006-0803-3}{\textit{J.~Nonlinear Sci.}} \textbf{17}
 (2007), 169--198.

\bibitem{ls}
Lundmark H., Szmigielski J., Multi-peakon solutions of the
 {D}egasperis--{P}rocesi equation, \href{https://doi.org/10.1088/0266-5611/19/6/001}{\textit{Inverse Problems}} \textbf{19}
 (2003), 1241--1245, \href{http://arxiv.org/abs/nlin.SI/0503033}{nlin.SI/0503033}.

\bibitem{mag}
Magri F., A simple model of the integrable {H}amiltonian equation,
 \href{https://doi.org/10.1063/1.523777}{\textit{J.~Math. Phys.}} \textbf{19} (1978), 1156--1162.

\bibitem{mck}
McKean H.P., The {L}iouville correspondence between the {K}orteweg--de {V}ries
 and the {C}amassa--{H}olm hierarchies, \href{https://doi.org/10.1002/cpa.10083}{\textit{Comm. Pure Appl. Math.}}
 \textbf{56} (2003), 998--1015.

\bibitem{mck-1}
McKean H.P., Breakdown of the {C}amassa--{H}olm equation, \href{https://doi.org/10.1002/cpa.20003}{\textit{Comm. Pure
 Appl. Math.}} \textbf{57} (2004), 416--418.

\bibitem{mik}
Mikhailov A.V., The reduction problem and the inverse scattering method,
 \href{https://doi.org/10.1016/0167-2789(81)90120-2}{\textit{Phys.~D}} \textbf{3} (1981), 73--117.

\bibitem{mil}
Milson R., Liouville transformation and exactly solvable {S}chr\"odinger
 equations, \href{https://doi.org/10.1023/A:1026696709617}{\textit{Internat.~J. Theoret. Phys.}} \textbf{37} (1998),
 1735--1752, \href{http://arxiv.org/abs/solv-int/9706007}{solv-int/9706007}.

\bibitem{mus}
Musso E., Motions of curves in the projective plane inducing the
 {K}aup--{K}upershmidt hierarchy, \href{https://doi.org/10.3842/SIGMA.2012.030}{\textit{SIGMA}} \textbf{8} (2012), 030,
 20~pages, \href{http://arxiv.org/abs/1205.5329}{arXiv:1205.5329}.

\bibitem{nov}
Novikov V., Generalizations of the {C}amassa--{H}olm equation,
 \href{https://doi.org/10.1088/1751-8113/42/34/342002}{\textit{J.~Phys.~A: Math. Theor.}} \textbf{42} (2009), 342002, 14~pages,
 \href{http://arxiv.org/abs/0905.2219}{arXiv:0905.2219}.

\bibitem{folv}
Olver F.W.J., Asymptotics and special functions, Academic Press, New York~--
 London, 1974.

\bibitem{olv1}
Olver P.J., Evolution equations possessing inf\/initely many symmetries,
 \href{https://doi.org/10.1063/1.523393}{\textit{J.~Math. Phys.}} \textbf{18} (1977), 1212--1215.

\bibitem{olv2}
Olver P.J., Applications of {L}ie groups to dif\/ferential equations,
 \textit{Graduate Texts in Mathematics}, Vol.~107, 2nd~ed., Springer-Verlag,
 New York, 1993.

\bibitem{olv3}
Olver P.J., Invariant submanifold f\/lows, \href{https://doi.org/10.1088/1751-8113/41/34/344017}{\textit{J.~Phys.~A: Math. Theor.}}
 \textbf{41} (2008), 344017, 22~pages.

\bibitem{or}
Olver P.J., Rosenau P., Tri-{H}amiltonian duality between solitons and
 solitary-wave solutions having compact support, \href{https://doi.org/10.1103/PhysRevE.53.1900}{\textit{Phys. Rev.~E}}
 \textbf{53} (1996), 1900--1906.

\bibitem{rs}
Rogers C., Schief W.K., B\"acklund and {D}arboux transformations. Geometry and
 modern applications in soliton theory, \href{https://doi.org/10.1017/CBO9780511606359}{\textit{Cambridge Texts in Applied
 Mathematics}}, Cambridge University Press, Cambridge, 2002.

\bibitem{sk}
Sawada K., Kotera T., A method for f\/inding {$N$}-soliton solutions of the
 {K}.d.{V}. equation and {K}.d.{V}.-like equation, \href{https://doi.org/10.1143/PTP.51.1355}{\textit{Progr. Theoret.
 Phys.}} \textbf{51} (1974), 1355--1367.

\bibitem{tig}
Ti\u{g}lay F., The periodic {C}auchy problem for {N}ovikov's equation,
 \href{https://doi.org/10.1093/imrn/rnq267}{\textit{Int. Math. Res. Not.}} \textbf{2011} (2011), 4633--4648,
 \href{http://arxiv.org/abs/1009.1820}{arXiv:1009.1820}.

\bibitem{val}
Valchev T., On generalized {F}ourier transform for {K}aup-{K}upershmidt type
 equations, \href{https://doi.org/10.7546/jgsp-19-2010-73-86}{\textit{J.~Geom. Symmetry Phys.}} \textbf{19} (2010), 73--86.

\end{thebibliography}
\end{document}